  \let\oldparagraph\paragraph
  \renewcommand{\paragraph}{
    \@ifstar
      \xxxParagraphStar
      \xxxParagraphNoStar
  }
  \newcommand{\xxxParagraphStar}[1]{\oldparagraph*{#1}\mbox{}}
  \newcommand{\xxxParagraphNoStar}[1]{\oldparagraph{#1}\mbox{}}
  \let\oldsubparagraph\subparagraph
  \renewcommand{\subparagraph}{
    \@ifstar
      \xxxSubParagraphStar
      \xxxSubParagraphNoStar
  }
  \newcommand{\xxxSubParagraphStar}[1]{\oldsubparagraph*{#1}\mbox{}}
  \newcommand{\xxxSubParagraphNoStar}[1]{\oldsubparagraph{#1}\mbox{}}
\patchcmd\longtable{\par}{\if@noskipsec\mbox{}\fi\par}{}{}
\def\maxwidth{\ifdim\Gin@nat@width>\linewidth\linewidth\else\Gin@nat@width\fi}
\def\maxheight{\ifdim\Gin@nat@height>\textheight\textheight\else\Gin@nat@height\fi}
\def\fps@figure{htbp}
  \renewcommand*\contentsname{Table of contents}
  \newcommand\contentsname{Table of contents}
  \renewcommand*\listfigurename{List of Figures}
  \newcommand\listfigurename{List of Figures}
  \renewcommand*\listtablename{List of Tables}
  \newcommand\listtablename{List of Tables}
  \renewcommand*\figurename{Figure}
  \newcommand\figurename{Figure}
  \renewcommand*\tablename{Table}
  \newcommand\tablename{Table}
\newcommand{\indep}{\perp \!\!\! \perp}
\def \bbeta {\boldsymbol\beta}
\def \bgamma {\boldsymbol\gamma}
\def \btheta {\boldsymbol\theta}
\def \bpsi {\boldsymbol\psi}
\def \bLambda {\boldsymbol\Lambda}
\def \blambda {\boldsymbol\lambda}
\def \bX {\boldsymbol{X}}
\def \bx {\boldsymbol{x}}
\newtheorem{assumption}{Assumption}
\newtheorem{proposition}{Proposition}
\newtheorem{lemma}{Lemma}
\begin{document}

\def\spacingset#1{\renewcommand{\baselinestretch}%
{#1}\small\normalsize} \spacingset{1}

%%%%%%%%%%%%%%%%%%%%%%%%%%%%%%%%%%%%%%%%%%%%%%%%%%%%%%%%%%%%%%%%%%%%%%%%%%%%%%
\title{\bf Causal effects on non-terminal event time with application to antibiotic usage and future resistance}

\author{
Tamir Zehavi${}^{1}$,
Uri Obolski${}^{2}$,
Michal Chowers${}^{3,4}$,
Daniel Nevo${}^{1}$\thanks{danielnevo@tauex.tau.ac.il. TZ and DN were supported by a grant from the Tel Aviv University Center for AI and Data Science (TAD) and by ISF grant 827/21. UO was supported by ISF grant 1286/21.}
}

\date{
\small
${}^1$Department of Statistics and Operations Research, Faculty of Exact Sciences, Tel Aviv University, Tel Aviv, Israel\\
${}^2$Department of Epidemiology and Preventive Medicine, School of Public Health, Faculty of Medical and Health Sciences, Tel Aviv University, Tel Aviv, Israel\\
${}^3$Gray Faculty of Medical and Health Sciences, Tel Aviv University\\
${}^4$Meir Medical Center, Kfar Saba, Israel
}

\maketitle

\bigskip
\begin{abstract}
Comparing future antibiotic resistance levels resulting from different antibiotic treatments is challenging because some patients may survive only under one of the antibiotic treatments. We embed this problem within a semi-competing risks approach to study the causal effect on resistant infection, treated as a non-terminal event time. We argue that existing principal stratification estimands for such problems exclude patients for whom a causal effect is well-defined and is of clinical interest. Therefore, we present a new principal stratum, the infected-or-survivors (ios). The ios is the subpopulation of patients who would have survived  or been infected under both antibiotic treatments.
This subpopulation is more inclusive than previously defined subpopulations. We target the causal effect among these patients, which we term the feasible-infection causal effect (FICE).
We develop large-sample bounds under novel assumptions, and discuss the plausibility of these assumptions in our application. As an alternative, we derive FICE identification using two illness-death models with a bivariate frailty random variable. These two models are connected by a cross-world correlation parameter. Estimation is performed by an expectation-maximization algorithm followed by a Monte Carlo procedure. We apply our methods to detailed clinical data obtained from a hospital setting.
\end{abstract}

% \noindent%
% {\it Keywords:} semi-competing risks, illness-death model, principal stratification, survivor average causal effect, frailty. 
% \vfill

\newpage
\spacingset{1.8} % DON'T change the spacing!

\section{Introduction}
\label{Sec:intro}

Antibiotic resistance, the ability of bacteria to survive in an environment containing antibiotic drugs, is a major public-health threat,
and has been linked to approximately $4.95$ million annual deaths %, with $1.27$ million of these caused by
%antibiotic resistant bacterial pathogens
\citep{murray2022global, okeke2024scope}. 
The increasing rates of antibiotic resistance have raised substantial concerns about the future effectiveness of antibiotic treatments \citep{ventola2015antibiotic, okeke2024scope}.
Prior antibiotic usage has been repeatedly correlated with increasing antibiotic resistance rates \citep{ bell2014systematic,  gladstone2021emergence, baraz2023time}, presumably as a result of evolutionary forces selecting for antibiotic resistant bacteria.  
As these findings might be confounded by many factors, causal effects of antibiotic treatment on subsequent antibiotic resistance are generally understudied. 

The WHO AWaRe (Access, Watch, Reserve) antibiotic book \citep{WHO2017} offers recommendations on antibiotic selection, dosing, route of administration, and treatment duration for common clinical infections in both children and adults. \textit{Access} antibiotics have a narrow spectrum of activity, and they generally convey low future antibiotic resistance potential. \textit{Watch} antibiotics are broader-spectrum antibiotics, generally leading to higher future resistance, and are recommended as first-choice options for patients with more severe clinical presentations or for infections where the causative pathogens are more likely to be resistant to Access antibiotics. Finally, \textit{Reserve} antibiotics are those meant to be mainly used as a last-resort antibiotics for severe infections that are resistant to multiple antibiotics. While the consequences of antibiotic choice on future resistance is widely acknowledged, 
only a few studies formally study this causal question \citep{chowers2022estimating, saciuk2025penicillin}. Further methodological advances are essential to enable quantitative estimation of the causal effect of prescribing antibiotics so it is incorporated into clinical decision-making.

The comparison between antibiotic treatments with respect to future antibiotic resistance is complicated by the fact that some patients may only survive under one of the antibiotic treatments.
We embed this problem within the semi-competing risks (SCR) data structure, also termed the illness-death model
\citep{fix1951simple,xu2010statistical}.
SCR data arise when studying times to two events, a non-terminal event and a terminal event. The terminal event can occur after the non-terminal event, but not vice versa.
In our motivating example, death is a terminal event that precludes the occurrence of a future resistant bacterial infection. However, death is possible after a resistant bacterial infection, sometimes as a direct consequence of the newly acquired infection. When studying treatment effects on SCR data, challenges arise in defining and interpreting standard causal effects \citep{nevo2022causal}, similarly to the closely-related truncation by death problem \citep{zhang2003estimation, ding2017principal, zehavi2023matching,tong2025semiparametric}.

Several regression-based models have been proposed for SCR data, aiming to extract different information and tackle various challenges \citep{peng2007regression,hsieh2012regression,lee2015bayesian, nevo2022modeling,gorfine2021marginalized,kats2023accelerated}. These approaches directly model the realized event times distribution, and are not built upon a non-parametric assumption-free definition of causal effects. 

The causal effect of different antibiotic treatments on time to resistant infection is challenging to define. A simple comparison between the cumulative incidence of resistant infections up to a given time point might be misleading due to the treatment effect on the terminal event. For instance, an alleged lower resistance rate might be observed for a certain antibiotic treatment, even if all confounders are properly accounted for, simply due to higher death rates under this treatment.

Recently, approaches anchored within the causal inference paradigm have been proposed \citep{comment2025survivor,xu2020bayesian,nevo2022causal,buhler2023multistate,deng2024direct}. A common theme across this research is the reliance on principal stratification estimands \citep{frangakis2002principal}.  We review several principal stratification approaches in the context of our motivating study in Section \ref{Sec:Setup}. These estimands are based on the notion that within a subset of the population defined by potential event times, causal effects are well-defined.
However, as these subsets are typically not identifiable from the observed data, it is crucial to assess whether the subset is meaningful. We contend that these estimands do not conclusively capture the causal effect of interest, as they overlook patients among which a well-defined causal effect exists and, importantly, is of a clinical interest. 

Therefore, we focus on a different, previously unexplored, principal stratum:
the subpopulation of patients who would have been infected or would have survived within one year after treatment initiation under both antibiotic treatments. The one year mark is chosen based on subject-matter considerations, though other time points can also be considered.
We then define the feasible-infection causal effect (FICE) as the causal effect among this subpopulation. We argue that in applications with SCR data, the FICE provides a meaningful causal effect to address questions about the causal effect on the non-terminal event. In the comparison of Access and Watch antibiotics, the FICE is defined among the subpopulation of patients for which the occurrence of a resistant infection is possible under both antibiotic treatments within the specified time period. 

We consider two overarching identification approaches for the FICE. 
First, we develop a partial identification approach in the form of large-sample bounds under new order-preservation (ORP) assumptions. We present bounds under different assumption combinations and discuss the plausibility of these assumptions in our data. 
As an alternative, we derive FICE identification as a function of an unknown cross-world sensitivity parameter.
Namely, we connect the counterfactual event times using a bivariate frailty random variable, where the cross-world parameter is the correlation between the two frailty variables.
We consider estimation via an expectation-maximization algorithm (EM) \citep{dempster1977maximum}, followed by Monte Carlo approximations. 

The remainder of the paper is organized as follows.
In Section \ref{Sec:data}, we describe the motivating dataset.
In Section \ref{Sec:Setup}, we review and compare existing approaches and estimands in the context of studying causal effects of antibiotic usage, and introduce our proposed estimand. In Section \ref{Sec:Identification}, we present the two identification approaches and discuss the plausibility of the identification assumptions. In Section \ref{Sec:estimation}, we describe the estimation methods for both proposed approaches. In Section \ref{Sec:Application}, we apply our approach to the motivating dataset. In Section \ref{Sec:Discussion}, we offer concluding remarks.

\section{Motivating  dataset}
\label{Sec:data}
 
The data were obtained from Meir Medical Center, a secondary university-affiliated hospital serving a population of approximately 600,000 people. 
The dataset included all 38,791 adult patients who had a positive bacterial culture based on samples taken from their blood, urine, sputum, or wounds, between 1 January 2015 and 22 October 2022. Each culture was tested for susceptibility to several antibiotic types, 
and the dataset included all resistance results. 

The dataset contains hospitalization records, including length of stay, antibiotic treatments and culture results, along with demographics and comorbidities, and date of death. %, for those who died during the study period, whether in or out of the hospital. 
We define the \textit{baseline}  %
to be the first instance when a patient received treatment of either amoxicillin-clavulanate or cefuroxime, two commonly prescribed antibiotics in Meir Medical Center, belonging to the Access and Watch classes, respectively.
To enhance clarity, we refer to them as \textit{Access} (amoxicillin-clavulanate) and \textit{Watch} (cefuroxime) antibiotics from now on. We remark that the Access and Watch antibiotic groups each contains many more diverse antibiotics, and our comparison is between amoxicillin-clavulanate and cefuroxime only. 
The treatment as defined in this study is the antibiotic assigned to the patient at baseline. The antibiotic treatment was not randomized and may depend on confounders such as age, medical condition, previous antibiotic usage, and more. 
Table \ref{Tab:table1} describes the distribution of prominent variables among the Watch group (13,959 patients) and the Access group (5,855 patients), categorized by various forms of demographic and clinical information. Patients in the Watch group were generally older; more likely to suffer from several medical conditions, such as dementia 
%chronic obstructive pulmonary disease (COPD), 
and chronic renal failure;
and more frequently resided in a residential institution prior to admission.
These characteristics of the data align with clinical standards, as physicians tend to prescribe Watch antibiotics to higher-risk patients.

We define the outcome to be time to ceftazidime-resistant infection, up to one year after baseline. We choose ceftazidime as it can be used as a marker for extended-spectrum beta-lactamase, which is a type of antibiotic resistance mechanism with important consequences for hospital infections \citep{chowers2022estimating}.

We focus on the one-year mark because the association between antibiotic treatment and resistance decreases over time, and is negligible one year after treatment \citep{baraz2023time}. A patient is considered to have acquired a resistant infection if he or she had at least one ceftazidime-resistant culture result between five days and one year after baseline treatment. 
%Patients without a recorded future ceftazidime-resistant culture might have acquired an infection which is not resistant to ceftazidime, or, not been hospitalized again in Meir. 

Within one year after baseline, the unadjusted resistant-infection rates among those treated with Access and Watch antibiotics were $7.3\%$ and $9.9\%$, respectively. The corresponding unadjusted one-year death rates were $17.2\%$ and $21.9\%$. At face value, these results might look puzzling, because Watch antibiotics are expected to be more effective in clearing out the bacteria causing the current infection, and consequently should prevent death. However, as discussed above, this is presumably due to confounding, because Watch antibiotics are typically administered to higher-risk patients. After adjusting for potential confounders using a Cox model (Table \ref{Tab:PS_model}), the corresponding one-year death rates
were $25.2\%$ (CI95\%: $23.4\%$, $26.7\%$) in the Access group and $19.9\%$ (CI95\%: $19.2\%$,  $20.6\%$ ) in the Watch group.
The resulting estimated risk 
difference and risk ratio were $-5.3\%$ (CI95\%: $-6.9\%$, $-3.8\%$) and $0.79$ (CI95\%: $0.74$, $0.84$), respectively.

It should be noted that the positivity assumption may be violated in this dataset.  Figure \ref{Fig:Estimated propensity score distribution} shows the distribution of the estimated \textit{propensity score} (PS), the probability of being treated with the Watch antibiotic conditionally on covariates, among each treatment group. For low estimated PS values, there was almost no overlap between the two PS distributions. %, as only a negligible subset of the broad-spectrum antibiotic group attains these values.
To mitigate this issue, we focus on the causal effect among those treated with the Access antibiotic, as in \cite{chowers2022estimating}. To this end, we use matching \citep{stuart2010matching} 
to adjust for pre-treatment differences in observed covariates between the antibiotic groups. 
Table \ref{Tab:table1} presents descriptive statistics within the original and matched datasets. After matching, covariate balance improved substantially, as can be seen from the notable decrease in standardized mean differences between the two antibiotic groups. 

Table \ref{Tab:events_propoortions} presents the unadjusted event proportions among the $4,143$  matched pairs. Resistant infection rates were relatively similar in the full and matched datasets. In line with the results of the adjusted Cox model in the full dataset, death rates in the matched dataset among the Access antibiotic group were higher than those of the Watch antibiotic group.

To focus on the newly proposed causal estimand and assumptions, we temporarily ignore that the matching implies that the targeted effect is defined among the Access group. We return to this point and describe the matching procedure in Section 
\ref{Sec:Application}. 
\begin{table}
\caption{Event proportions in the amoxicillin-clavulanate (Access) and the cefuroxime (Watch) treatment groups within the matched dataset.} 
\label{Tab:events_propoortions}
\centering
\fbox{
\begin{tabular}{l|cc}
   & Access & Watch \\ 
 \hline 
Sample size & 4,143 & 4,143 \\
Resistant infection & 288 (7\%) & 413 (10.0\%) \\
Infection-free death & 746 (18.0\%) & 616 (14.9\%) \\
Death after infection & 113 (2.7\%) & 147 (3.5\%) \\
\end{tabular}}
\end{table}

\section{Setup and causal estimands}
\label{Sec:Setup}

\subsection{Notations}
\label{Notations}

Let $A$ be the antibiotic treatment, where $A=0$ corresponds to the Access antibiotic and $A=1$ to the Watch antibiotic. Using the potential outcomes framework 
\citep{rubin1974estimating},
for each unit $i=1,...,n$ in the sample, let $T_{i1}(a)$ and $T_{i2}(a)$ be the potential times to resistant infection and death, respectively,
had the antibiotic treatment been set to $A=a$. This formulation assumes there is no interference between units \citep{cox1958planning}. The units are also assumed to be independent and identically distributed, so we omit the $i$ index when it improves clarity.
We set $T_{i1}(a) = \infty$
whenever a resistant infection would not occur within the one year period under treatment level $A=a$. 

Let $S_i(a)=\mathbbm{1}{ \{T_{i2}(a) > 1\} }$ and $I_i(a)=\mathbbm{1}{ \{T_{i1}(a) \le 1\} }$ be the survival and the resistant bacterial infection indicators, respectively, 
within one year under treatment level $a$. Notice that $I_i(a)=1$ is equivalent to $T_{i1}(a) \le T_{i2}(a)$. We term the strata  $\{i: S_{i}(0) = 1,  S_{i}(1) = 1\}$, and 
$\{i: I_{i}(0) = 1,  I_{i}(1) = 1\}$ the \textit{always-survivors} ($as$) and the \textit{always-infected} ($ai$), respectively,
with corresponding population strata proportions $\pi_{as}$ and $\pi_{ai}$.
Other subpopulations, defined by post-treatment periods other than one year, might also be considered. For instance,   
\cite{xu2020bayesian} and \cite{comment2025survivor} have defined time-varying always-survivors subpopulations.
Time-varying subpopulations could be considered in our framework by redefining $S_i(a,r)=\mathbbm{1}{ \{T_{i2}(a) > r\} }$, and $I_i(a,r)=\mathbbm{1}{ \{T_{i1}(a) \le r\} }$ and then studying different subpopulations as a function or $r$.  However, as our focus lies in the one-year post-treatment period, we defer the discussion regarding time-varying subpopulations to Appendix \ref{AppSec:timevaryingsubpop}.

\subsection{Existing approaches and causal effects}
\label{Subsec:Existing}

We now review several approaches that have been proposed to address similar problems.
First, if we ignore follow-up after resistant infections, our application can be described as a competing risks problem, with the competing events being resistant infection and infection-free death.
The total effect $\Pr[T_1(1) \le t] - \Pr[T_1(0) \le t]$ \citep{young2020causal} contrasts infection rates between the antibiotic groups, without eliminating the competing event, i.e., death. However, the interpretation and applicability of the total effect can be challenging in practice.
Low resistant infection rates 
under a specific antibiotic treatment might be observed due to high death rates under this treatment. That is, the total effect might quantify changes in infection rates due to mortality differences \citep{young2020causal,nevo2021reflection}.
Yet, the total effect quantifies the possibly meaningful amount of added number of hospitalized patients with a resistant infection resulting from the use of the Watch antibiotic. Hence, it might be useful for resource management, e.g., for determining the number of beds and medical staff capacity in relevant departments.
Another possible estimand, the controlled direct effect, contrasts infection rates between two antibiotic treatments under the hypothetical intervention that also prevents the occurrence of death. However, such an intervention remains hypothetical, and is impossible to carry out when the competing event is death, as in our study.

Several authors have utilized principal stratification for SCR data to target causal effects among different subpopulations. 
One approach generalizes the survivor average causal effect (SACE) \citep{zhang2003estimation,ding2017principal,zehavi2023matching}, the effect among those who would have survived regardless of their treatment.
To this end, the SACE can be redefined to target time-to-event outcomes, while also recognizing that the non-terminal event may occur at different times rather than being measured at a single, pre-specified, time point.
In our context, a possible estimand is 
\begin{equation*}
    \text{SACE}(t) = \Pr[T_1(1) \leq t | S(0) = 1,  S(1) = 1] - \Pr[T_1(0) \leq t | S(0) = 1,  S(1) = 1], 
    \label{time-fixed SACE_1}
\end{equation*}
for $t\le1$. The SACE$(t)$ 
compares the resistant infection proportions
until time $t$
among the always-survivors.
\cite{xu2020bayesian} and \cite{comment2025survivor}   
 extended the SACE$(t)$ to acknowledge that the terminal event may also occur at different time points, giving rise to time-varying subpopulations (see Appendix \ref{AppSec:timevaryingsubpop}).

Recently, \cite{nevo2022causal} studied the effect of a specific allele on late-onset Alzheimer’s disease (non-terminal event) and death (terminal event) times. They proposed another family of estimands, 
based on a different stratification.
In terms of our application, this stratification divides the population into four strata according to the order of infection and death times. 
The always-infected causal effect (AICE) is the causal effect on resistant infection time among the \textit{always-infected (ai)}, i.e., those who would have had a resistant infection within one year post-treatment, regardless of the antibiotic type.
Formally, the AICE$(t)$ is defined as
\begin{equation*}
    \text{AICE}(t) = \Pr[T_1(1) \leq t | I(0) = 1,  I(1) = 1] - \Pr[T_1(0) \leq t | I(0) = 1,  I(1) = 1],
    \label{always-infected causal effect}
\end{equation*}
for $t \le 1$.
The AICE$(t)$ might be adequate when the main interest lies in the difference in the non-terminal event proportions, and researchers are unwilling to impose that patients will survive by the end of the study under both treatments.  Within the always infected, one could also study the causal effect on the terminal event time \citep{nevo2022causal}.

\subsection{Newly proposed estimands}
\label{subsec:ProposedApproach}

A key issue when discussing effects based on principal or population stratification strategies is the population for which a causal effect is defined. The stratification enables focusing on a well-defined effect, but if the studied stratum excludes certain subpopulations of interest, this effect might be non-informative.
With this in mind, we argue that in our setting, neither the SACE nor the AICE are sufficient to capture causal effects under SCR data. Specifically, in the comparison of Access and Watch antibiotics and their effect on future resistant infection, these estimands exclude from their subpopulation 
relevant patients. 

Define \textit{patient type} ($pt$) to be an index $pt=1,...,16$ representing the quadruple $\{I(0), S(0), I(1), S(1)\}$. Table \ref{Tab:stratum_POs_values} presents all 16 possible $pt$'s; see also Appendix \ref{Appsec:patient types} for a concrete example of potential event times for each patient type. For example, $pt=1$ stands for patients who would have died without being infected under both antibiotic treatments, i.e., $\{I(0), S(0), I(1), S(1)\}=\{0, 0, 0, 0\}$.

Both the always-survivors and the always-infected exclude patients who would have had an infection and subsequently died within one year under one antibiotic treatment, while surviving free of infection under the other. In Table \ref{Tab:stratum_POs_values}, those patient types are $pt=8,9$.
Clearly, there is a clinically-relevant causal effect among these patients, because one antibiotic treatment causes a resistant infection while the other does not. Furthermore, a resistant infection will lead to death for these patient types, highlighting the importance of appropriate treatment for those patients.

\begin{table}
%\captionsetup{font=footnotesize}
\caption{\footnotesize Subpopulation membership of each patient type (pt), as determined by the values of $\{I(0), S(0), I(1), S(1)\}$. 
'--' means that the patient type does not belong to any of the subpopulations.
ios: infected-or-survivors, ai: always-infected, as: always-survivors.
For a detailed discussion regarding the ORP-type assumptions and the monotonicity assumption, see Section \ref{subsec:bounds}.}
\label{Tab:stratum_POs_values}
\centering
\fbox{\begin{tabular}{c|c|c|c|c|c|c} 
pt & I(0) & S(0) & I(1) & S(1) & Subpopulation & Excluded by \\[0.05cm] 
 \hline
 %\\[0.1cm] 
1 & 0 & 0 & 0 & 0 & --- & \\[0.2cm] 
2 & 1 & 0 & 0 & 0 & --- & ORP, ios-ORP, weak-ORP \\[0.2cm]
3 & 0 & 1 & 0 & 0  & --- & ios-ORP, monotonicity\\[0.2cm]  
4 & 0 & 0 & 1 & 0 & --- & \\[0.2cm]
5 & 0 & 0 & 0 & 1  & --- & \\[0.2cm]
\multirow[c]{2}{*}{6} & \multirow[c]{2}{*}{1} & \multirow[c]{2}{*}{1} & \multirow[c]{2}{*}{0} & \multirow[c]{2}{*}{0} & \multirow[c]{2}{*}{---} & ORP, ios-ORP, weak-ORP \\ [0.2cm]
&  &  &  &  &  &  monotonicity\\
 [0.2cm]
7 & 0 & 0 & 1 & 1 & --- & \\[0.2cm]  
8 & 1 & 0 & 0 & 1 & $ios$ & ORP\\[0.2cm]
9 & 0 & 1 & 1 & 0 & $ios$ & monotonicity\\[0.2cm] 
10 & 1 &  0 & 1 & 0 & $ai$, $ios$ & \\[0.2cm]
11 & 0 & 1 & 0 & 1 & $as$, $ios$ & \\[0.2cm]  
12 & 1 &  1 & 1 & 0 & $ai$, $ios$ & monotonicity \\[0.2cm]
13 & 1 &  0 & 1 & 1 & $ai$, $ios$ & \\[0.2cm]
14 & 1 & 1 & 0 & 1 & $as$, $ios$ & ORP\\[0.2cm] 
15 & 0 & 1 & 1 & 1 & $as$, $ios$ & \\[0.2cm]  
16 & 1 & 1 & 1 & 1 & $as$, $ai$, $ios$  &
%\\[0.2cm]
\end{tabular}}
\end{table}

Furthermore, the SACE and the AICE are defined among subpopulations that exclude additional relevant patients. Specifically, the always-survivors stratum excludes patient types $pt=10,12,13$.
Those patients would have been infected during the one year span under both antibiotics. 
Hence, a causal effect can be defined among them, even though under at least one antibiotic, they would have died following infection.
The always-infected stratum includes those patient types, but excludes patient types $pt=11,14,15$.
As these patients would have survived during the one year post-treatment period, a causal effect is well-defined for them.%, irrespective of whether they were infected under one antibiotic treatment or not.
  
As a further motivation for studying a larger subpopulation, as targeted by the FICE, recall that the one-year ceftazidime-resistant infection rates in the matched data were relatively low, with 
only $10.0\%$ and $7.0\%$
for patients treated with Watch and Access, respectively. As we show in Section \ref{Sec:Application}, these proportions suggest that the proportion of individuals in the always-infected stratum is low. 
The SACE does not suffer from this specific problem in the motivating dataset. If, however, mortality rates had been high, then the proportion of the SACE-targeted population would have been low.
 
To target a more inclusive and relevant subpopulation, we propose to focus on a new principal stratum.
For $a=0,1$,  let $ios(a) = \{i : \{I_{i}(a)=1\} \cup \{S_{i}(a)=1 \}\}$ be the patient subgroup that would have acquired an infection and/or would have survived within one year after antibiotic treatment $A=a$. We propose to focus on the stratum of patients that belong to both $ios(0)$ and $ios(1)$, which we term the \textit{infected-or-survivors} ($ios$). That is, $ios = \{i : i \in ios(0) \cap ios(1)\}$. We also denote $\pi_{ios}$ for the population proportion of this stratum.
As can be seen from Table \ref{Tab:stratum_POs_values}, the $ios$ subpopulation is more inclusive than both the always-survivors and the always-infected. It contains all patients that belong to at least one of these strata, and in addition, patient types $pt=8,9$ which are neither always-survivors nor always-infected. Looking again at Table \ref{Tab:stratum_POs_values}, we see that the $ios$ stratum covers patient types 8--16. A common feature of the excluded patient types is that for at least one antibiotic treatment $a$, these patients would have died without a resistant infection, that is, $I(a)=S(a)=0$. 

Among the ios stratum, we propose a new estimand, the feasible-infection causal effect (FICE), defined by 
\begin{equation*}
    \text{FICE}(t) = \Pr[T_1(1) \leq t \ |\ ios] - \Pr[T_1(0) \leq t | ios],    
\end{equation*}
for $t\le1$. The FICE$(t)$ is the resistant infection risk difference by time $t$ in the ios stratum. Because infection rates are relatively low, we also consider the analogous risk-ratio scale estimands. For instance, the risk ratio FICE$(t)$ is defined by $\Pr[T_1(1) \leq t \ |\ ios]/\Pr[T_1(0) \leq t | ios]$. 
In Appendix \ref{AppSec:timevaryingsubpop},
we discuss causal effects among time-varying subpopulations \citep{comment2025survivor,xu2020bayesian}.

\subsection{Comparison between the estimands}
\label{Sec:simulations}

In this section, we illustrate and compare the characteristics of the various estimands using a synthetic data generating mechanism (DGM). The DGM was based on Markov Cox cause-specific hazard models with patient-level frailty terms for each antibiotic treatment level \citep{xu2010statistical,nevo2022causal}. Sections
\ref{subsec:frailty} and \ref{Sec:estimation}
provide further information on these models. An additional scenario and the corresponding parameter values for each scenario are given in Appendix \ref{AppSec:simulations}.
The parameter values of the scenario presented here were chosen such that infection and mortality rates would roughly reflect the proportions of the events in the motivating dataset. 

We generated datasets with $n=30,000,000$, and included one continuous covariate, Weibull baseline hazards, and a bivariate Gamma frailty (random effect) model, with means one, variances $\theta$, and a parameter $\rho$ indicating the correlation between the frailty random effects under each treatment value. Loosely, $\rho=0$ means that the event times under different treatments are independent (conditionally on covariates) while values away from zero induce cross-world dependence. Large $\theta$ values capture within-world dependence between $T_1(a)$ and $T_2(a)$, and may also affect the cross-world dependence, depending on the value of $\rho$.

Figure \ref{Fig:initial_simulations_estimands_comparison} compares the FICE$(t)$, the SACE$(t)$ and the AICE$(t)$.
We considered several $\rho$ and $\theta$ values. 
As $\rho$ increased, both the FICE$(t)$ and the SACE$(t)$ were smaller for all $t$ values. 
Further, the estimands varied more with $\rho$ at higher $\theta$ values.
The trajectories of FICE$(t)$ and SACE$(t)$ over time exhibited similar patterns, both demonstrating monotone increases. In contrast, trends in the AICE$(t)$ diverged from those of the other estimands.
At $t=1$ the value of the AICE$(t)$ is zero as expected, because all patients in the always-infected stratum experience a resistant infection within one year after treatment.

The similarities in trends and in magnitude between the FICE$(t)$ and the SACE$(t)$, and their distinction from the AICE$(t)$, resulted from the fact the ios stratum contained more always-survivors than always-infected, due to the substantially higher proportion of the former;
The always-infected proportion was approximately $1\%$, whereas the always-survivors proportion ranged from $74\%$ to $80\%$.
The ios proportion exceeded that of always-survivors by roughly $2\%-3\%$.
This phenomenon highlights the pivotal role of strata proportions in shaping the similarity (or the distinction) between the estimands. Consequently, it is essential for researchers to identify and estimate these proportions and to assess membership overlap across strata when examining the behavior of the estimands.

\begin{figure}
\centering
\caption{\footnotesize{Comparison of different estimands in simulated data over one year following antibiotic treatment.
}}
\label{Fig:initial_simulations_estimands_comparison}
\includegraphics[scale=0.45] % 0.475
{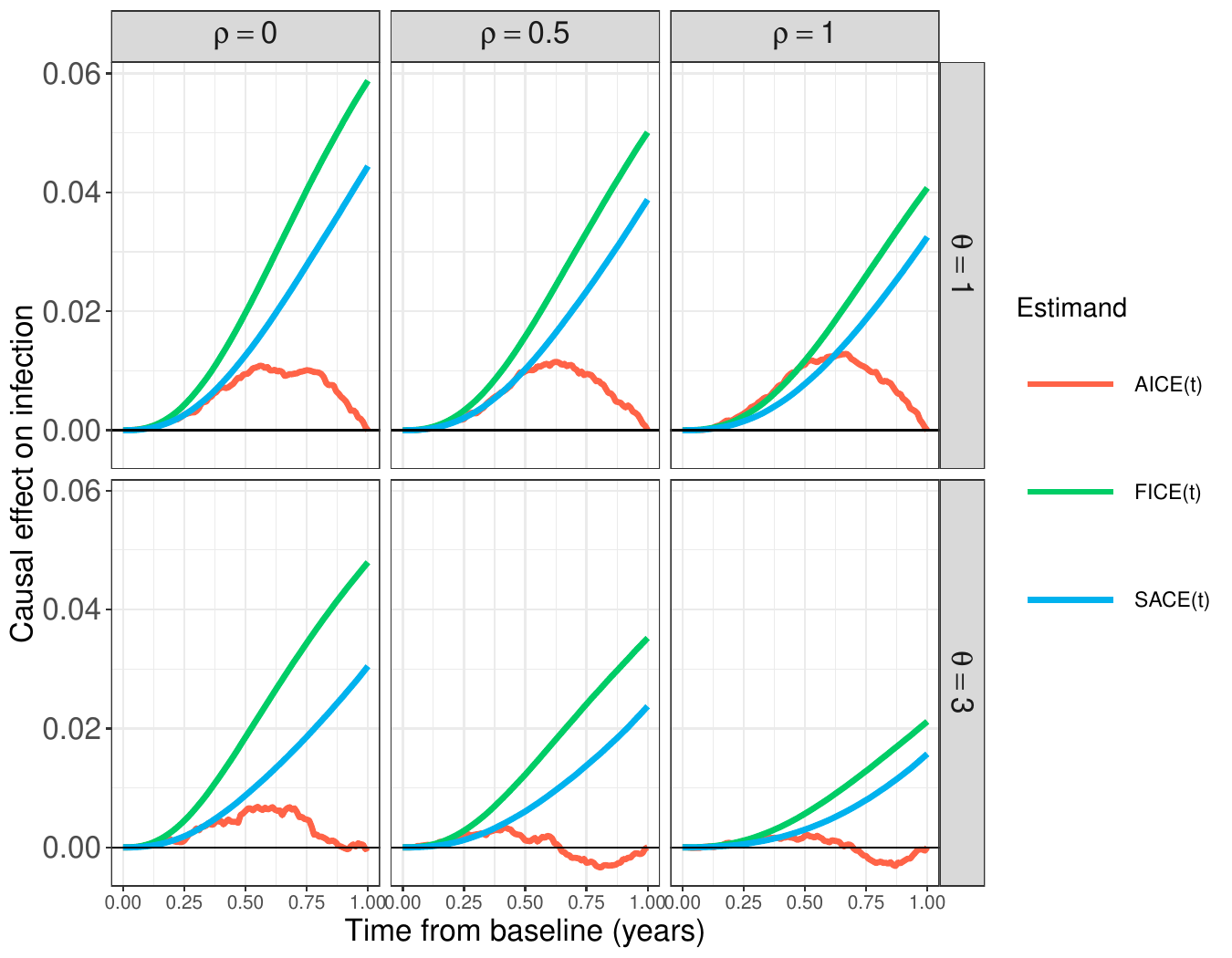}
\end{figure}

\section{Identification of the FICE}
\label{Sec:Identification}

In this section, we describe our two proposed identification approaches. Our first identification approach constructs large-sample bounds for the FICE$(t)$ under relatively weak assumptions. We compare different types of assumptions and the obtained bounds in Section \ref{subsec:bounds}. Our second approach, presented in Section \ref{subsec:frailty},  provides identification up to a sensitivity parameter representing cross-world dependence within a bivariate random effects framework.  
The results we present in this section can be extended and proven for the more general case of time-varying subpopulations. In Appendix \ref{AppSec:Identification},
we discuss the required adjustments for constructing bounds under the general setup, while identification under the frailty assumptions requires no further adjustments.
We also present the results for the risk-ratio scale estimands and all the identification proofs in that appendix.

We assume consistency, so for each patient $i$, the observed outcomes are $T_{i1}=T_{i1}(A_i)$ and $T_{i2}=T_{i2}(A_i)$. In our dataset, follow-up within the first year after treatment assignment is complete for nearly all patients (censoring rate before the one year mark was approximately 8.5\%).
We therefore focus on the case of no right censoring, and account for it in the estimation. The identification strategy can be naturally extended to allow for censoring, similarly to \cite{axelrod2023sensitivity}.  
Let $\widetilde{T}_{i1} = \min(T_{i1},T_{i2},1)$ 
and
$\widetilde{T}_{i2} = \min(T_{i2},1)$  
be  the observed times, 
and let 
%$\delta_{i1} = \mathbbm{1}{ \{\widetilde{T}_{i1} = T_{i1} \}}$ and 
% $\delta_{i2} = \mathbbm{1}{ \{\widetilde{T}_{i2} = T_{i2} \}}$ 
$\delta_{i1} = \mathbbm{1}\{ T_{i1} \le 1 \}$ and
$\delta_{i2} = \mathbbm{1}\{ T_{i2} \le 1 \}$
be
indicators for the occurrence of a resistant infection and death, respectively, during follow-up.

When selecting antibiotic treatment, physicians take into account the patient’s clinical status. 
For instance, patients classified as high-risk are more likely to receive the Watch antibiotic.
Therefore, in our motivating problem, we cannot assume that the antibiotic treatment was allocated randomly.
We instead leverage the detailed dataset and assume conditional exchangeability (CE), i.e., $A \indep 
\{T_1(a),T_2(a)\} | \bX$, for $a=0,1$.  Similarly to the identification of the SACE and the AICE, which often requires additional assumptions,  
consistency and CE do not suffice for FICE identification, and additional assumptions are required. 

To summarize, the observed data for each patient $i$
are $(\bX_i , A_i ,\widetilde{T}_{i1}, \delta_{i1}, \widetilde{T}_{i2}, \delta_{i2})$. For any event $\mathcal{Q}$, let $\Psi_\mathcal{Q}(t) = \Pr\big(\{T_1 \le t\} \cup \{T_2 > t\}|\mathcal{Q} \big)$ be the conditional probability of being  infected by time $t$, surviving beyond time $t$, or both. To avoid clutter, we also let $\Psi_\mathcal{Q}=\Psi_\mathcal{Q}(1)$. For $j=1,2$, let 
$F_{j|\mathcal{Q}}(t) = \Pr\big(T_j \le t|\mathcal{Q} \big)$ be the conditional probability of the $j$-th event (infection or death), occurring before time $t$. Finally,  denote $S_{j|\mathcal{Q}}(t) = 1 - F_{j|\mathcal{Q}}(t)$.

\subsection{Bounds under ORP-type assumptions}
\label{subsec:bounds}

We start by presenting our partial identification approach, in the form of large-sample bounds for the FICE. 
To provide motivation for the different assumptions, we revisit our clinical context. Watch antibiotics are aimed at clearing the more severe infections, and consequently, within the same population, they are expected to increase survival probability, compared with Access antibiotics. 
Concurrently, this major benefit is not necessarily devoid of associated 
clinical costs, as Watch antibiotics are also more likely to generate future antibiotic resistance, as a result of ecological and evolutionary forces selecting for antibiotic resistant bacteria. 

Principal stratum effect identification often relies on monotonicity-type assumptions. 
In our notation, the monotonicity assumption is $S_i(0)\le S_i(1)$, for all $i$. That is, all patients who would have survived beyond one year under the Access antibiotic, would also survive under the Watch antibiotic. 
Monotonicity within a short time frame, during the hospital stay, might seem plausible; a baseline life-threatening infection cleared by the Access antibiotic would also be cleared when treated with the more aggressive Watch antibiotic. Monotonicity will exclude, for example, individuals of patient type $pt=3$ (Table \ref{Tab:stratum_POs_values}), which are indeed not expected to be part of the population. Such patients would have survived only under the Access antibiotic ($S(0)=1,S(1)=0$), but their death under the Watch antibiotic is not due to a new resistant infection ($I(0)=0,I(1)=0$).  

However, when considering the wider one-year time window, monotonicity is no longer plausible, because the Watch antibiotic may lead to a subsequent resistant infection, which may lead to death. This would be the case for $pt=9$, which belongs to the $ios$ stratum. Of note is that for the SACE, monotonicity alone does not imply point identification \citep{zhang2003estimation,ding2017principal,zehavi2023matching}.

Therefore, we consider assumptions of a different nature, that also respect the SCR nature of the data. Our approach extends the approach of \cite{nevo2022causal}, who considered the following order preservation assumption (ORP).
\begin{assumption}
\label{assum:ORP}
Order-Preservation (ORP). 
$$
\forall i: \text{If} \; I_i(0)=1, \text{then} \; I_i(1)=1. 
$$
\end{assumption}
In our application, ORP implies that the Watch antibiotic cannot prevent a future resistant infection. We focus on bounds under more subtle assumptions for two reasons. First, as previously explained, patients of $pt=3$ are unlikely to exist in the population, and the ORP assumption does not exclude this patient type. Second, as seen on Table \ref{Tab:stratum_POs_values}, ORP excludes from the population two patient types that belong to the target population ($pt=8,14$). Although it seems plausible to assume away the inclusion of these patient types in the population, we prefer to avoid assumptions on the target subpopulation itself. 
 
We introduce a new assumption, which we term the infected or survivors order-preservation assumption ($ios$-ORP).
\begin{assumption}
\label{assum:ios-ORP}
Infected or Survivors Order-Preservation ($ios$-ORP).
$$
\forall i: \text{If} \; i \in ios(0), \text{then} \; i \in ios(1).
$$
\end{assumption}
\noindent Under the ios-ORP assumption, every patient that belongs to the $ios(0)$ subpopulation, also belongs to the $ios(1)$ subpopulation. Namely, any patient who would have been infected or survived under the Access antibiotic would also have been infected or survived had they received the Watch antibiotic. 

As seen in Table \ref{Tab:stratum_POs_values}, the ios-ORP assumption excludes patient types $pt=2,3,6$. The logic behind excluding $pt=3$ has been previously presented. For $pt=2,6$, a future resistant infection occurs under the Access antibiotic treatment, while under the Watch antibiotic these patients would have died without having a resistant infection. Such scenarios are unlikely, because, as previously explained, the Watch antibiotic is more effective against the baseline infection (so death is less likely), but has higher chances of future infection. 

The ios-ORP assumption can be falsified by the observed data. Under ios-ORP, $\Pr(\{T_1(0)<1 \}\cup \{T_2(0)>1\}) \le  \Pr(\{T_1(1)<1\} \cup \{T_2(1)>1\})$. Therefore, under CE, we would expect that  $E_{\bX}[\Psi_{A=0,\bX}] \le E_{\bX}[\Psi_{A=1,\bX}]$. If this inequality does not hold in the data, then either CE or ios-ORP do not hold. 

The next new assumption we introduce is weak-ORP. It is weaker than the two previous assumptions, as shown in Appendix \ref{AppSec:Identification}. 
\begin{assumption}
\label{assum:weak-ORP}
Weak Order-Preservation (weak-ORP). 
$$
\forall i: \text{If} \;
 I_i(0)=1, \text{then} \;  i \in ios(1).
 $$
\end{assumption}
Table \ref{Tab:stratum_POs_values} shows which patient types are excluded by each ORP assumption. Under all three ORP variants, a patient who would have been infected under the Access antibiotic belongs to the $ios$ subpopulation.
Weak ORP is the weakest assumption (Lemma \ref{lem:orp_assumptions}). All patient types excluded under this assumption 
($pt=2,6$)
are also excluded under each of the other two assumptions (separately).
Under ios-ORP, $pt=3$ is also excluded, while under ORP it is not, but $pt=8,14$ are excluded. 
Importantly, the only patient type that is excluded by both weak-ORP and ios-ORP but not under monotonicity is $pt=2$, which is unlikely to be present in our population of interest.
Unlike the ORP assumption, the 
ios-ORP and weak-ORP assumptions do not exclude patient types that are part of the $ios$ stratum. We establish bounds for the FICE$(t)$ under the latter two ORP-type assumptions. In Appendix \ref{AppSec:Identification}, we also present bounds under consistency and 
CE only, without imposing any of the ORP assumption. 

Starting with ios-ORP, we show in 
Lemma \ref{lem:PIiosIdentORP} that under consistency, CE and 
ios-ORP, the proportion of the $ios$ stratum is identified by 
$\pi_{ios} = E_{\bX}[\Psi_{A=0,\bX}]$. 
The proposition below forms FICE$(t)$ bounds under these assumptions.
\begin{proposition}
\label{Prop:FICE_bounds_ios_ORP}
Under consistency, CE and $ios$-ORP, 
the FICE$(t)$ is bounded by
\begin{align*}
& \mathcal{L}(t) \le \text{FICE($t$)} \le \mathcal{U}(t), 
\end{align*}
where
\begin{align*}
& \mathcal{U}(t) = \min\Bigg\{1, \frac{E_{\bX}[F_{1|A=1,\bX}(t)]} {E_{\bX}[\Psi_{A=0,\bX}]}\Bigg\}  - \frac{E_{\bX}[F_{1|A=0,\bX}(t)]}  {E_{\bX}[\Psi_{A=0,\bX}]}, \\
& \mathcal{L}(t) = \max\Big\{0, 1 - 
\frac{E_{\bX}[S_{1|A=1, \bX}(t)]} {E_{\bX}[\Psi_{A=0,\bX}]}\Bigg\}  - \frac{E_{\bX}[F_{1|A=0,\bX}(t)]}  {E_{\bX}[\Psi_{A=0,\bX}]}. 
\end{align*}
\end{proposition}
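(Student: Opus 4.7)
The strategy is to rewrite $\text{FICE}(t)$ with common denominator $\pi_{ios}$, exactly identify one of the two numerators, and bound the other via Fréchet--Hoeffding. Under $ios$-ORP one has $ios(0) \subseteq ios(1)$, so $ios = ios(0)\cap ios(1) = ios(0)$; Lemma~\ref{lem:PIiosIdentORP} then gives $\pi_{ios} = E_{\bX}[\Psi_{A=0,\bX}]$ under consistency and CE. Thus
$$
\text{FICE}(t) \;=\; \frac{\Pr[\{T_1(1)\le t\}\cap ios(0)] - \Pr[\{T_1(0)\le t\}\cap ios(0)]}{E_{\bX}[\Psi_{A=0,\bX}]},
$$
and each piece can be analyzed separately.

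For the subtracted (``control'') numerator, I would observe that for $t\le 1$ the event $\{T_1(0)\le t\}$ forces $I(0)=1$ and hence $ios(0)$, so $\Pr[\{T_1(0)\le t\}\cap ios(0)] = \Pr[T_1(0)\le t] = E_{\bX}[F_{1|A=0,\bX}(t)]$ by consistency and CE. This yields the common $-E_{\bX}[F_{1|A=0,\bX}(t)]/E_{\bX}[\Psi_{A=0,\bX}]$ term appearing in both $\mathcal{U}(t)$ and $\mathcal{L}(t)$.

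For the ``treated'' numerator $\Pr[\{T_1(1)\le t\}\cap ios(0)]$, the event mixes potential outcomes under both antibiotic treatments and cannot be point-identified from the observed data. I would invoke the (marginal) Fréchet--Hoeffding inequalities,
$$
\max\{0,\ \Pr[T_1(1)\le t]+\pi_{ios}-1\} \;\le\; \Pr[\{T_1(1)\le t\}\cap ios(0)] \;\le\; \min\{\Pr[T_1(1)\le t],\ \pi_{ios}\},
$$
identify both marginals using consistency and CE (in particular $\Pr[T_1(1)\le t]=E_{\bX}[F_{1|A=1,\bX}(t)]$), divide through by $\pi_{ios}=E_{\bX}[\Psi_{A=0,\bX}]$, and rewrite the lower expression using $S_{1|A=1,\bX}(t)=1-F_{1|A=1,\bX}(t)$. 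Combining with the identified control term gives $\mathcal{U}(t)$ and $\mathcal{L}(t)$ exactly as stated.

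The main obstacle is structural rather than algebraic: $ios$-ORP merely restricts which patient types can appear in the population (ruling out $pt=2,3,6$ in Table~\ref{Tab:stratum_POs_values}) and places no restriction on the cross-world coupling between $T_1(1)$ and the event $ios(0)$. Consequently, the best one can hope for is partial identification via Fréchet, and the remaining work is simply to verify that the $\min\{1,\cdot\}$ and $\max\{0,\cdot\}$ truncations keep $\mathcal{U}(t)$ and $\mathcal{L}(t)$ inside the valid range for a probability difference; sharpness of these bounds is not claimed and would require applying Fréchet conditionally on $\bX$ before integration.
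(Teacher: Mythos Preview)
Your proposal is correct and follows essentially the same approach as the paper: identify $\pi_{ios}=E_{\bX}[\Psi_{A=0,\bX}]$ via Lemma~\ref{lem:PIiosIdentORP}, point-identify the $a=0$ numerator using the nesting $\{T_1(0)\le t\}\subseteq ios$, and apply Fr\'echet--Hoeffding (the paper's Lemma~\ref{lem:BoundNumer}) to bound the cross-world numerator $\Pr[\{T_1(1)\le t\}\cap ios(0)]$. The only cosmetic difference is that you first reduce $ios=ios(0)$ under $ios$-ORP and then decompose, whereas the paper first applies the general Lemma~\ref{lem:NumerDenomFICE} (yielding numerators in terms of $ios(1-a)$) and then invokes $ios$-ORP to collapse the $a=0$ term; the resulting arguments and bounds are identical.
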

% The proof is given in Section \ref{AppSubSec:ios_ORP} of the Appendix.
Specifically, under ios-ORP, $\Pr(T_1(0) \le t|ios)$
is identified by 
$E_{\bX}[F_{1|A=0,\bX}(t)]/  E_{\bX}[\Psi_{A=0,\bX}]$, while $\Pr(T_1(1) \le t|ios)$ is only partially identified. We observe that as 
$\pi_{ios}$ (identified by $E_{\bX}[\Psi_{A=0,\bX}]$) 
increases, the bounds get tighter. 
Moreover, if 
\begin{equation}
\label{Eq:cond}
1 - \pi_{ios} < E_{\bX}[F_{1|A=1,\bX}(t)] < \pi_{ios},
\end{equation} 
then both of the bounds for $\Pr(T_1(1) \le t|ios)$ are informative. Condition \eqref{Eq:cond} is  satisfied only if $\pi_{ios} > 0.5$.
For all $t$ such that \eqref{Eq:cond} holds, 
the length of the interval  $\mathcal{U}(t) - \mathcal{L}(t)$ is constant and equals to $1/\pi_{ios} - 1$.

We now present bounds under the relaxed ORP version, namely, weak-ORP.
Under weak-ORP, the $ios$ proportion is no longer fully identified. Instead,
Lemma \ref{lem:PIios_weakORP_bounds}
provides bounds for $\pi_{ios}$ under weak-ORP, denoted by $\mathcal{\tilde{U}}_{\pi}$ and $\mathcal{\tilde{L}}_{\pi}$, and use them to derive FICE$(t)$ bounds.
\begin{proposition}
\label{Prop:FICE_bounds_weak_ORP}
Under consistency, CE and weak-ORP, the FICE$(t)$ is bounded by
\begin{align*}
\mathcal{\tilde{L}}(t) \le \: \:  & \text{FICE$(t)$} \le \:  \mathcal{\tilde{U}}(t), 
\end{align*}
%\vspace{2px}
where% $\: \:$
\begin{equation*}
\mathcal{\tilde{U}}(t) = \dfrac{{\tilde{u}}(t)}{1\{{\tilde{u}}(t) \ge 0\}  \mathcal{\tilde{L}}_{\pi} + 1\{\tilde{{u}}(t) < 0\}  \mathcal{\tilde{U}}_{\pi}} \:, 
\mathcal{\tilde{L}}(t) = \dfrac{\tilde{{l}}(t)}{1\{\tilde{{l}}(t) \ge 0\}  \mathcal{\tilde{U}}_{\pi} + 1\{\tilde{{l}}(t) < 0\}  \mathcal{\tilde{L}}_{\pi}},
\end{equation*}
and
\begin{align*}
& \tilde{u}(t) = \min\Big\{E_{\bX}[F_{1|A=1,\bX}(t)], E_{\bX}[\Psi_{A=0,\bX}]\Big\} - E_{\bX}[F_{1|A=0,\bX}(t)],  \\
& \tilde{l}(t) = \max\Big\{0, E_{\bX}[\Psi_{A=0,\bX}] -E_{\bX}[S_{1|A=1,\bX}(t)]\Big\} - E_{\bX}[F_{1|A=0,\bX}(t)], 
\end{align*}
with $1\{\cdot\}$ being the indicator function. 

\end{proposition}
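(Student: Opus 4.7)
The plan is to write $\text{FICE}(t) = N/\pi_{ios}$ with $N := \Pr[T_1(1) \le t,\, ios] - \Pr[T_1(0) \le t,\, ios]$. I will derive observable bounds $\tilde{l}(t) \le N \le \tilde{u}(t)$, and then combine them with the bounds $[\mathcal{\tilde L}_\pi, \mathcal{\tilde U}_\pi]$ on $\pi_{ios}$ supplied by Lemma \ref{lem:PIios_weakORP_bounds} through a corner-optimization of $n/\pi$.

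The main structural insight is that $\{T_1(1) \le t\} \cap ios$ can be rewritten using only the world $a=0$. Since $t \le 1$, the inclusion $\{T_1(1) \le t\} \subseteq \{I(1)=1\} \subseteq ios(1)$ holds unconditionally, so
\[
\{T_1(1) \le t\} \cap ios \;=\; \{T_1(1) \le t\} \cap ios(0) \cap ios(1) \;=\; \{T_1(1) \le t\} \cap ios(0).
\]
In parallel, weak-ORP supplies $\{I(0)=1\} \subseteq ios$, and since $t\le 1$ gives $\{T_1(0)\le t\}\subseteq\{I(0)=1\}$, we obtain $\Pr[T_1(0) \le t,\, ios] = \Pr[T_1(0) \le t]$. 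Combining,
\[
N \;=\; \Pr\bigl[T_1(1) \le t,\, ios(0)\bigr] \;-\; \Pr[T_1(0) \le t].
\]
Consistency and CE then identify $\Pr[T_1(a) \le t] = E_{\bX}[F_{1|A=a,\bX}(t)]$ and $\Pr[ios(0)] = E_{\bX}[\Psi_{A=0,\bX}]$.

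Next I would apply the Fr\'echet--Hoeffding inequalities to the cross-world probability $\Pr[T_1(1) \le t,\, ios(0)]$, using only its identifiable marginals:
\[
\max\bigl\{0,\, \Pr[T_1(1) \le t] + \Pr[ios(0)] - 1\bigr\} \;\le\; \Pr\bigl[T_1(1) \le t,\, ios(0)\bigr] \;\le\; \min\bigl\{\Pr[T_1(1) \le t],\, \Pr[ios(0)]\bigr\}.
\]
Subtracting $E_{\bX}[F_{1|A=0,\bX}(t)]$ from each side, and rewriting $E_{\bX}[F_{1|A=1,\bX}(t)] + E_{\bX}[\Psi_{A=0,\bX}] - 1$ as $E_{\bX}[\Psi_{A=0,\bX}] - E_{\bX}[S_{1|A=1,\bX}(t)]$, yields exactly $\tilde{l}(t) \le N \le \tilde{u}(t)$.

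For the final step, given $n \in [\tilde l(t), \tilde u(t)]$ and $\pi \in [\mathcal{\tilde L}_\pi, \mathcal{\tilde U}_\pi]$ with $\pi > 0$, the ratio $n/\pi$ is linear in $n$ and monotone in $\pi$, so its extrema lie at rectangle corners: the supremum equals $\tilde u(t)/\mathcal{\tilde L}_\pi$ when $\tilde u(t) \ge 0$ and $\tilde u(t)/\mathcal{\tilde U}_\pi$ otherwise, with the symmetric form for $\tilde l(t)$. This reproduces the indicator-weighted denominators in the statement of $\mathcal{\tilde U}(t)$ and $\mathcal{\tilde L}(t)$. The main obstacle is the structural observation in the second paragraph: once we recognize that $ios$ can be replaced by $ios(0)$ inside the relevant cross-world event, the remainder is Fr\'echet--Hoeffding and bookkeeping. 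Without that replacement, the Fr\'echet lower bound would involve the unknown $\pi_{ios}$ itself and would not translate into an observable lower bound on $N$.
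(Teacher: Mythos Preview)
Your proposal is correct and follows essentially the same route as the paper. The paper packages your ``structural observation'' as Lemma~\ref{lem:NumerDenomFICE} (reducing $\{T_1(a)\le t\}\cap ios$ to $\{T_1(a)\le t\}\cap ios(1-a)$) and your Fr\'echet--Hoeffding step as Lemma~\ref{lem:BoundNumer}; the use of weak-ORP to identify $\Pr[T_1(0)\le t,\, ios]=E_{\bX}[F_{1|A=0,\bX}(t)]$ and the corner optimization over $[\tilde l,\tilde u]\times[\mathcal{\tilde L}_\pi,\mathcal{\tilde U}_\pi]$ are identical in both arguments.
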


\subsection{Identification under frailty assumptions}
\label{subsec:frailty}
As an alternative to the partial identification approach detailed above, we also provide a sensitivity analysis for the FICE$(t)$ as a function of a single parameter. Our approach connects two random-effect illness-death models, as originally proposed by \cite{nevo2022causal}.
Each model represents transition probabilities between states $jk \in \{01,02,12\}$ for one treatment group, and the models are connected by a bivariate frailty random variable $\bgamma = (\gamma_0, \gamma_1)$. 
The cause-specific hazard functions that
define each treatment-specific illness-death model are 
\begin{align*}
\begin{split}
\lambda_{01}(t|a,\bX,\bgamma) = \displaystyle{\lim_{h \to 0}} \: \frac{1}{h}\Pr(T_1 \in [t, t+h)|A=a, T_1 \ge t, T_2 \ge t, \bX, \bgamma),  \:\: & t>0 \\[0.5em] 
\lambda_{02}(t|a,\bX,\bgamma) = \displaystyle{\lim_{h \to 0}} \: \frac{1}{h}\Pr(T_2\in [t, t+h)|A=a, T_1 \ge t, T_2 \ge t, \bX,\bgamma),  \:\: & t>0 \\[0.5em]
\lambda_{12}(t|a,t_1,\bX,\bgamma) =  \displaystyle{\lim_{h \to 0}} \: \frac{1}{h}\Pr(T_2 \in [t, t+h)|A=a, T_1 = t_1, T_2 \ge t,  \bX,\bgamma,)\:\: & t>t_1>0.
\end{split}
\end{align*}
for $a=0,1$.

The following frailty assumptions establish the connection between the distributions of the potential event times through the frailty $\bgamma$,
where an unidentifiable parameter, denoted as $\rho$, governs the cross-world correlation.
\begin{assumption}
\label{assump:frail}
There exists a bivariate random variable $\bgamma=(\gamma_0, \gamma_1)$, fulfilling the following conditions: 
\begin{minipage}[t]{\linewidth}
\begin{enumerate}
[(i), 
partopsep=0.1pt,  
leftmargin=-1.4cm, 
rightmargin=4cm
]
% \newline
% topsep=6pt
\item[]
\item  $A \indep 
  \{T_1(a), T_2(a), \gamma_a\}|\bX$, for $a=0,1$.
\item Given $\bX$ and $\bgamma$, the joint distribution of the potential event times can be separated into the following components
\begin{align*}
    &f(T_1(0), T_2(0), T_1(1), T_2(1)|\bX,\gamma_0, \gamma_1) =
    f(T_1(0), T_2(0)|\bX,\gamma_0)
    f(T_1(1), T_2(1)|\bX,\gamma_1),
\end{align*}
where $f(\cdot)$ is a density function. 
\item $\bgamma \indep \bX$. In words, the frailty variable $\bgamma$ and the covariates $\bX$ are independent. 
\item The hazard function has the following multiplicative form:
$\lambda_{jk}(t|a,\bX,\bgamma) = \gamma_a \lambda'_{jk}(t|a,\bX)$ for $j=0, k=1,2$, and 
$\lambda_{12}(t|t_1,a,\bX,\bgamma) = \gamma_a \lambda'_{12}(t|t_1,a,\bX)$, for $a=0,1$ and for some functions $\lambda'_{jk}$.
\item The probability density function of $\bgamma$, termed $f_{\btheta}(\bgamma)$, is known up to the parameters $\btheta$, and the parameters(s) that govern $f_{\btheta}(\bgamma)$ are identifiable from the observed data.
\end{enumerate}
\end{minipage}
\end{assumption}
\noindent
Part (i) is similar to the CE assumption, where $\gamma_a$ is also included together with the potential event times.
Part (ii) asserts that conditionally on the covariates $\bX$ and the frailty $\bgamma$, the cross-world event time pairs are independent, and that conditionally on the frailty $\gamma_{a}$ and covariates $\bX$, potential event times under treatment $a$ are independent of $\gamma_{1-a}$.
Parts (iv) and (v) pertain to the identification of the observed data distribution (for $T_1<T_2$) and of $f_{\btheta}(\bgamma)$.

The proposition below provides the identification of the FICE$(t)$ and of the $ios$ stratum proportion, as a function of the unidentifiable parameter $\rho$, under the frailty assumptions.

\begin{proposition}
\label{Prop:FICE_identification_frailty}
Under consistency and the frailty assumptions (Assumption \ref{assump:frail}), 
the FICE is identified from the observed data by
\begin{equation*}
\text{FICE$(t)$} = 
\frac{1}{\pi_{ios}(\btheta, \rho)}
\big\{E_{\bX}\big[E_{\bgamma}\big[F_{1|A=1, \bX, \gamma_1}(t) \Psi_{A=0, \bX, \gamma_0} - F_{1|A=0, \bX, \gamma_0}(t) \Psi_{A=1, \bX, \gamma_1} \big]\big] \big\},
\end{equation*}
where 
%\begin{equation*}
${\pi_{ios}(\btheta, \rho)} = E_{\bX}\big\{E_{\bgamma}[\Psi_{A=0, \bX, \gamma_0} \Psi_{A=1, \bX, \gamma_1}]\big\}$ 
%\end{equation*}
%\end{sizeddisplay}
is the proportion of the $ios$ stratum in the population under the frailty assumptions.
\end{proposition}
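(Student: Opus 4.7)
The plan is to decompose $\Pr(T_1(a) \le t, ios)$ for $a=0,1$ and $\pi_{ios}$ using the cross-world factorization built into Assumption \ref{assump:frail}, and then assemble the FICE as their ratio. First, I would expand $ios = ios(0) \cap ios(1)$ with $ios(a) = \{T_1(a)\le 1\}\cup\{T_2(a)>1\}$, and condition on $(\bX, \gamma_0, \gamma_1)$. By Assumption \ref{assump:frail}(ii), the pairs $(T_1(0), T_2(0))$ and $(T_1(1), T_2(1))$ are conditionally independent given $(\bX, \bgamma)$, so
\begin{equation*}
\Pr(ios \mid \bX, \bgamma) = \Pr(ios(0) \mid \bX, \gamma_0)\, \Pr(ios(1) \mid \bX, \gamma_1).
\end{equation*}
Assumption \ref{assump:frail}(i) together with consistency then let me rewrite each factor as $\Psi_{A=a, \bX, \gamma_a}$. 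Taking expectations and using $\bgamma\indep \bX$ from Assumption \ref{assump:frail}(iii) yields the stated formula for $\pi_{ios}(\btheta,\rho)$.

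For the numerator at $t\le 1$, I would observe that $\{T_1(1)\le t\}\subseteq ios(1)$, since $T_1(1)\le t\le 1$ forces $I(1)=1$. Hence $\{T_1(1)\le t\}\cap ios$ collapses to $\{T_1(1)\le t\}\cap ios(0)$, and by the same cross-world factorization,
\begin{equation*}
\Pr\bigl(T_1(1)\le t,\, ios(0)\mid \bX,\bgamma\bigr) = F_{1\mid A=1,\bX,\gamma_1}(t)\,\Psi_{A=0,\bX,\gamma_0}.
\end{equation*}
A symmetric argument handles $\Pr(T_1(0)\le t, ios)$. Marginalizing over $\bgamma$ (inner expectation, by (iii)) and over $\bX$, then subtracting and dividing by $\pi_{ios}(\btheta,\rho)$, gives the claimed identification formula.

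The main obstacle will be the careful interchange between potential-outcome conditionals and their observable counterparts: Assumption \ref{assump:frail}(i) plus consistency are what allow me to replace $\Pr(T_1(a)\le t \mid \bX,\gamma_a)$ by $F_{1\mid A=a,\bX,\gamma_a}(t)$, and likewise for $\Psi$. One also has to verify that the observable conditionals $F_{1\mid A=a,\bX,\gamma_a}(t)$ and $\Psi_{A=a,\bX,\gamma_a}$ are themselves identified from the observed data; Assumption \ref{assump:frail}(iv)--(v), i.e., the multiplicative hazard form together with identifiability of $f_{\btheta}$ from the observed marginals, deliver this, with the unidentifiable cross-world correlation $\rho$ entering only through $f_{\btheta}(\bgamma)$. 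The remainder is bookkeeping of conditional expectations.
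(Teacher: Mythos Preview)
Your proposal is correct and follows essentially the same route as the paper's proof: the paper first invokes Lemma~\ref{lem:NumerDenomFICE} to simplify $\Pr(T_1(a)\le t\mid ios)$ via $\{T_1(a)\le t\}\subseteq ios(a)$ (your step with $t\le 1$), then conditions on $(\bX,\bgamma)$, applies parts (i)--(ii) of Assumption~\ref{assump:frail} with consistency to factorize into observable conditionals, and finally uses part (iii) to write the result as a nested expectation $E_{\bX}E_{\bgamma}[\cdot]$. Your remark that parts (iv)--(v) are what make the conditional quantities $F_{1\mid A=a,\bX,\gamma_a}$ and $\Psi_{A=a,\bX,\gamma_a}$ identifiable from observed data, with $\rho$ entering only through $f_{\btheta}(\bgamma)$, is also in line with the paper's framing.
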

\noindent Monte Carlo simulations can be used to approximate these expectations. The SACE$(t)$ can be identified under the frailty assumptions as well.

In our data analysis in Section \ref{Sec:Application}, we assume that for $a=0,1$, $\gamma_a$ is a Gamma frailty variable, with a shape parameter $\theta_a^{-1}$ and a scale parameter $\theta_a$.
The vector $\bgamma$
is then a bivariate Gamma variable, with means one, variances $\theta_0$ and $\theta_1$, and the sensitivity parameter $\rho$ is the correlation between $\gamma_0$ and $\gamma_1$. More generally, researchers can also specify other probability distributions for $\bgamma$.

\section{Semi-parametric model and estimation}
\label{Sec:estimation}

Calculating the large-sample bounds entails estimation of $F_{1|A=a,\bX}(t)$ and $\bpsi_{A=a,\bX}(t)$, for both $a=0,1$. This could be done using any regression-based illness-death model. Frailty random variables can be used to model dependence between $T_1(a)$ and $T_2(a)$, without modeling cross-world independence. We describe one such approach at the end of this section. 
 
For estimation of and inference about the FICE$(t)$ under the frailty assumptions, we will use the following hazard models.
\begin{align}
\begin{split}
\label{conditional hazard models DGM}
\lambda_{01}(t|a,\bX,\bgamma) = \gamma_a \lambda^0_{01}(t|a) \exp(\bX^t\bbeta^a_{01}),  \:\: & t>0 \\[0.5em] 
\lambda_{02}(t|a,\bX,\bgamma) = \gamma_a \lambda^0_{02}(t|a) \exp(\bX^t\bbeta^a_{02}),  \:\: & t>0 \\[0.5em]
\lambda_{12}(t|t_1, a,\bX,\bgamma) = \gamma_a \lambda^0_{12}(t|a) \exp(\bX^t\bbeta^a_{12}), \:\: & t>t_1>0, 
\end{split}
\end{align}
with unspecified baseline hazard functions $\lambda^0_{jk}(t|a)$.
Denote the cumulative
baseline hazard functions as $\Lambda^0_{jk}(t|a) = \int_{0}^{t} \lambda^0_{jk}(u|a) du$.
We define the following quantities: 
$\widetilde{\btheta} = (\theta_0, \theta_1),
\widetilde{\bbeta} = (\bbeta^0, \bbeta^1),
\widetilde{\blambda}_0(t) = (\blambda^0(t), \blambda^1(t))$, and
$\widetilde{\bLambda}_0(t) = (\bLambda^0(t), \bLambda^1(t))$, where, for $a=0,1$,
$\bbeta^a = (\bbeta^a_{01}, \bbeta^a_{02}, \bbeta^a_{12}),
\blambda^a(t) = (\lambda^0_{01}(t|a), \lambda^0_{02}(t|a), \lambda^0_{12}(t|a))$, and
$\bLambda^a(t) = (\Lambda^0_{01}(t|a), \Lambda^0_{02}(t|a), \Lambda^0_{12}(t|a))$. 
%\vspace{40px}
Define also 
\begin{align*}
k_i &= H^0_{01}(\widetilde{T}_{i1}|a_i)\exp(\bX_i^t\bbeta^{a_i}_{01}) + H^0_{02}(\widetilde{T}_{i1}|a_i)\exp(\bX_i^t\bbeta^{a_i}_{02})\\[0.5ex]
&+ \big[H^0_{12}(\widetilde{T}_{i2}|a_i) - H^0_{12}(\widetilde{T}_{i1}|a_i)\big]  \exp(\bX_i^t\bbeta^{a_i}_{12})\delta_{i1}.
\end{align*}
In  Appendix \ref{Appsubsec:Estimation} we show that the likelihood function, integrated over the frailty distribution,
is equal to
% \propto
\begin{align}
\begin{split}
  L&(\widetilde{\btheta}, \widetilde{\bLambda}_0, \widetilde{\blambda}_0, \widetilde{\bbeta})  = \prod_{i=1}^{n} \Big\{\ \big[ \lambda^0_{01}(\widetilde{T}_{i1}|a) \big]^{\delta_{i1}} 
 \big[ \lambda^0_{02}(\widetilde{T}_{i1}|a) \big]^{(1 - \delta_{i1}) \delta_{i2}} 
 \big[ \lambda^0_{12}(T_{i2}|a)  \big]^{\delta_{i1} \delta_{i2}} \\[0.5ex]
  & \exp\big( \delta_{i1}X_i^t\bbeta^{A_i}_{01} +
(1 - \delta_{i1})\delta_{i2}X_i^t \bbeta^{A_i}_{02} + 
+
\delta_{i1} \delta_{i2}X_i^t \bbeta^{A_i}_{12}\big) (-1)^
  {\delta_{i1} + \delta_{i2}} \phi_{a_i}^{(\delta_{i1} + \delta_{i2})}(k_i) \Big\},
  \nonumber
\end{split}
\end{align}
where for $a=0,1, q=0,1,2$, the function $\phi_{a}^{(q)}(k)$ is the $q$-th derivative of 
$E[\exp(-k \gamma_a)]$
with respect to $k$.
 
Estimation of the parameters can be achieved by employing an EM algorithm \citep{dempster1977maximum}, described in Appendix \ref{Appsubsec:Estimation}.
After parameter estimation, we use Monte Carlo simulations to estimate the quantities in the identification result from Proposition \ref{Prop:FICE_identification_frailty}.
Standard errors (SEs) and confidence intervals (CIs) can be estimated using a bootstrap procedure.

Bounds estimation can be based on the same models, while ignoring the correlation parameter $\rho$ as the quantities that establish the bounds do not involve any cross-world terms. Importantly, in this approach, the frailty random variables are used only for estimation purposes, not identification.

\section{Application to the motivating example}
\label{Sec:Application}

We applied our two identification approaches to quantify the causal effect of using a cefuroxime, a Watch antibiotic treatment, compared with amoxicillin-clavulanate, an Access antibiotic treatment, on future ceftazidime-resistant infection.
As discussed in Section \ref{Sec:data}, we first matched each patient in the Access group to a single patient in the Watch group, without replacement.
Matching was based on the Mahalanobis distance with a \textit{caliper} \citep{rubin2000combining,stuart2010matching} of $0.3$ standard deviations on the estimated PS. Both the Mahalanobis distance and the PS model (Section \ref{AppSec:Motivating data}) included all covariates described in Table \ref{Tab:table1}.
Due to the difference in the estimated PS distributions between the two antibiotic groups, only 4,143 out of the $5,855$ ($71\%$) patients who received Access antibiotic had a satisfactory match. 
However, the results did not change substantially when a larger caliper was taken or when matching was done solely on the estimated PS. Due to the matching, the causal effects we study cover the population of patients who are treated with Access antibiotic. 

SEs were estimated
by a pair-level Bootstrap with 200 repetitions.
Wald-type $95\%$ confidence intervals were then calculated.

Starting with the large-sample bounds, the quantities needed for calculating the bounds were estimated by two frailty illness-death models, estimated separately at each treatment group by the EM algorithm (Appendix \ref{Appsubsec:Estimation}). Let $\widehat{E}_{\bX}(\cdot)$ be the empirical mean over the distribution of $\bX$ in the matched dataset.
The estimated mean $\Psi$'s were
$\widehat{E}_{\bX}[\widehat{\Psi}_{A=0,\bX}]= 0.815$ and $\widehat{E}_{\bX}[\widehat{\Psi}_{A=1,\bX}]=0.876$. Because $\widehat{E}_{\bX}[\widehat{\Psi}_{A=0,\bX}]<\widehat{E}_{\bX}[\widehat{\Psi}_{A=1,\bX}]$, the ios-ORP assumption was not falsified by our data.
The estimated ios stratum proportion under ios-ORP was $81.5\%$ (CI95\%: $80.5\%$, $82.4\%$).
The bounds for $\pi_{ios}$ under weak-ORP and without ORP assumptions were identical (Appendix \ref{AppSec:Identification}) and estimated to be $[69.1\%, 81.5\%]$.
 
Figure \ref{Fig:res_ORP_with_diff} presents the estimated FICE$(t)$ bounds. On the difference scale, the one-year estimated FICE bounds without ORP assumptions, under weak-ORP, and under ios-ORP were 
$[-9.7\%, 12.4\%]$, $[-9.7\%, 2.6\%]$ and $[-8.3\%, 2.2\%]$, respectively. 
As one may expect, the stronger the assumption taken, the narrower the bounds became.
After approximately 90 days ($t\approx0.25$), the upper bounds under the ORP assumptions did not vary considerably.
On the risk-ratio scale (Figure \ref{Fig:res_upper_bound_under_weak_ORP_ratio}), the bounds under weak-ORP and ios-ORP coincide analytically (see Appendix \ref{AppSec:Identification}).
The upper bound under weak-ORP decreased from approximately $2.00$ two weeks after baseline to $1.27$ (CI95\%: $1.07$, $1.46$) after one year,
and the lower bound was zero for all $t$, i.e., it was not informative. 
The estimated bounds without ORP-type assumptions were also not informative.

 \begin{figure}
\centering
\caption{\footnotesize{Estimated bounds for the FICE$(t)$ on the difference-scale without ORP-type assumptions (without ORP), and under weak-ORP and ios-ORP.
The lower bound without ORP assumptions was the same as the one under weak-ORP.}}
\label{Fig:res_ORP_with_diff}
\centering
\includegraphics[scale=0.5] % 0.475
{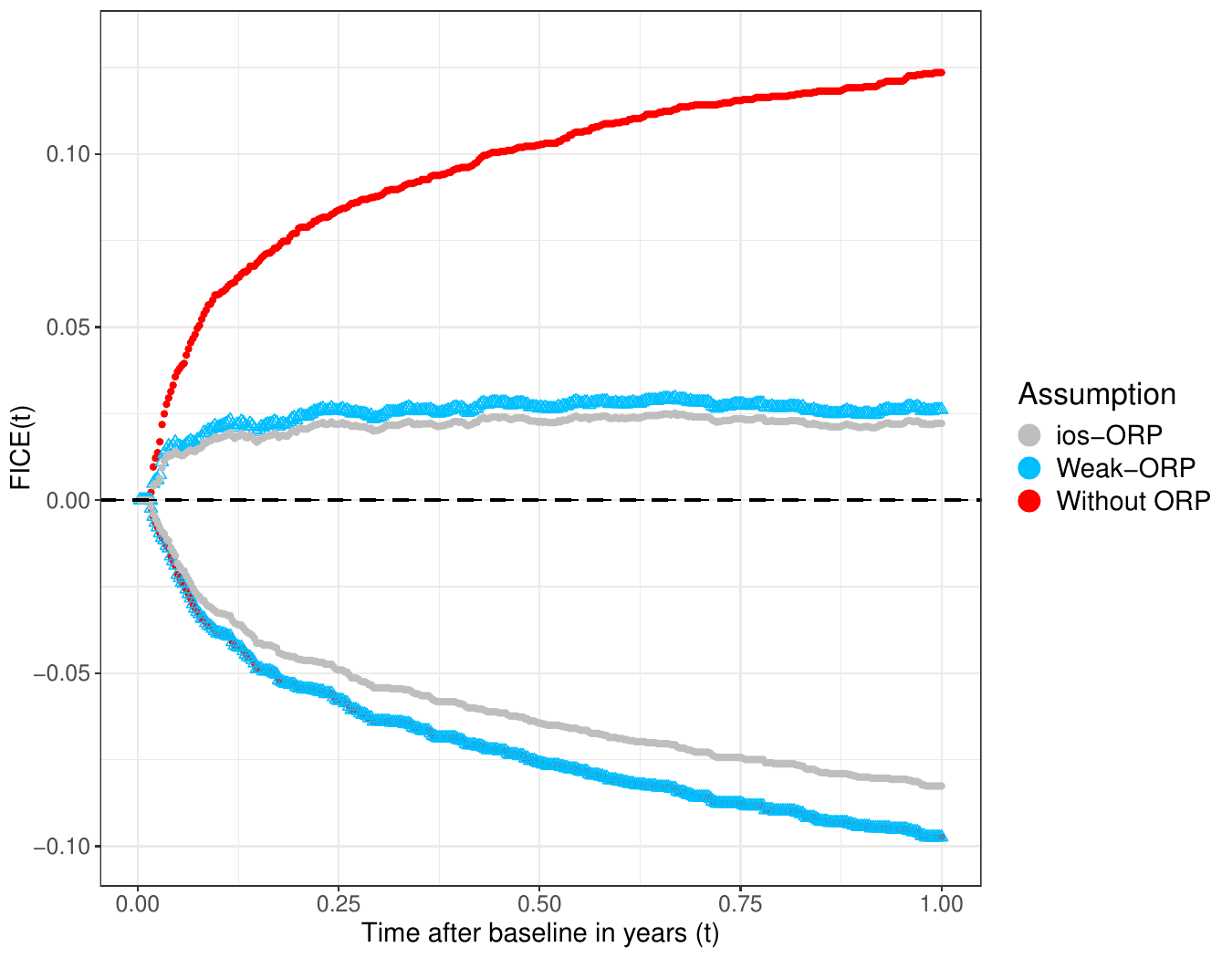}
\end{figure}

Turning to the alternative sensitivity analysis under the frailty assumptions, we employed the EM algorithm for the illness-death models with bivariate Gamma frailty distribution with varying $\rho=Corr(\gamma_0,\gamma_1)$ values.
The obtained estimates were relatively robust to different $\rho$ values. Therefore, we present here the results for $\rho=0$, and
compare the results for other $\rho$ values in Appendix \ref{Appsec: Application}. 
The estimated values of $\theta_0$ and $\theta_1$ were 1.78 (CI95\%: $1.17$, $2.52$) and 0.89 (CI95\%: $0.33$, $1.42$), respectively.
The ios proportion was estimated to be 4.5\% larger than the always-survivors proportion. These estimated proportions were 
$\widehat{\pi}_{ios}=74.0\%$ (CI95\%: $72.9\%$, $75.2\%$) and $\widehat{\pi}_{as}=69.5\%$ (CI95\%: $68.4\%$, $70.7\%$).
Due to the overlap between these subpopulations, the estimated effects within the two strata did not differ substantially. The estimated always-infected proportion was notably low ($\widehat{\pi}_{ai}=0.9\%$) and we therefore do not present results for the AICE.

\begin{figure}
\caption{\footnotesize{Comparison between estimates of the different estimands within one year after baseline under the frailty assumptions with $\rho=0$. 
The upper panel shows the various estimates on the difference scale. The lower panel shows the various estimates on the risk-ratio scale.}}
\label{Fig:res_frailty_all_diff_and_ratio}
\centering
\includegraphics[scale=0.59]{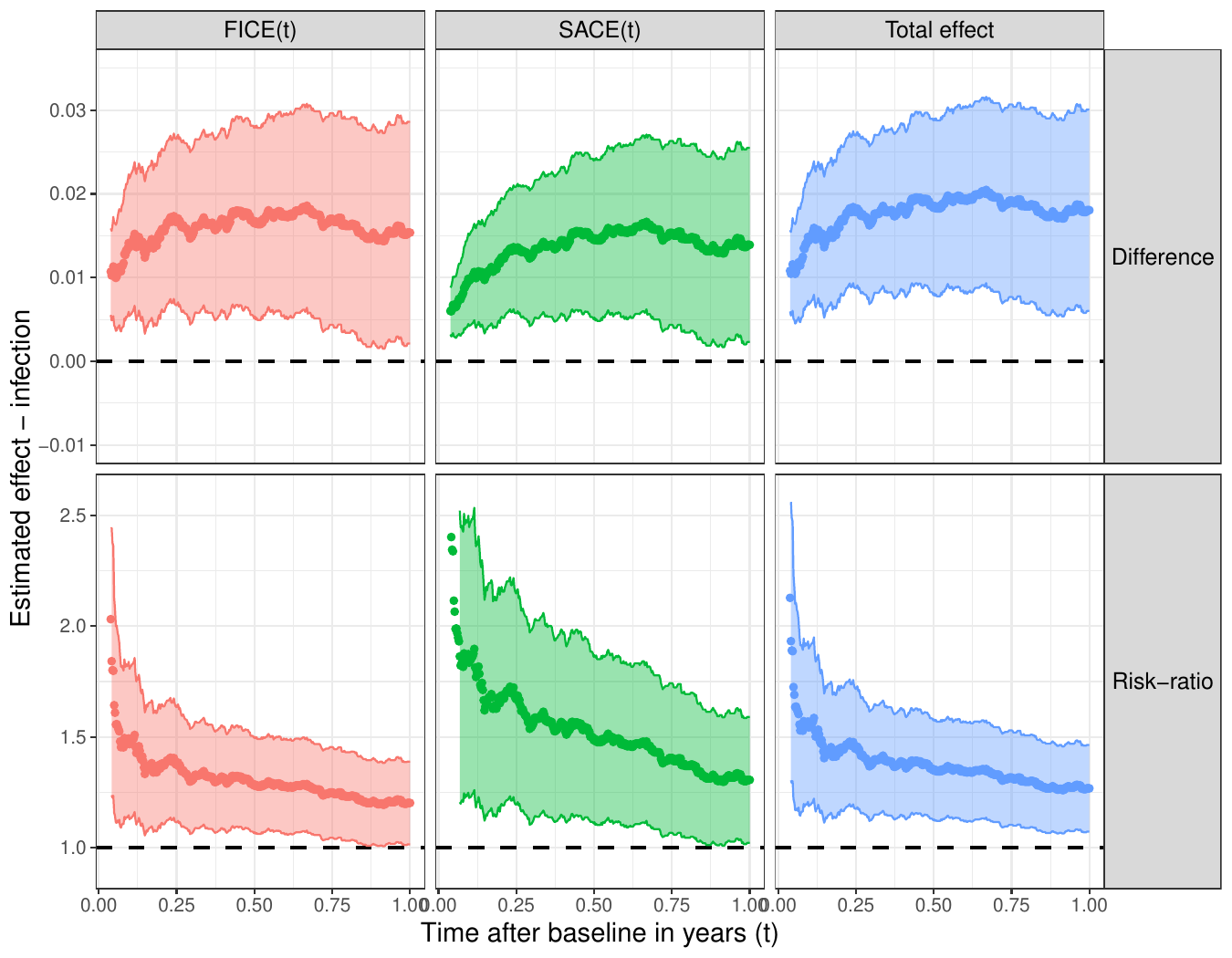}
 \end{figure}

Figure \ref{Fig:res_frailty_all_diff_and_ratio} 
depicts estimated causal effects for $t\le 1$ along with the 95\% confidence intervals.
At $t=1$, the FICE$(t)$ and the SACE$(t)$ were estimated to be $1.5\%$ (CI95\%: $0.2\%$, $2.9\%$) 
and $1.4\%$ (CI95\%: $0.2\%$, $2.6\%$),
respectively. 
The estimated total effect at $t=1$ was $1.8\%$ (CI95\%: $0.6\%$, $3.0\%$). The total effect was about $20\%$ larger than the FICE and the SACE, as the Watch antibiotic presumably prevents death, resulting in higher infection rates under the Watch antibiotic. This finding illustrates the subtlety when interpreting the total effect.

The proportion of the two patient types that would have been infected and subsequently died under one antibiotic treatment but not under the other ($pt=8,9$) was 
$3.9\%$ (CI95\%: $3.4\%$, $4.3\%$).
The effect among them was relatively high, although not significant, amounting to $4.4\%$ (CI95\%: $-8.7\%$, $17.5\%$), highlighting the necessity of including them within the population of interest.
As discussed in Section \ref{subsec:ProposedApproach}, the FICE does include them, while the SACE and the AICE do not. 

Moving to the risk-ratio scale,
the FICE, the SACE and the total effect decreased with time from approximately $2.00$ after the first two weeks to $1.20$ (CI95\%: $1.02, 1.39$), $1.31$ (CI95\%: $1.02\%$, $1.59\%$) and 
$1.27$ (CI95\%: $1.07\%$, $1.46\%$),
respectively, during the one-year period.

In summary, while the large-sample bounds are more informative under ORP assumptions then the bounds without ORP assumptions, they do not reject the option of a null result. In contrast, our frailty-based analysis reveals that a causal effect of using Watch instead of Access on future ceftazidime-resistant infection exists. 
The estimated number needed to harm was approximately $65$ (CI95\%: $35$, $472$). This means that for every $10,000$ ios patients treated with the Access antibiotic, had we replaced it with the Watch antibiotic, we would expect about $150$ more patients to acquire a resistant infection. 
The estimated ios proportion was high under both the ORP-type assumptions and the frailty assumptions. This indicates that this subpopulation is highly relevant and that targeting the FICE achieves our original motivation of capturing effects in a large subpopulation.

\section{Discussion}
\label{Sec:Discussion}

In this paper, we proposed a principal stratification approach to study the causal effect on the non-terminal event time in SCR settings.
Our proposed estimand, the FICE$(t)$, is defined in a new stratum among which a well-defined causal effect exists and is informative.
This stratum is more inclusive than previously-proposed subpopulations.
We derive FICE$(t)$ bounds under new ORP assumptions, and provide a principal-stratification oriented examination of the plausibility of these assumptions. As an alternative strategy, we conduct a sensitivity analysis that incorporates a frailty random variable, resembling the approaches of \cite{xu2020bayesian} and \cite{nevo2022causal}.

Our approach offers broad applicability to researchers in a variety of disciplines. One particularly urgent, yet understudied, public‑health problem to which it can be applied to is quantifying the causal effect of antibiotic usage on subsequent antibiotic resistance. To demonstrate its utility, we apply the proposed method to our motivating dataset, estimating the excess ceftazidime-resistance burden from  Watch versus Access use. While in this application the focus was on the non-terminal event, causal effects on the terminal event within strata covered in this paper can also be targeted.

Several avenues remain for future research. First, extending the framework to handle repeated or time-varying antibiotic use.
Second, developing full identification strategies for causal effects on the non-terminal event time under alternative assumptions. Specifically,  assumptions whose violations can be captured by a sensitivity parameter, similarly to the approaches adopted by \cite{ding2017principal} and \cite{zehavi2023matching} for SACE identification.
While our methodology was tailored to address the causal effect of antibiotic usage on future antibiotic resistance,
% to overcome the problem of antibiotic usage causal effects on future resistance, 
we believe it may also constitute a step toward strengthening principal stratification tools for studying causal effects in SCR settings more broadly.

\addtocontents{toc}{\protect\setcounter{tocdepth}{3}}

\section{Data Availability Statement}\label{data-availability-statement}

The code for the numerical example and illustrative analysis of the synthetic data is available at  https://github.com/TamirZe/SCR-antibiotics.

\bibliography{references}

\appendix

\setcounter{figure}{0}
\setcounter{table}{0}
\setcounter{equation}{0}
\setcounter{lemma}{0}
\setcounter{proposition}{0}

% Redefine section numbering
%\renewcommand{\thesection}{A\arabic{section}}

% Redefine figure, table, equation numbering with "A" prefix
\renewcommand{\thefigure}{A\arabic{figure}}
\renewcommand{\thetable}{A\arabic{table}}
\renewcommand{\theequation}{A\arabic{equation}}
\renewcommand{\thelemma}{A\arabic{lemma}}
\renewcommand{\theproposition}{A\arabic{proposition}}

\renewcommand{\theequation}{A.\arabic{equation}}
\setcounter{equation}{0}

\spacingset{1.1} % DON'T change the spacing!
\section*{Appendix}

\section{Additional information about the motivating dataset}
\label{AppSec:Motivating data}

In this section, we provide information about key variables in the original motivating and matched datasets, and provide the results of the models mentioned in Section \ref{Sec:data}.
%to estimate the propensity score and the Cox model for time to death.

We begin with 
Table \ref{Tab:table1}, which presents the distribution of the set of covariates $\bX$ among the Watch group (13,959 patients in the full dataset, 4,143 patients in the matched dataset) and the Access group (5,855 and 4,183 patients), categorized by various forms of demographic and clinical information.
As discussed in Section \ref{Sec:data},
patients in the Watch group were generally older; more likely to suffer from several medical conditions, such as dementia and chronic renal failure;
and more frequently resided in a residential institution prior to admission.
These patterns are expected and align with clinical practice, as the Watch antibiotic is often prescribed to high-risk patients.

\begin{table}[H]
\caption{ \footnotesize Covariates distribution in the full and matched datasets. Matching was on the Mahalanobis distance with a caliper on the estimated propensity score ($c=0.3$ standard deviation of the estimated propensity score).
For discrete covariates, frequencies (proportions) are reported.
For the covariate age,
the mean (standard deviation) is reported.
SMD: standardized mean difference. CRF: chronic renal failure. Cefta: ceftazidime.
} 
\label{Tab:table1} 
\centering
\fbox{
\scriptsize
\begin{tabular}{l|cccccc}
 & \multicolumn{3}{c}{Full dataset} &
   \multicolumn{3}{c}{Matched dataset} \\ 
   & Access & Watch & SMD & Access & Watch & SMD \\
\hline
\textbf{N} & 5,855 & 13,959 &       & 4,143 & 4,143 & \\ 
\textbf{Demographics} &&&&&&\\
 Age & 56.0\,(25.2) & 69.0\,(22.0) & 0.55 & 57.9\,(26.4) & 61.2\,(24.3) & 0.13\\
 Male$^\dagger$ & 3041\,(51.9\%) & 7668\,(54.9\%) & 0.06 & 2192\,(52.9\%) & 2195\,(53\%) & 0.00\\
\hline
\textbf{Sample location} &&&&&&\\
\textbf{at baseline} &&&&&&\\
Other   & 454\,(7.8\%) &  482\,(3.5\%) & 0.19 & 306 (7.4\%) & 262 (6.3\%) & 0.04\\
 Not taken       & 3,093\,(52.8\%) & 7,217\,(51.7\%) & 0.02 & 2481\,(59.9\%) & 2465\,(59.5\%) & 0.01\\
 Multiple sources & 390\,(6.7\%)   & 1,281\,(9.2\%)  & 0.09 & 320\,(7.7\%)   & 302\,(7.3\%)   & 0.02\\
 Urine           & 562\,(9.6\%)   & 3,584\,(25.7\%) & 0.43 & 551\,(13.3\%)  & 580\,(14\%)    & 0.02\\
 Wound           & 1,016\,(17.3\%) & 217\,(1.5\%)   & 0.56 & 196\,(4.7\%)   & 216\,(5.2\%)   & 0.02\\
 Blood           & 182\,(3.1\%)   & 930\,(6.7\%)   & 0.17 & 167\,(4\%)     & 169\,(4.1\%)   & 0.00\\
 Sputum          & 158\,(2.7\%)   & 248\,(1.8\%)   & 0.06 & 122\,(2.9\%)   & 149\,(3.6\%)   & 0.04\\
\hline
\textbf{Arrived from} &&&&&&\\
Other   & 1,074 (18.3\%) &  1,441 (10.3\%) & 0.23 & 762 (18.4\%) & 691 (16.7\%) & 0.04\\
 Home        & 4,098\,(70\%)   & 10,310\,(73.9\%) & 0.09 & 2,784\,(67.2\%) & 2,837\,(68.5\%) & 0.03\\
 Institution & 683\,(11.7\%)  & 2,208\,(15.8\%)  & 0.12 & 597\,(14.4\%)  & 615\,(14.8\%)  & 0.01\\
\hline
\textbf{Hospitalization unit} &&&&&&\\
Other   & 2,445 (41.8\%) &
1,500 (10.7\%) &
0.75 & 1,096 (26.5\%) &
1,018 (24.6\%) & 0.04\\
 Internal & 1,839\,(31.4\%) & 8,883\,(63.6\%) & 0.68 & 1,767\,(42.6\%) & 1,820\,(43.9\%) & 0.03\\
 Surgical & 1,571\,(26.8\%) & 3,576\,(25.6\%) & 0.03 & 1,280\,(30.9\%) & 1,305\,(31.5\%) & 0.01\\
\hline
\textbf{Medical history} &&&&&&\\
 Dementia          & 313\,(5.3\%)  & 1,252\,(9\%)   & 0.14 & 283\,(6.8\%)  & 295\,(7.1\%)  & 0.01\\
 CRF               & 452\,(7.7\%)  & 1,604\,(11.5\%)& 0.13 & 341\,(8.2\%)  & 348\,(8.4\%)  & 0.01\\
 Immunosuppression & 414\,(7.1\%)  & 1,370\,(9.8\%) & 0.10 & 345\,(8.3\%)  & 370\,(8.9\%)  & 0.02\\
 Diabetes          & 1,375\,(23.5\%)& 3,352\,(24\%)  & 0.01 & 856\,(20.7\%) & 866\,(20.9\%) & 0.01\\
 Catheter          & 631\,(10.8\%) & 3,123\,(22.4\%)& 0.32 & 571\,(13.8\%) & 609\,(14.7\%) & 0.03\\
 Previous antibiotic use           & 1,000\,(17.1\%) & 2,049\,(14.7\%) & 0.07 & 699\,(16.9\%) & 729\,(17.6\%) & 0.02\\
 Previous cefta culture ($365$ days)$^\dagger$    & 719\,(12.4\%)  & 1,401\,(10.1\%) & 0.07 & 539\,(13\%)   & 541\,(13.1\%) & 0.00\\
\hline
\textbf{Medical information} &&&&&&\\
 Arrival to culture ($>2$ days)$^\dagger$         & 2,317\,(39.6\%) & 6,016\,(43.1\%) & 0.07 & 1,850\,(44.6\%)& 1,876\,(45.3\%)& 0.01\\
 Arrival to treatment ($>2$ days)$^\dagger$       & 1,041\,(17.8\%) & 3,114\,(22.3\%) & 0.11 & 806\,(19.4\%) & 835\,(20.1\%) & 0.02\\
\hline
\end{tabular}}
\begin{flushleft}
$^\dagger$ \footnotesize
Male is an indicator variable denoting whether the patient is a male. Arrival to culture and Arrival to treatment are indicators for whether arrival preceded culture collection or antibiotic treatment initiation, respectively, by more than two days. Previous ceftazidime culture is an indicator variable denoting whether a ceftazidime culture was taken before baseline.
\end{flushleft}
\end{table}

We now present the results from the models we employed in Section \ref{Sec:data} for data description. 
All models were estimated using the same set of covariates, as in the main analysis.
Table \ref{T2_cox_model} shows the hazard ratio estimates obtained by the Cox regression model for time to death, ignoring infection status, where the observed data were $\tilde{T_2}, \delta_2$, and $\bX$.
% Approximately $2.3\%$ of patients died within one year after baseline, but after the end of follow-up (22 October 2022). These patients were considered as censored (with respect to death time), i.e., $\delta_{i2}=0$, because follow up on the non-terminal events did not extend beyond the end of follow-up.
Estimated $95\%$ Wald-type confidence intervals are also reported. 
We estimated the propensity score via a logistic regression model. Table \ref{Tab:PS_model} shows the odds ratios estimates along with their estimated $95\%$ Wald-type confidence intervals.
Figure \ref{Fig:Estimated propensity score distribution} depicts the estimated propensity score distributions by antibiotic groups.

\begin{table}[H]
\centering
\caption{Cox model for time-to-death analysis ignoring infection times. HR: hazard ratio; CI95\%: 95\% confidence intervals. Cefta: ceftazidime.}
\label{T2_cox_model}
\begin{tabular}{|l|c|c|}
\hline
\textbf{Covariate} & \textbf{HR} & \textbf{CI95\%} \\
\hline
Access $(A=0)$ & 1.00 &  \\
Watch $(A=1)$ & 0.73 & [0.68, 0.79] \\
\hline
\textbf{Demographics} & & \\
Age & 1.23 & [1.14, 1.33] \\
Age$^{2}$ & 0.998 & [0.997, 0.999] \\
Age$^{3}$ & 1.00 & [1.00 1.00] \\
Male$^\dagger$ & 0.89 & [0.83, 0.95] \\
\hline
\textbf{Sample location at baseline} & & \\
Other & 1.00 & \\
Not taken & 0.68 & [0.57, 0.81] \\
Multiple sources & 0.87 & [0.72, 1.04] \\
Urine & 0.64 & [0.54, 0.77] \\
Wound & 0.50 & [0.38, 0.64] \\
Blood & 0.99 & [0.81, 1.20] \\
Sputum & 0.76 & [0.58, 0.99] \\
\hline
\textbf{Arrived from} & & \\
Other & 1.00 & \\
Home & 0.87 & [0.76, 0.99] \\
Institution & 1.28 & [1.11, 1.48] \\
\hline
\textbf{Hospitalization unit} & & \\
Other & 1.00 & \\
Internal & 1.23 & [1.09, 1.39] \\
Surgical & 0.58 & [0.50, 0.66] \\
\hline
\textbf{Medical history} & & \\
Dementia & 0.89 & [0.81, 0.98] \\
CRF & 1.06 & [0.97, 1.16] \\
Immunosuppression & 1.49 & [1.36, 1.63] \\
Diabetes & 1.04 & [0.97, 1.11] \\
Catheter & 1.79 & [1.67, 1.91] \\
Previous antibiotic (any) & 1.32 & [1.22, 1.43] \\
Previous cefta culture (365 days)$^\dagger$ & 1.35 & [1.22, 1.49] \\
\hline
\textbf{Medical information} & & \\
Arrival to culture ($>$2 days)$^\dagger$ & 0.98 & [0.90, 1.06] \\
Arrival to treatment ($>$2 days)$^\dagger$ & 1.07 & [0.99, 1.16] \\
\hline
\end{tabular}

\begin{flushleft}
$^\dagger$ \footnotesize Male is an indicator variable denoting whether the patient is a male. Arrival to culture and Arrival to treatment are indicators for whether arrival preceded culture collection or antibiotic treatment initiation, respectively, by more than two days. Previous ceftazidime culture is an indicator variable denoting whether a ceftazidime culture was taken before baseline.
\end{flushleft}
\end{table}

\begin{table}[ht]
\centering
\small
\caption{Logistic regression estimates from the propensity score model. OR: odds ratio; CI95\%:  95\% confidence intervals. Cefta: ceftazidime.}
\label{Tab:PS_model}
\begin{tabular}{|l|c|c|}
\hline
\textbf{Covariate} & \textbf{OR} & \textbf{CI95\%} \\
% \hline
% Intercept & 0.61 & [0.49, 0.77] \\
\hline
\textbf{Demographics} & & \\
Age & 0.92 & [0.90, 0.93] \\
Age$^2$ & 1.002
 & [ 1.0017, 1.0023] \\
Age$^3$ & 0.99999 & [0.99998, 0.99999] \\
Male$^\dagger$ & 1.35 & [1.26, 1.46] \\
\hline
\textbf{Baseline sample location} & & \\
Other & 1.00 & \\
Not taken & 1.39 & [1.18, 1.64] \\
Multiple sources & 2.26 & [1.86, 2.75] \\
Urine & 4.88 & [4.09, 5.81] \\
Wound & 0.24 & [0.19, 0.29] \\
Blood & 2.91 & [2.32, 3.65] \\
Sputum & 0.83 & [0.64, 1.09] \\
\hline
\textbf{Arrived from} & & \\
Other & 1.00 & \\
Home & 0.75 & [0.65, 0.85] \\
Institution & 0.49 & [0.42, 0.58] \\
\hline
\textbf{Hospitalization unit} & & \\
Other & 1.00 & \\
Internal & 7.84 & [6.98, 8.80] \\
Surgical & 5.99 & [5.33, 6.75] \\
\hline
\textbf{Medical history} & & \\
Dementia & 0.85 & [0.73, 0.98] \\
CRF & 1.15 & [1.01, 1.31] \\
Immunosuppression & 1.07 & [0.94, 1.22] \\
Diabetes & 0.80 & [0.73, 0.88] \\
Catheter & 1.33 & [1.19, 1.48] \\
Previous antibiotic use & 0.91 & [0.82, 1.01] \\
Previous cefta culture (365 days)$^\dagger$ & 0.78 & [0.69, 0.88] \\
\hline
\textbf{Medical information} & & \\
Arrival to culture ($>$ 2 days)$^\dagger$ & 1.14 & [1.04, 1.25] \\
Arrival to treatment ($>$ 2 days)$^\dagger$ & 1.44 & [1.31, 1.59] \\
\hline
\end{tabular}
\begin{flushleft}
$^\dagger$ \footnotesize Male is an indicator variable denoting whether the patient is a male. Arrival to culture and Arrival to treatment are indicators for whether arrival preceded culture collection or antibiotic treatment initiation, respectively, by more than two days. Previous ceftazidime culture is an indicator variable denoting whether a ceftazidime culture was taken before baseline.
\end{flushleft}
\end{table}

\clearpage

\begin{figure}[H]
\caption{Estimated propensity score distribution by antibiotic-treatment group. }
\label{Fig:Estimated propensity score distribution}
%\centering
\includegraphics[scale=0.59]{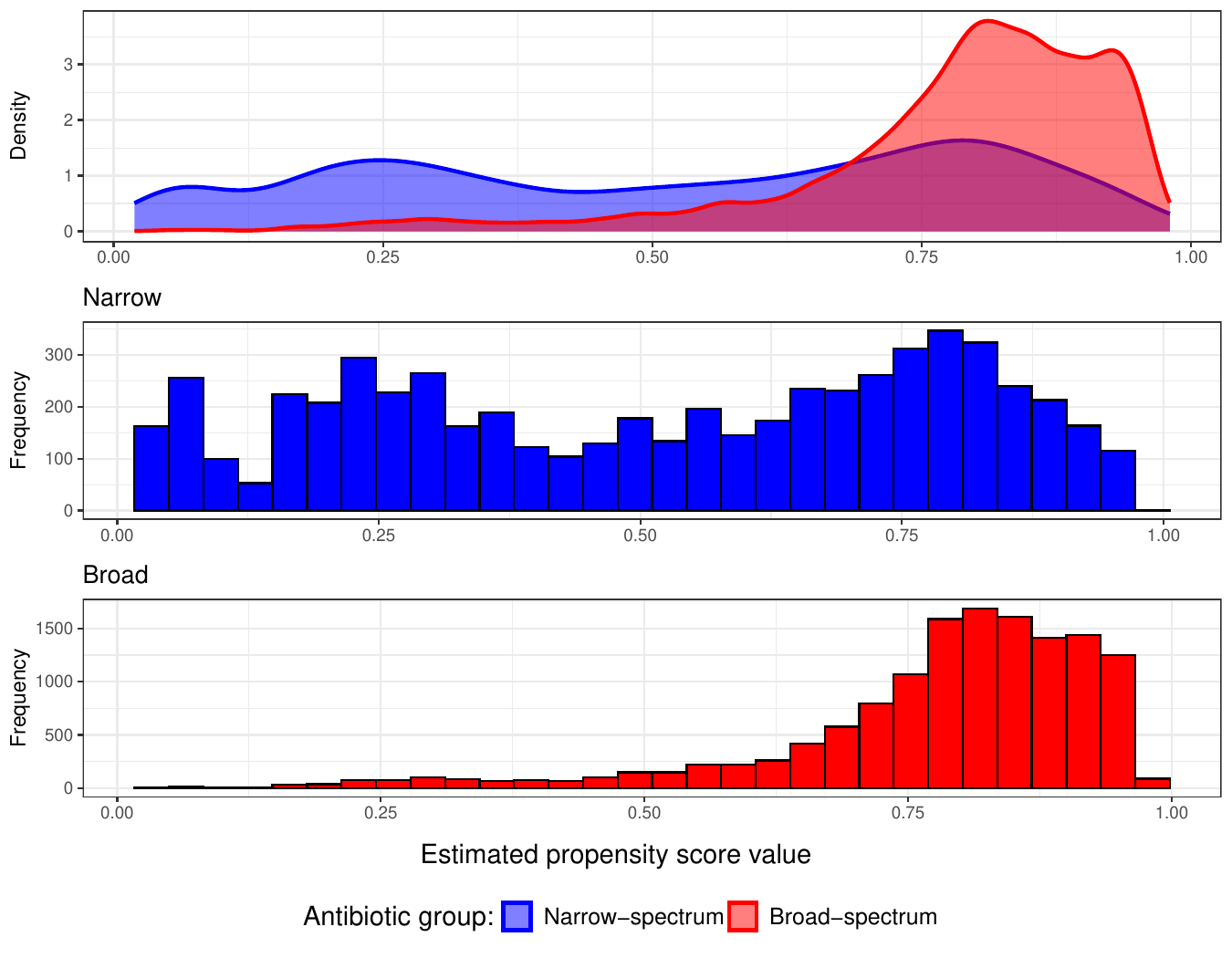}
 \end{figure}
 
\section{Time-varying subpopulations}
\label{AppSec:timevaryingsubpop}

Beyond the stratification based on the one year post-treatment period, we can define different subpopulations as a function of other post-treatment periods $r$, with $r=1$ being the special case analyzed in the main text.
%For each patient $i=1,...,n$ in the sample, denote $T_{i1}(a)$ and $T_{i2}(a)$ as the potential times to resistant infection and death, respectively,
%had we set the antibiotic treatment to $A=a$.
%Building on the notations in Section \ref{Sec:Setup}, for a specific $r$ value, 
%we set $T^r_{i1}(a) = \infty$
%whenever a resistant infection has not occurred within the $r$-length period under treatment level $a$ (instead of setting $r=1$). 
Let
$S_i(a,r)=\mathbbm{1}{ \{T_{i2}(a) > r\} }$, and
$I_i(a,r)=\mathbbm{1}{ \{T_{i1}(a) \le r\} }$ 
be the survival and the resistant infection indicators, respectively, 
within the $r$-length time period
under treatment level $a$.
%For 
%a given value of $r$, 
%$I_i(r,a)=1$ is equivalent %to $T_{i1}(a) \le T_{i2}(a)$.

% The strata 
% $\{i: S_i(r,0) = 1,  S_i(r,1) = 1\}$, and 
% $\{i: I_i(r,0)) = 1,  I_i(r,1) = 1\}$ 
% are the \textit{r-always-survivors} and the \textit{r-always-infected}, respectively.
For $a=0,1$,  let  \textit{ios(r,a)} $= \{i : I_i(r,a)=1 \cup S_i(r,a)=1 \}$.
The subpopulation of patients that would have either acquired an infection or survived within the r-length period under both antibiotic treatments, which we term the \textit{r-infected-or-survivors}, is then defined by 
$ios_r= 
\{i : i \in ios(r,0) \cap ios(r,1)\}$, with stratum proportion $\pi_{ios_r}$.

The proposed estimands can be then redefined to consider these time-varying subpopulations.
For instance, the  TV-FICE$(t,r)$, the effect  
among the $ios_r$ stratum, is defined (on the difference scale) by 
\begin{equation}
\label{TV_FICE_definition}
    \text{TV-FICE}(t,r) = \Pr[T_1(1) \leq t \ |\ ios_r] - \Pr[T_1(0) \leq t | ios_r],    
\end{equation}
for $t\le r$.
The FICE$(t)$, which we focus on in this paper, can be viewed as a special case of the TV-FICE$(t,r)$, taking $r=1$.

The proofs for the bounds can be generalized for each time-period $r$ for which the ORP-type assumptions hold  (see a short discussion in Section \ref{AppSubSec:Bounds}).
The extension for the identification under the frailty assumptions does not require any adjustment.

\section{Patient types illustration}
\label{Appsec:patient types}
Figure \ref{fig:patient_types_illustration}
 provides an example of the possible event times of each patient type described in Table \ref{Tab:stratum_POs_values} of the main text.
For simplicity of presentation only, in the examples below, if event $j$ occurs under the Access antibiotic ($a=0$), then $T_j(0) < T_j(1)$. Further, death times occur only after the last infection time (under both antibiotics). 
Of course, the ordering of the event times might be different as long as the quadruple of potential outcomes defining each patient type is preserved. 
For instance, $T_1(1)$ and/or $T_2(1)$ might be lower than $T_1(0)$.

\begin{figure}[H]
\captionsetup{justification=raggedright, singlelinecheck=false}
\caption{Patient types illustration.}
\label{fig:patient_types_illustration}
\centering
\resizebox{0.8\linewidth}{!}{
\begin{tikzpicture}[
y=1cm, x=2cm, 
thick, font=\normalsize] 
\usetikzlibrary{arrows,decorations.pathreplacing}
\tikzset{
	brace_top/.style={
		decoration={brace},
		decorate
	},
	brace_bottom/.style={
		decoration={brace, mirror},
		decorate
	}
}
\draw[black,line width=1.2pt, ->, >=latex'](0,-1) -- coordinate (x axis) (6,-1) node[right] {}
node[black,left = 12cm] {\large $pt=1$};
%\draw (1,-0.9) -- (1,-1.1) node[below] {$t_1$};
%\draw (2,-0.9) -- (2,-1.1) node[color=blue,below] {$t_1$};
%\draw (4,-0.9) -- (4,-1.1) node[color=blue,below] {$t_2$};
%\draw (0.5,-0.9) -- (0.5,-1.1) node[color = black, below=3pt] {$T_{1i}(1)$};
\draw (2,-0.9) -- (2,-1.1) node[color = black, below=3pt] {$T_{2i}(0)$};
%\draw (3.5,-0.9) -- (3.5,-1.1) node[color = black, below=3pt] %{$T_{1i}(0)$};
\draw (4,-0.9) -- (4,-1.1) node[color = black, below=3pt] {$T_{2i}(1)$};
\draw (5,-0.9) -- (5,-1.1) node[color = black, below=3pt] {$t=1$};

\draw[black,line width=1.2pt, ->, >=latex'](0,-2.5) -- coordinate (x axis) (6,-2.5) node[right] {}
node[black,left = 12cm] {\large $pt=2$};
%\draw (1,-0.9) -- (1,-1.1) node[below] {$t_1$};
% \draw (2,-2.4) -- (2,-2.6) node[color=blue,below] {$t_1$};
% \draw (4,-2.4) -- (4,-2.6) node[color=blue,below] {$t_2$};
\draw (0.5,-2.4) -- (0.5,-2.6) node[color = black, below=3pt] {$T_{1i}(0)$};
\draw (2,-2.4) -- (2,-2.6) node[color = black, below=3pt] {$T_{2i}(0)$};
\draw (4,-2.4) -- (4,-2.6) node[color = black, below=3pt] {$T_{2i}(1)$};
\draw (5,-2.4) -- (5,-2.6) node[color = black, below=3pt] {$t=1$};

\draw[black,line width=1.2pt, ->, >=latex'](0,-4) -- coordinate (x axis) (6,-4) node[right] {}
node[black,left = 12cm] {\large $pt=3$};
%\draw (1,-0.9) -- (1,-1.1) node[below] {$t_1$};
% \draw (2,-3.9) -- (2,-4.1) node[color=blue,below] {$t_1$};
% \draw (4,-3.9) -- (4,-4.1) node[color=blue,below] {$t_2$};
% \draw (0.5,-3.9) -- (0.5,-4.1) node[color = black, below=3pt] {$T_{1i}(0)$};
% \draw (2.5,-3.9) -- (2.5,-4.1) node[color = black, below=3pt] {$T_{1i}(1)$};
\draw (4,-3.9) -- (4,-4.1) node[color = black, below=3pt] {$T_{2i}(1)$};
\draw (5,-3.9) -- (5,-4.1) node[color = black, below=3pt] {$t=1$};

\draw[black,line width=1.2pt, ->, >=latex'](0,-5.5) -- coordinate (x axis) (6,-5.5) node[right] {}
node[black,left = 12cm] {\large $pt=4$};
%\draw (1,-0.9) -- (1,-1.1) node[below] {$t_1$};
% \draw (2,-2.4) -- (2,-2.6) node[color=blue,below] {$t_1$};
% \draw (4,-2.4) -- (4,-2.6) node[color=blue,below] {$t_2$};
\draw (0.5,-5.4) -- (0.5,-5.6) node[color = black, below=3pt] {$T_{1i}(1)$};
\draw (2,-5.4) -- (2,-5.6) node[color = black, below=3pt] {$T_{2i}(0)$};
\draw (4,-5.4) -- (4,-5.6) node[color = black, below=3pt] {$T_{2i}(1)$};
\draw (5,-5.4) -- (5,-5.6) node[color = black, below=3pt] {$t=1$};

\draw[black,line width=1.2pt, ->, >=latex'](0,-7) -- coordinate (x axis) (6,-7) node[right] {}
node[black,left = 12cm] {\large $pt=5$};
%\draw (1,-0.9) -- (1,-1.1) node[below] {$t_1$};
% \draw (2,-3.9) -- (2,-4.1) node[color=blue,below] {$t_1$};
% \draw (4,-3.9) -- (4,-4.1) node[color=blue,below] {$t_2$};
% \draw (0.5,-3.9) -- (0.5,-4.1) node[color = black, below=3pt] {$T_{1i}(0)$};
% \draw (2.5,-3.9) -- (2.5,-4.1) node[color = black, below=3pt] {$T_{1i}(1)$};
\draw (4,-6.9) -- (4,-7.1) node[color = black, below=3pt] {$T_{2i}(0)$};
\draw (5,-6.9) -- (5,-7.1) node[color = black, below=3pt] {$t=1$};

\draw[black,line width=1.2pt, ->, >=latex'](0,-8.5) -- coordinate (x axis) (6,-8.5) node[right] {}
node[black,left = 12cm] {\large $pt=6$};
%\draw (1,-0.9) -- (1,-1.1) node[below] {$t_1$};
\draw (0.5,-8.4) -- (0.5,-8.6) node[color = black, below=3pt] {$T_{1i}(0)$};
%\draw (3,-4.9) -- (3,-5.1) node[color = black, below=3pt] {$T_{1i}(1)$};
\draw (2,-8.4) -- (2,-8.6) node[color = black, below=3pt] {$T_{2i}(1)$};
\draw (5,-8.4) -- (5,-8.6) node[color = black, below=3pt] {$t=1$};

\draw[black,line width=1.2pt, ->, >=latex'](0,-10) -- coordinate (x axis) (6,-10) node[right] {}
node[black,left = 12cm] {\large $pt=7$};
%\draw (1,-0.9) -- (1,-1.1) node[below] {$t_1$};
\draw (0.5,-9.9) -- (0.5,-10.1) node[color = black, below=3pt] {$T_{1i}(1)$};
\draw (2,-9.9) -- (2,-10.1) node[color = black, below=3pt] {$T_{2i}(0)$};
%\draw (3,-4.9) -- (3,-5.1) node[color = black, below=3pt] {$T_{1i}(1)$};
\draw (5,-9.9) -- (5,-10.1) node[color = black, below=3pt] {$t=1$};

\draw[black,line width=1.2pt, ->, >=latex'](0,-11.5) -- coordinate (x axis) (6,-11.5) node[right] {}
node[black,left = 12cm] {\large $pt=8$};
%\draw (1,-0.9) -- (1,-1.1) node[below] {$t_1$};
\draw (0.5,-11.4) -- (0.5,-11.6) node[color = black, below=3pt] {$T_{1i}(0)$};
\draw (2,-11.4) -- (2,-11.6) node[color = black, below=3pt] {$T_{2i}(0)$};
%\draw (3,-4.9) -- (3,-5.1) node[color = black, below=3pt] {$T_{1i}(1)$};
\draw (5,-11.4) -- (5,-11.6) node[color = black, below=3pt] {$t=1$};

\draw[black,line width=1.2pt, ->, >=latex'](0,-13) -- coordinate (x axis) (6,-13) node[right] {}
node[black,left = 12cm] {\large $pt=9$};
%\draw (1,-0.9) -- (1,-1.1) node[below] {$t_1$};
\draw (0.5,-12.9) -- (0.5,-13.1) node[color = black, below=3pt] {$T_{1i}(0)$};
\draw (2,-12.9) -- (2,-13.1) node[color = black, below=3pt] {$T_{2i}(0)$};
%\draw (3,-4.9) -- (3,-5.1) node[color = black, below=3pt] {$T_{1i}(1)$};
\draw (5,-12.9) -- (5,-13.1) node[color = black, below=3pt] {$t=1$};

\draw[black,line width=1.2pt, ->, >=latex'](0,-14.5) -- coordinate (x axis) (6,-14.5) node[right] {}
node[black,left = 12cm] {\large $pt=10$};
%\draw (1,-0.9) -- (1,-1.1) node[below] {$t_1$};
\draw (0.5,-14.4) -- (0.5,-14.6) node[color = black, below=3pt] {$T_{1i}(0)$};
\draw (1.25,-14.4) -- (1.25,-14.6) node[color = black, below=3pt] {$T_{1i}(1)$};
\draw (2,-14.4) -- (2,-14.6) node[color = black, below=3pt] {$T_{2i}(0)$};
\draw (4,-14.4) -- (4,-14.6) node[color = black, below=3pt] {$T_{2i}(1)$};
%\draw (3,-4.9) -- (3,-5.1) node[color = black, below=3pt] {$T_{1i}(1)$};
\draw (5,-14.4) -- (5,-14.6) node[color = black, below=3pt] {$t=1$};

\draw[black,line width=1.2pt, ->, >=latex'](0,-16) -- coordinate (x axis) (6,-16) node[right] {}
node[black,left = 12cm] {\large $pt=11$};
%\draw (1,-0.9) -- (1,-1.1) node[below] {$t_1$};
\draw (5,-15.9) -- (5,-16.1) node[color = black, below=3pt] {$t=1$};

\draw[black,line width=1.2pt, ->, >=latex'](0,-17.5) -- coordinate (x axis) (6,-17.5) node[right] {}
node[black,left = 12cm] {\large $pt=12$};
%\draw (1,-0.9) -- (1,-1.1) node[below] {$t_1$};
\draw (0.5,-17.4) -- (0.5,-17.6) node[color = black, below=3pt] {$T_{1i}(0)$};
\draw (1.25,-17.4) -- (1.25,-17.6) node[color = black, below=3pt] {$T_{1i}(1)$};
\draw (4,-17.4) -- (4,-17.6) node[color = black, below=3pt] {$T_{2i}(1)$};
%\draw (3,-4.9) -- (3,-5.1) node[color = black, below=3pt] {$T_{1i}(1)$};
\draw (5,-17.4) -- (5,-17.6) node[color = black, below=3pt] {$t=1$};

\draw[black,line width=1.2pt, ->, >=latex'](0,-19) -- coordinate (x axis) (6,-19) node[right] {}
node[black,left = 12cm] {\large $pt=13$};
%\draw (1,-0.9) -- (1,-1.1) node[below] {$t_1$};
\draw (0.5,-18.9) -- (0.5,-19.1) node[color = black, below=3pt] {$T_{1i}(0)$};
\draw (1.25,-18.9) -- (1.25,-19.1) node[color = black, below=3pt] {$T_{1i}(1)$};
\draw (4,-18.9) -- (4,-19.1) node[color = black, below=3pt] {$T_{2i}(0)$};
%\draw (3,-4.9) -- (3,-5.1) node[color = black, below=3pt] {$T_{1i}(1)$};
\draw (5,-18.9) -- (5,-19.1) node[color = black, below=3pt] {$t=1$};

\draw[black,line width=1.2pt, ->, >=latex'](0,-20.5) -- coordinate (x axis) (6,-20.5) node[right] {}
node[black,left = 12cm] {\large $pt=14$};
%\draw (1,-0.9) -- (1,-1.1) node[below] {$t_1$};
\draw (0.5,-20.4) -- (0.5,-20.6) node[color = black, below=3pt] {$T_{1i}(0)$};
\draw (5,-20.4) -- (5,-20.6) node[color = black, below=3pt] {$t=1$};

\draw[black,line width=1.2pt, ->, >=latex'](0,-22) -- coordinate (x axis) (6,-22) node[right] {}
node[black,left = 12cm] {\large $pt=15$};
%\draw (1,-0.9) -- (1,-1.1) node[below] {$t_1$};
\draw (0.5,-21.9) -- (0.5,-22.1) node[color = black, below=3pt] {$T_{1i}(1)$};
\draw (5,-21.9) -- (5,-22.1) node[color = black, below=3pt] {$t=1$};

\draw[black,line width=1.2pt, ->, >=latex'](0,-23.5) -- coordinate (x axis) (6,-23.5) node[right] {}
node[black,left = 12cm] {\large $pt=16$};
%\draw (1,-0.9) -- (1,-1.1) node[below] {$t_1$};
\draw (0.5,-23.4) -- (0.5,-23.6) node[color = black, below=3pt] {$T_{1i}(0)$};
\draw (1.5,-23.4) -- (1.5,-23.6) node[color = black, below=3pt] {$T_{1i}(1)$};
\draw (5,-23.4) -- (5,-23.6) node[color = black, below=3pt] {$t=1$};
\end{tikzpicture}
}
\end{figure}
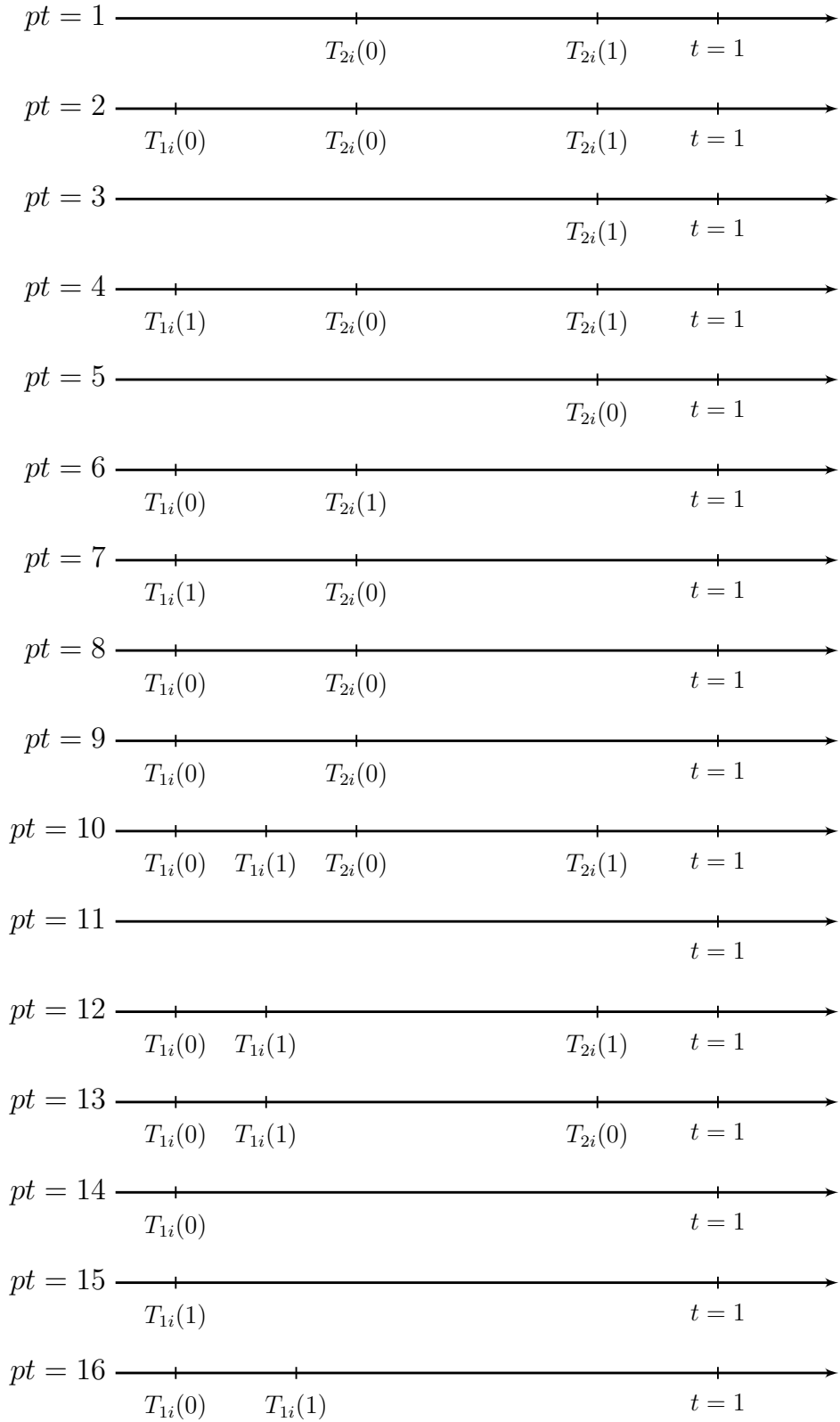

\section{Numerical examples}
\label{AppSec:simulations}
In this section, we present the results of a second simulation scenario we considered, and provide the parameter values and other relevant information about our numerical studies comparing the different estimands in synthetic DGMs.
For both scenarios, the DGM was composed of hazard models, described by the Equations given by \eqref{conditional hazard models DGM}, with one continuous covariate and with
Weibull baseline hazards.
The bivariate frailty variable $\bgamma = (\gamma_0, \gamma_1)$ was Gamma distributed,
with $\theta := \theta_0 = \theta_1$. 
Scenarios were determined by the scale and shape parameters that govern the Weibull distribution of the baseline hazard functions, and the regression coefficients.
The parameter values of each scenario are given in Table \ref{Tab:simulations_parameters_scenarios_all_scenarios}. For each scenario, we calculated the different causal estimands for several $\theta$ and $\rho$ values.

Figure \ref{Fig:initial_simulations_estimands_comparison_scenB} depicts the estimands under the new scenario (Scenario B).
Compared with Scenario A, where the always-infected proportion was negligible (approximately $1\%$), in Scenario B this proportion was remarkably higher. For $\theta=1$, the always-infected proportion ranged from $61\%$ to $63\%$. For $\theta=3$, although lower than under $\theta=1$, it remained relatively high, varying between $35\%$ and $44\%$.
Importantly, the proportion of the ios stratum differed substantially from the proportion of always-survivors.
The former was approximately $78\%$ (under $\theta=1$) and $82\%$ (under $\theta=3$), whereas the latter was lower than in Scenario A, ranging between $23\%$ and $29\%$ (under $\theta=1$) and between $39\%$ and $49\%$ (under $\theta=3$).
Consequently, the FICE and the SACE diverged, in contrast to Scenario A where the ios and the always-survivors strata proportions were similar, and these strata substantially overlapped. 

Interestingly, in Scenario B, the SACE remained zero throughout the entire one-year period, while both the FICE and the AICE were positive; their values increased during the first two months, and then declined. The magnitudes of these estimands diminished as $\rho$ increased, with a more pronounced decrease under the higher $\theta$ value. 
This scenario illustrates that the value and trajectory over time of one estimand can diverge from those of the others, even when all underlying subpopulations are relatively large, and that the FICE is not necessarily monotone in time.

\begin{table}[H]
\footnotesize
\captionsetup{justification=raggedright, singlelinecheck=false}
\caption{Parameters for the data-generating mechanisms under Scenario A (presented in the main text) and Scenario B (presented in the Appendix), each with $\theta = 1,3$ and $\rho = 0,0.5,1$. In every scenario, parameters for the Access group ($A=0$) are shown in the first three rows, and parameters for the Watch group ($A=1$) are shown in the last three rows.}
\label{Tab:simulations_parameters_scenarios_all_scenarios}
\fbox{%
\begin{tabular}{l|llll}
\textbf{Scenario}  
  & \textbf{Baseline hazard shapes} 
  & \textbf{Baseline hazard scales} 
  & \textbf{Covariate coefficients} 
  & \\[0.3em]
\hline\\[-0.8em]
\multirow{6}{*}{A}  & $\tilde{\alpha}^0_{01} = 2.50$ & $\tilde{\mu}^0_{01} = 2.50$ & $\bbeta^0_{01} = (0.00,  -0.69)$ & \\[0.5em]
&  $\tilde{\alpha}^0_{02} = 2.10$ & $\tilde{\mu}^0_{02} = 2.25$ & $\bbeta^0_{02} = (0.00,  0.69)$ &   \\[0.5em]
&  $\tilde{\alpha}^0_{12} = 2.10$ & $\tilde{\mu}^0_{12} = 2.75$ & $\bbeta^0_{12} = (-0.69,  0.69)$ & \\[0.5em]
&  $\tilde{\alpha}^1_{01} = 2.50$ & $\tilde{\mu}^1_{01} = 2.00$ & $\bbeta^1_{01} = (-1.39, 1.10)$ &  \\[0.5em]
&   $\tilde{\alpha}^1_{02} = 2.10$ & $\tilde{\mu}^1_{02} = 2.75$ & $\bbeta^1_{02} = (-0.29,  0.41)$ &   \\[0.5em]
&  $\tilde{\alpha}^1_{12} = 2.10$ & $\tilde{\mu}^1_{12} = 2.25$ & $\bbeta^1_{12} = (0.00,  0.00)$ &  \\[0.5em]
 \hline\\[-0.8em]
\multirow{6}{*}{B} &  $\tilde{\alpha}^0_{01} = 1.00$ & $\tilde{\mu}^0_{01} = 0.10$ & $\bbeta^0_{01} = (0.00,  -0.69)$ & \\[0.5em]
&  $\tilde{\alpha}^0_{02} = 0.50$ & $\tilde{\mu}^0_{02} = 1.00$ & $\bbeta^0_{02} = (0.00,  -0.69)$ &   \\[0.5em]
& $\tilde{\alpha}^0_{12} = 0.50$ & $\tilde{\mu}^0_{12} = 1.00$ & $\bbeta^0_{12} = (0.00,  -0.69)$ & \\[0.5em]
&  $\tilde{\alpha}^1_{01} = 1.00$ & $\tilde{\mu}^1_{01} = 0.10$ & $\bbeta^1_{01} = (0.00,  0.69)$ &  \\[0.5em]
&  $\tilde{\alpha}^1_{02} = 3.00$ & $\tilde{\mu}^1_{02} = 1.00$ & $\bbeta^1_{02} = (0.00,  0.69)$ &   \\[0.5em]
&  $\tilde{\alpha}^1_{12} = 3.00$ & $\tilde{\mu}^1_{12} = 1.00$ & $\bbeta^1_{12} = (0.00,  0.69)$ &  \\[0.5em]
\end{tabular}}
\end{table}

\begin{figure}[H]
\centering
\caption{Comparison of different estimands for a synthetic data-generating mechanism over one year following antibiotic treatment, Scenario B.
}
\label{Fig:initial_simulations_estimands_comparison_scenB}
\includegraphics[scale=0.53] % 0.475
{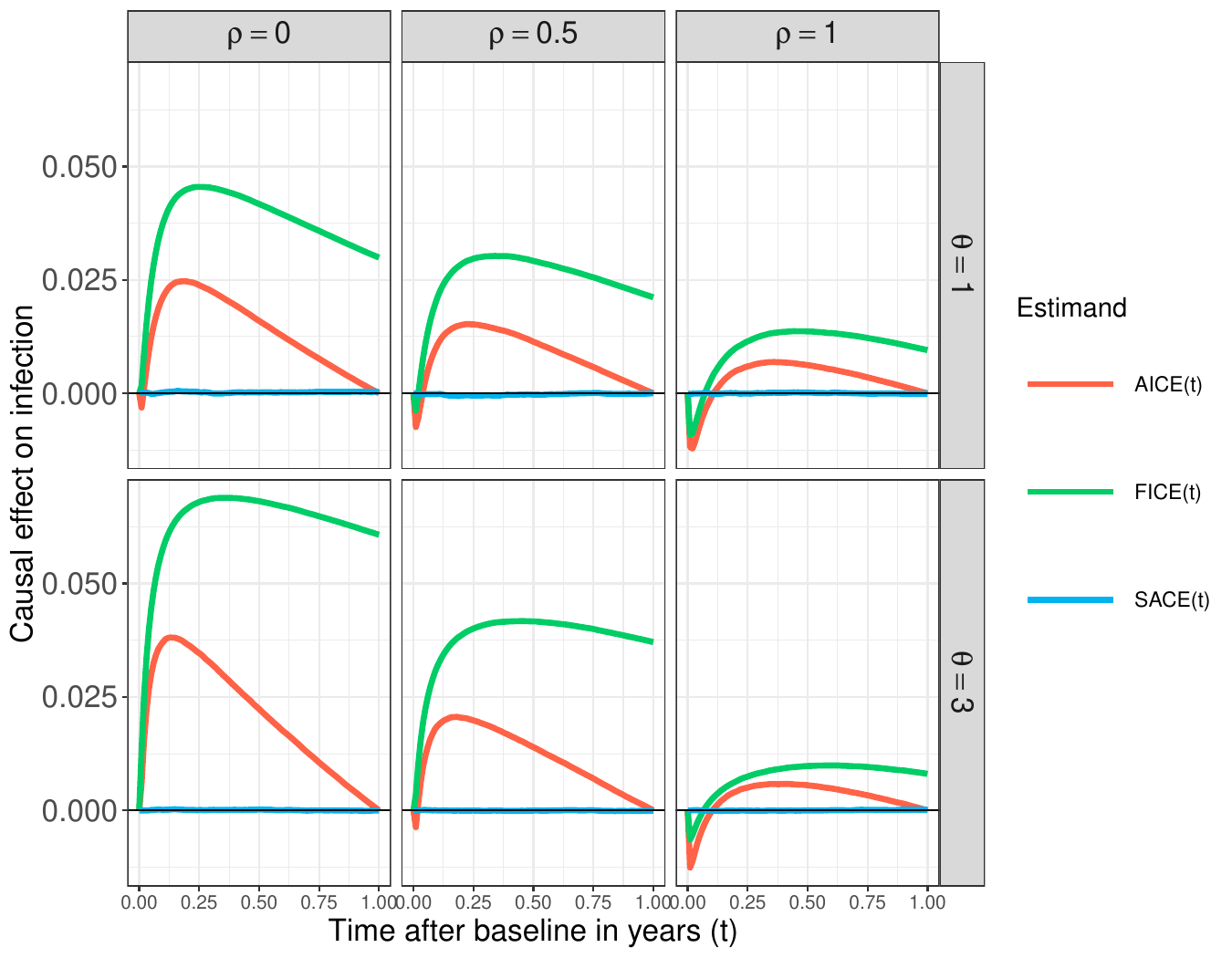}
\end{figure}

\section{FICE identification}
\label{AppSec:Identification}

\subsection{Lemma for FICE identification}
\label{AppSubSec:lemFICE}

The following Lemma will be useful for the construction of the large-sample bounds and for identification under the frailty assumption. We provide the results for general $r$ as defined in Section \ref{AppSec:timevaryingsubpop}. For constructing the bounds given in the main text, we will employ in Section \ref{AppSubSec:Bounds} this lemma with $r=1$. For the frailty-based identification, we will employ the general version of the lemma in Section \ref{Subsec:IdentificationFrailty}.
First, let
\begin{equation}
\label{AppEq:piIOSr}
\pi_{ios_r} =\Pr\Big(\big\{\{T_1(0) \le r\} \cup \{T_2(0) > r\}\big\} \cap \big\{\{T_1(1) \le r\big\} \cup \big\{T_2(1) > r\}\big\}\Big)
\end{equation}
be the proportion of $ios_r$ defined in Section \ref{AppSec:timevaryingsubpop}.
We are now ready to present and prove the lemma.
\begin{lemma}
\label{lem:NumerDenomFICE} For $a=0,1$, and for every $t\le r$,
\begin{equation*}
%\label{Eq:lemIdent}
\Pr\big(T_1(a) \le t\  \vert \  ios_r\big) 
 = \frac{\Pr\Big(\big\{T_1(a) \le t\big\} \cap\big\{\{T_1(1-a) \le r\} \cup \{T_2(1-a) > r\}\big\}\Big)}
{\pi_{ios_r}}
%{\Pr\Big(\big\{\{T_1(0) \le r\} \cup \{T_2(0) > r\}\big\} \cap \big\{\{T_1(1) \le r\big\} \cup \big\{T_2(1) > r\}\big\}\Big)}. 
\end{equation*}
\end{lemma}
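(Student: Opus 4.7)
The plan is to work directly from the definition of conditional probability and then simplify the numerator using the set-containment implied by $t \le r$.

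First, I would write
\begin{equation*}
\Pr\big(T_1(a) \le t \ \vert\ ios_r\big) \;=\; \frac{\Pr\big(\{T_1(a) \le t\} \cap ios_r\big)}{\pi_{ios_r}},
\end{equation*}
using the definition of $\pi_{ios_r}$ in \eqref{AppEq:piIOSr}. Next, I would expand the event $ios_r$ as the intersection of two ``world-specific'' events,
\begin{equation*}
ios_r \;=\; \big\{\{T_1(a) \le r\} \cup \{T_2(a) > r\}\big\} \;\cap\; \big\{\{T_1(1-a) \le r\} \cup \{T_2(1-a) > r\}\big\},
\end{equation*}
which is valid for either choice of $a \in \{0,1\}$.

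The key observation is that since $t \le r$, we have the set inclusion $\{T_1(a) \le t\} \subseteq \{T_1(a) \le r\} \subseteq \{T_1(a) \le r\} \cup \{T_2(a) > r\}$. Consequently, intersecting $\{T_1(a) \le t\}$ with the $a$-world component of $ios_r$ leaves $\{T_1(a) \le t\}$ unchanged, and the numerator collapses to
\begin{equation*}
\Pr\Big(\{T_1(a) \le t\} \cap \big\{\{T_1(1-a) \le r\} \cup \{T_2(1-a) > r\}\big\}\Big),
\end{equation*}
which is exactly the claimed expression. Dividing by $\pi_{ios_r}$ finishes the argument.

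There is no real obstacle here; this is a short set-algebra manipulation. The only point requiring care is ensuring that the simplification is applied to the correct (same-$a$) component of the $ios_r$ event, and that the condition $t \le r$ is explicitly invoked to justify dropping the redundant ``own-world'' restriction. The result is used later both to construct the large-sample bounds (taking $r=1$) and, in its present general form, for identification under the frailty assumptions.
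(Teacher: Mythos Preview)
Your proposal is correct and follows essentially the same route as the paper: apply the definition of conditional probability, expand $ios_r$ into its two world-specific components, and use $t \le r$ to absorb the own-world component via $\{T_1(a) \le t\} \subseteq \{T_1(a) \le r\} \cup \{T_2(a) > r\}$. If anything, your set-inclusion justification is stated more cleanly than the paper's.
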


\begin{proof}
For $a=0,1$ and for $t\le r$,%\begin{sizeddisplay}{\footnotesize}
\begin{align}
\begin{split}
\label{Eq:lemma1}
&\Pr\big(T_1(a) \le t\ |\  ios_r\big)\\[1em]
& = \frac{\Pr\Big(\big\{T_1(a) \le t \big\} \cap \big\{\{T_1(a) \le r\} \cup\{ T_2(a) > r \} \big\} \cap \big\{\{T_1(1-a) \le r\} \cup \{T_2(1-a) > r\}\big\}\Big)}
{\pi_{ios_r}}
\\[1em]
& = \frac{\Pr\Big(\big\{T_1(a) \le t\big\} \cap \big\{\{T_1(1-a) \le r\} \cup \{T_2(1-a) > r\}\big\}\Big)}
{\pi_{ios_r}}
\end{split}
\end{align}
where the second line is by the definition given in Equation \ref{AppEq:piIOSr}, and in the third line we used in the numerator that 
$\{T_1(a) \le t\} \cup \big\{\{T_1(a) \le r\} \cap \{T_2(a) > r\}\big\}=\{T_1(a) \le t\}$ because $r\ge t$ (so $\{T_1(a) \le t\}$ implies $\{T_1(a) \le r\}$. 
\end{proof}

\subsection{Results on ORP Assumptions}
\label{AppSubSec:Assumptions}

The following lemma compares the different estimands and asserts that weak-ORP is indeed the weakest of the three ORP-type assumptions.

\begin{lemma}
\label{lem:orp_assumptions}
Weak-ORP is a weaker assumption than ORP and weaker than ios-ORP.
\end{lemma}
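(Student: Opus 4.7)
The plan is to verify the two implications (ORP $\Rightarrow$ weak-ORP and ios-ORP $\Rightarrow$ weak-ORP) directly from the definitions of the three assumptions, and then to argue that each implication is strict by exhibiting patient types that are excluded by the stronger assumption but not by weak-ORP. Since all three assumptions are universally quantified over individuals, it suffices to check the implication at the level of a single unit $i$.

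First I would show ORP $\Rightarrow$ weak-ORP. Fix $i$ and suppose $I_i(0)=1$. By ORP (Assumption \ref{assum:ORP}), $I_i(1)=1$. By the definition of $ios(1)=\{i:\{I_i(1)=1\}\cup\{S_i(1)=1\}\}$, having $I_i(1)=1$ immediately places $i\in ios(1)$, which is the conclusion of weak-ORP (Assumption \ref{assum:weak-ORP}). Next I would show ios-ORP $\Rightarrow$ weak-ORP. Again fix $i$ with $I_i(0)=1$. By definition of $ios(0)$, this membership gives $i\in ios(0)$. Applying ios-ORP (Assumption \ref{assum:ios-ORP}) yields $i\in ios(1)$, which is again the conclusion of weak-ORP.

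To see that both implications are strict, I would appeal to Table \ref{Tab:stratum_POs_values}. A population containing only patient type $pt=8$ (where $I(0)=1,S(0)=0,I(1)=0,S(1)=1$) satisfies weak-ORP (since $I(0)=1$ and $S(1)=1$ place the patient in $ios(1)$) but violates ORP (because $I(1)=0$). Similarly, a population containing only $pt=3$ (where $I(0)=0,S(0)=1,I(1)=0,S(1)=0$) trivially satisfies weak-ORP (the hypothesis $I(0)=1$ is vacuous) but violates ios-ORP (since $i\in ios(0)$ but $i\notin ios(1)$). Hence neither ORP nor ios-ORP is implied by weak-ORP.

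There is essentially no technical obstacle: the statement is a tautological consequence of the definitions, and the only subtlety is being explicit that the strictness direction requires the existence of admissible patient types, which Table \ref{Tab:stratum_POs_values} supplies. The entire argument should fit in a few lines.
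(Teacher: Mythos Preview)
Your proposal is correct and matches the paper's proof essentially step for step: the same direct derivations of the two implications from the definitions, and the same counterexamples ($pt=8$ for ORP and $pt=3$ for ios-ORP) to establish strictness.
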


\begin{proof}
First, we show that ORP implies weak-ORP. From ORP, $I(0)=1$ implies $I(1)=1$. By the definition of $ios(1)$, if $I_i(1)=1$, then $i \in ios(1)$. Putting it together, under ORP we have that  for all $i$, $I_i(0)=1$ implies $i\in ios(1)$.
% \begin{align}
%    & I(0)=1 \Rightarrow I(1)=1 \Rightarrow ios(1)=1, \nonumber
% \end{align}
% where the first move is by ORP.
To show that weak-ORP does not imply ORP, we provide a counterexample. Patient type $pt=8$ is not excluded by weak-ORP, but is excluded by ORP. Therefore, if $pt=8$ exists, weak-ORP holds, whereas OPR does not, i.e., weak-ORP does not imply ORP. 

We now show that ios-ORP implies weak-ORP. From the definition of $ios(0)$, if $I_i(0)=1$ then $i \in ios(0)$. By ios-ORP, this means that $i \in ios(1)$.
We obtain that under ios-ORP, $I_i(0)=1$ implies $i\in ios(1)$ for all $i$.
To show that weak-ORP does not imply ios-ORP, we provide another counterexample. Patient type $pt=3$ is not excluded by weak-ORP, but is excluded by ios-ORP. Therefore, if $pt=3$ exists, weak-ORP holds, whereas ios-OPR does not, i.e., weak-ORP does not imply ios-ORP.
\end{proof}

% \begin{align}
%    & I(0)=1 \Rightarrow ios(0)=1 \Rightarrow ios(1)=1, \nonumber
% \end{align}
% where the second move is by ios-ORP.
% \end{proof}

% \begin{lemma}
% \label{lem:ios_ORP_vs_ORP_assumptions}
% Among the infected-or-survivors stratum, ios-ORP does not exclude any of the patient types and is weaker than ORP.
% \end{lemma}

% \begin{proof}
% Among the infected-or-survivors stratum, $ios(1)=1$. 
% Thus, ios-ORP is not violated and does not exclude any pt. 
% As ios-ORP is not violated among the infected-or-survivors stratum,
% $ORP \Rightarrow ios-ORP$.
% % , i.e. 
% % \begin{align}
% %    & ios(0)=1 \Rightarrow ios(1)=1. \nonumber
% % \end{align}
% On the other hand, ORP does not necessarily hold among the infected-or-survivors. 
% %(see, e.g., u-t 8). 
% %i.e. among the ios stratum, $ios-ORP \not\Rightarrow ORP$.
% \end{proof}

% \subsection{FICE bounds under ORP-type assumptions}
% \label{AppSubSec:Bounds}

% This section contains proofs and additional results related to the ORP assumptions and identification. %Section \ref{Appsubsubsec:Bounds} provides large-sample bounds and proofs under these different assumptions.

\subsection{Large-sample bounds for the FICE}
\label{AppSubSec:Bounds}

In this section, we provide the proofs for partial identification of FICE$(t)$ with and without ORP-type assumptions. 

We present the proofs for the estimands discussed in the main text, with time-fixed population. The results and proofs are immediately generalized to time-varying subpopulations, by replacing $1$ with $r$, for every $r$ such that  $t \le r\le 1$. Note, however,  that 
the ORP-type assumptions change with $r$, and thus have to hold for the specific $r$ in question for the results to be valid.  
As mentioned in Appendix \ref{AppSec:timevaryingsubpop}, the FICE$(t)$, which we focus on in this paper, can be viewed as a special case of the TV-FICE$(t,r)$ (defined in Equation \ref{TV_FICE_definition}), taking $r=1$.

The following Lemma provides bounds for the numerator of \ref{Eq:lemma1}, and will be used in the proofs of all the bounds we derive.

\begin{lemma}
    \label{lem:BoundNumer}
Under CE and consistency, for $a=0,1$,
\begin{equation*}
\mathcal{L}(a,t) \le \Pr\big(\big\{T_1(a) \le t\big\} \cap \big\{\{T_1(1-a) \le 1\} \cup \{T_2(1-a) > 1\}\big\}\big) \le \mathcal{U}(a,t)
\end{equation*}
where
\begin{align*}
\mathcal{L}(a,t) &= \max\Big\{0, E_{\bX}\big[\Psi_{A=1-a,\bX}\big] - E_{\bX}\big[S_{1|A=a,\bX}(t)\big]\Big\}\\
\mathcal{U}(a,t) &= \min\Big\{E_{\bX}\big[F_{1|A=a,\bX}(t)\big] , E_{\bX}\big[\Psi_{A=1-a,\bX}\big] \Big\}.
\end{align*}

\end{lemma}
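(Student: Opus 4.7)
The plan is to view the quantity as the probability of an intersection of two events involving potential outcomes under different treatment arms, one of which is observable conditional on $A=a$ and the other conditional on $A=1-a$. We will use conditional exchangeability to identify each event's marginal probability, and then apply Fréchet--Hoeffding inequalities to bound the joint probability from above and below.

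First, I set $E_1 = \{T_1(a) \le t\}$ and $E_2 = \{T_1(1-a) \le 1\} \cup \{T_2(1-a) > 1\}$, so the target is $\Pr(E_1 \cap E_2)$. Under consistency and CE, the marginal probabilities of $E_1$ and $E_2$ are each identified from the observed data within the corresponding treatment group, giving
\[
\Pr(E_1) = E_{\bX}\bigl[F_{1|A=a,\bX}(t)\bigr], \qquad
\Pr(E_2) = E_{\bX}\bigl[\Psi_{A=1-a,\bX}\bigr],
\]
by iterated expectation over $\bX$ followed by the replacement $\Pr(\cdot \mid \bX) = \Pr(\cdot \mid A=a',\bX)$ for the appropriate $a'$.

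Next, since $E_1$ and $E_2$ involve cross-world potential outcomes and their joint law is unrestricted by the observed data beyond their marginals, the standard Fréchet--Hoeffding inequalities apply:
\[
\max\bigl\{0,\ \Pr(E_1) + \Pr(E_2) - 1\bigr\} \;\le\; \Pr(E_1 \cap E_2) \;\le\; \min\bigl\{\Pr(E_1),\ \Pr(E_2)\bigr\}.
\]
Substituting the identified marginals yields the upper bound $\mathcal{U}(a,t) = \min\{E_{\bX}[F_{1|A=a,\bX}(t)],\, E_{\bX}[\Psi_{A=1-a,\bX}]\}$ directly. For the lower bound, I use $S_{1|A=a,\bX}(t) = 1 - F_{1|A=a,\bX}(t)$ to rewrite
\[
\Pr(E_1) + \Pr(E_2) - 1 = E_{\bX}\bigl[\Psi_{A=1-a,\bX}\bigr] - E_{\bX}\bigl[S_{1|A=a,\bX}(t)\bigr],
\]
which, combined with the trivial lower bound of zero, gives $\mathcal{L}(a,t)$.

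There is no real technical obstacle here; the proof is a direct application of Fréchet--Hoeffding after marginal identification. The one subtlety worth flagging is that these bounds are not necessarily sharp: one could instead apply Fréchet--Hoeffding conditionally on $\bX$ and then integrate, producing $E_{\bX}[\min\{F_{1|A=a,\bX}(t), \Psi_{A=1-a,\bX}\}]$ as an upper bound, which by Jensen is at most the stated $\mathcal{U}(a,t)$. The lemma as phrased deliberately uses the looser, purely marginal form, presumably because it suffices for the subsequent FICE bounds and is expressible entirely in terms of the marginal functionals $E_{\bX}[F_{1|A=a,\bX}(t)]$, $E_{\bX}[S_{1|A=a,\bX}(t)]$, and $E_{\bX}[\Psi_{A=1-a,\bX}]$ already used throughout Section~\ref{subsec:bounds}.
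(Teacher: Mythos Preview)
Your proof is correct and follows essentially the same route as the paper's own argument: identify the two marginal probabilities via conditional exchangeability and consistency, then apply the elementary probability inequalities $\max\{0,\Pr(E_1)+\Pr(E_2)-1\}\le\Pr(E_1\cap E_2)\le\min\{\Pr(E_1),\Pr(E_2)\}$ (which the paper uses without naming them Fr\'echet--Hoeffding). Your remark about the potentially tighter conditional-on-$\bX$ version is a valid side observation not present in the paper, but the core argument matches.
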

\begin{proof}
The upper bound is obtained by
\begin{align*}
\label{numerator_FICE_upper_bound_wout_orp}
\begin{split}
&\Pr\big(\big\{T_1(a) \le t\big\} \cap \big\{\{T_1(1-a) \le 1\} \cup \{T_2(1-a) > 1\}\big\}\big) \\[1ex] 
& \le \min\Big\{\Pr\big(\big\{T_1(a) \le t\big\}\big) , \Pr\big(\big\{\{T_1(1-a) \le 1\} \cup \{T_2(1-a) > 1\}\big\}\big) \Big\}  \\[1ex] 
& = \min\Big\{E_{\bX}\big[F_{1|A=a,\bX}(t)\big] , E_{\bX}\big[\Psi_{A=1-a,\bX}\big] \Big\},
\end{split}
\end{align*}
where the second move is by the law of total expectation, CE, consistency and the definition of $\Psi$.

The lower bound is obtained by
\begin{align*}
&\Pr\big(\big\{T_1(a) \le t\big\} \cap \big\{\{T_1(1-a) \le 1\} \cup \{T_2(1-a) > 1\}\big\}\big) \\[1ex]  
\ge & \max\Big\{0, \Pr\big(\big\{T_1(a) \le t\big\}) + \Pr\big(\big\{\{T_1(1-a) \le 1\} \cup \{T_2(1-a) > 1\}\big\}\big) - 1\Big\}  \\[1ex] 
= & \max\Big\{0, E_{\bX}\big[\Psi_{A=1-a,\bX}\big] - E_{\bX}\Big[S_{1|A=a,\bX}(t)\Big]\Big\},  
\end{align*}
\end{proof}
where the second move is by the law of total expectation, CE, and consistency and the definition of $S_{j|\mathcal{Q}}(t)$.

\subsubsection{Results under ios-ORP (Proposition \ref{Prop:FICE_bounds_ios_ORP})}
\label{AppSubSec:ios_ORP}

Before proving Proposition \ref{Prop:FICE_bounds_ios_ORP}, we present and prove the following lemma that will be useful for the proof. The lemma is of interest in its own right, as it shows that under ios-ORP, the proportion of $ios$ in the population is identifiable. 

\begin{lemma}
\label{lem:PIiosIdentORP}
Under CE, consistency and  ios-ORP, the $ios$ stratum proportion is identifiable by
\begin{equation}
\label{pi_ios_identification_ios_ORP}
\pi_{ios} = E_{\bX}\big[\Psi_{A=0,\bX}\big].
% \\
% & = 1 - E_{\bX}\big[\Pr\big(\{T_1 > 1 \cap T_2 \le 1\}\big)|A=0, \bX \big],
\end{equation} 
\end{lemma}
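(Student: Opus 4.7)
The plan is to first use the ios-ORP assumption to collapse the two-world intersection defining $ios$ into a single-world event, and then apply conditional exchangeability and consistency to identify that single-world probability from the observed data in the $A=0$ arm.

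More concretely, I would start from the definition $\pi_{ios} = \Pr\bigl(\{i \in ios(0)\} \cap \{i \in ios(1)\}\bigr)$. The ios-ORP assumption says that $i \in ios(0)$ implies $i \in ios(1)$, so set-theoretically $ios(0) \subseteq ios(1)$ and hence $ios(0) \cap ios(1) = ios(0)$. This gives
\begin{equation*}
\pi_{ios} = \Pr\bigl(i \in ios(0)\bigr) = \Pr\bigl(\{T_1(0) \le 1\} \cup \{T_2(0) > 1\}\bigr).
\end{equation*}

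From there, the identification is routine: by the law of total expectation condition on $\bX$, then use CE ($A \indep \{T_1(0), T_2(0)\} \mid \bX$) to swap the marginal distribution of the potential outcomes for their conditional distribution given $A=0$, and finally use consistency to replace the potential outcomes $T_1(0), T_2(0)$ with the observed $T_1, T_2$ inside the $A=0$ stratum. This yields
\begin{equation*}
\pi_{ios} = E_{\bX}\bigl[\Pr\bigl(\{T_1 \le 1\} \cup \{T_2 > 1\} \mid A=0, \bX\bigr)\bigr] = E_{\bX}\bigl[\Psi_{A=0,\bX}\bigr].
\end{equation*}

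There is no real obstacle here; the whole content of the lemma is the set inclusion $ios(0) \subseteq ios(1)$ granted by ios-ORP, which eliminates the unidentifiable cross-world component from the definition of $\pi_{ios}$. The only thing to be careful about is direction: ios-ORP is stated in the form ``if $i \in ios(0)$ then $i \in ios(1)$'', so it is the $A=0$ probability (not $A=1$) that identifies $\pi_{ios}$, which is consistent with the statement of the lemma.
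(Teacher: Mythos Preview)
Your proposal is correct and follows essentially the same route as the paper's proof: use ios-ORP to reduce $\pi_{ios}$ to $\Pr\bigl(\{T_1(0)\le 1\}\cup\{T_2(0)>1\}\bigr)$, then apply the law of total expectation, CE, and consistency to arrive at $E_{\bX}[\Psi_{A=0,\bX}]$. The only difference is cosmetic: you spell out the set inclusion $ios(0)\subseteq ios(1)$ explicitly, whereas the paper states the first equality directly and attributes it to ios-ORP.
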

\begin{proof}
We can write 
\begin{align*}
\label{pi_ios_identification_ios_ORP}
\pi_{ios} & = \Pr\big(\big\{T_1(0) \le 1 \big\} \cup \big\{ T_2(0) > 1\big\}\big) \\[0.25em]
& = E_{\bX}\big[\Pr\big(\big\{T_1(0) \le 1 \big\} \cup \big\{ T_2(0) > 1\big\})|\bX\big]\\[0.25em]
&= E_{\bX}\big[\Pr\big(\big\{T_1(0) \le 1 \big\} \cup \big\{ T_2(0) > 1\big\}\big)|A=0, \bX\big]\\[0.25em]
& = E_{\bX}\big[\Psi_{A=0,\bX}\big],
\end{align*}
where the first move is by ios-ORP, the second is by the law of total expectation, the third by CE and the final line is by consistency and the definition of $\Psi$.
\end{proof}

We are now ready for the proof of Proposition \ref{Prop:FICE_bounds_ios_ORP}. 
\paragraph{Proof of Proposition \ref{Prop:FICE_bounds_ios_ORP}}

\begin{proof}
\label{proof_prop_FICE_ios_ORP}

Starting from $\Pr(T_1(0) \le t| ios)$, the denominator in Lemma  \ref{lem:NumerDenomFICE},  is identified by Lemma \ref{lem:PIiosIdentORP}. Turning to the numerator, for every $t \le 1$, $T_{i1}(0) \le t$ implies that $i \in ios(0)$. Therefore, by ios-ORP, 
%(or weak-ORP, see \eqref{numerA0_lemma1}) 
either  $T_{i1}(1) \le 1$ or $T_{i2}(1) > 1$. Hence, 
\begin{equation*}
\Pr\big(\big\{T_1(0) \le t\big\} \cap \big\{\{T_1(1) \le 1\} \cup \{T_2(1) > 1\}\big\}\big) =  \Pr\big(\big\{T_1(0) \le t\big\}\big) = E_{\bX}\big[F_{1|A=0,\bX}(t)\big],
\end{equation*}
where the second move is by the law of total expectation, CE, and consistency.
Therefore, $\Pr(T_1(0) \le t| ios)$ is identified by
\begin{equation}
\label{Eq:T10iosORP}
\Pr(T_1(0) \le t| ios) = \dfrac{ E_{\bX}\big[F_{1|A=0,\bX}(t)\big]}
{ E_{\bX}\big[\Psi_{A=0,\bX}\big]}.
\end{equation}

Regarding $\Pr(T_1(1) \le t| ios)$, we again employ Lemma  \ref{lem:NumerDenomFICE};   the denominator is identified by Lemma \ref{lem:PIiosIdentORP}. 
To derive upper and lower bounds for the numerator, 
% of $\Pr\big(\big\{T_1(1) \le t\big\} \cap \big\{\{T_1(0) \le r\} \cup \{T_2(0) > r\}\big\}\big)$, 
we apply Lemma \ref{lem:BoundNumer} and  obtain that
\begin{equation}
\label{Eq:T11iosORP}
\max\Bigg\{0, 1 - 
\frac{E_{\bX}\big[S_{1|A=1, \bX}(t)\big]} {E_{\bX}\big[\Psi_{A=0,\bX}\big]}\Bigg\}  
 \le \Pr(T_1(1) \le t| ios) \le \min\Bigg\{1, \frac{E_{\bX}\big[F_{1|A=1,\bX}(t)\big]} {E_{\bX}\big[\Psi_{A=0,\bX}\big]}\Bigg\}.
\end{equation}
The proof is done by subtracting \eqref{Eq:T10iosORP} from \eqref{Eq:T11iosORP}.
\end{proof}

\subsubsection{Results under weak-ORP (Proposition \ref{Prop:FICE_bounds_weak_ORP})}
\label{AppSubSec:weak_ios_ORP}

\paragraph{Bounds for the infected-or-survivors stratum proportion}\mbox{}\\
\label{Appsubsubsec:Bounds for the ios proportion}

Before providing FICE bounds, 
we first derive bounds for the ios stratum proportion under weak-ORP.

\begin{lemma}
\label{lem:PIios_weakORP_bounds}
Under CE, consistency and weak-ORP, the $ios$ stratum proportion is bounded by
\begin{equation*}
\mathcal{\tilde{L}}_{\pi} \le \pi_{ios} \le \mathcal{\tilde{U}}_{\pi},
\end{equation*}
where
\begin{align*}
\mathcal{\tilde{L}}_{\pi} &= \max\Big\{E_{\bX}\big[F_{1|A=0,\bX}(1)\big], 
E_{\bX}\big[\Psi_{A=0,\bX}\big]  + E_{\bX}\big[\Psi_{A=1,\bX}\big] - 1\Big\},\\
\mathcal{\tilde{U}}_{\pi} &= \min\Big\{E_{\bX}\big[\Psi_{A=0,\bX}\big],
E_{\bX}\big[\Psi_{A=1,\bX}\big] + 
E_{\bX}\big[\Pr\big(\big\{T_1 \le 1\big\} \cap \big\{T_2 \le 1\big\}|A=0,\bX\big)\big]
\Big\}.
\end{align*}
\end{lemma}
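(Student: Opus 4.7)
The plan is to translate $\pi_{ios}$ into set-theoretic terms via $A_a = \{T_1(a) \le 1\} \cup \{T_2(a) > 1\}$, so that $\pi_{ios} = \Pr(A_0 \cap A_1)$ and, by CE and consistency, $\Pr(A_a) = E_{\bX}[\Psi_{A=a,\bX}]$ (with marginals of individual $T_j(a)$ translating analogously). For the lower bound, I will establish two independent inequalities and then take their maximum. First, weak-ORP gives $\{T_1(0) \le 1\} \subseteq A_1$, and since trivially $\{T_1(0) \le 1\} \subseteq A_0$ by the definition of $A_0$, this yields $\{T_1(0) \le 1\} \subseteq A_0 \cap A_1$ and hence $\pi_{ios} \ge \Pr(T_1(0) \le 1) = E_{\bX}[F_{1|A=0,\bX}(1)]$. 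Second, the Fréchet–Hoeffding lower bound $\Pr(A_0 \cap A_1) \ge \Pr(A_0) + \Pr(A_1) - 1$ holds without any ORP assumption, giving the other argument of the max, so $\pi_{ios} \ge \mathcal{\tilde{L}}_{\pi}$.

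For the upper bound, the key step is the disjoint decomposition
\begin{equation*}
A_0 \;=\; \{T_2(0) > 1\} \,\sqcup\, \{T_1(0) \le 1,\, T_2(0) \le 1\},
\end{equation*}
obtained by partitioning on whether $T_2(0) > 1$ and noting that when $T_2(0) \le 1$, membership in $A_0$ forces $T_1(0) \le 1$. Intersecting with $A_1$, the second piece is contained in $\{T_1(0)\le 1\}\subseteq A_1$ by weak-ORP, so that $\pi_{ios} = \Pr(\{T_2(0) > 1\}\cap A_1) + E_{\bX}[\Pr(T_1 \le 1, T_2 \le 1 | A=0, \bX)]$. I will then bound the remaining probability $\Pr(\{T_2(0) > 1\}\cap A_1)$ in two complementary ways: by $\Pr(T_2(0) > 1)$, which when added back to the second piece reassembles $\Pr(A_0) = E_{\bX}[\Psi_{A=0,\bX}]$; and by $\Pr(A_1) = E_{\bX}[\Psi_{A=1,\bX}]$, giving the second argument of the min. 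Their minimum is $\mathcal{\tilde{U}}_{\pi}$.

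The only non-routine step I anticipate is spotting the right disjoint decomposition of $A_0$ — the one that carves off the piece on which weak-ORP can be applied with equality, leaving a residual piece that admits two natural marginal bounds. Everything else is elementary manipulation of set inclusions combined with CE and consistency, so I do not expect any delicate technical issues.
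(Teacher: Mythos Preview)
Your proof is correct and follows essentially the same approach as the paper. The paper phrases the argument via the patient-type identity $\pi_{ios} = \Pr(ios(0)) - \Pr(pt=3)$ under weak-ORP and then bounds $\Pr(pt=3)$ above and below by Fr\'echet inequalities, which is algebraically the same as your disjoint decomposition $A_0 = \{T_2(0)>1\}\sqcup\{T_1(0)\le 1, T_2(0)\le 1\}$ and your two marginal bounds on $\Pr(\{T_2(0)>1\}\cap A_1)$.
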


\begin{proof}
First, we note that under weak-ORP, $\pi_{ios}$ is equal to the difference between the $ios(0)$ subpopulation proportion
and the proportion of $pt=3$ (which is excluded by ios-ORP, but not by weak-ORP, and belongs to $ios(0)$, but not to the $ios$ stratum).
Therefore, an upper bound under weak-ORP can be derived by
\begin{align}
\label{pi_ios_upper_bound_weak_orp}
\begin{split}
\pi_{ios}  
% & = \Pr\big(\{T_1(0) \le 1 \cup T_2(0) > 1\}\big) \\
% &- \Pr(I(0)=0,S(0)=1,I(1)=0,S(1)=0) \\
& = \Pr\big(\big\{T_1(0) \le 1 \big\} \cup \big\{ T_2(0) > 1\big\}\big) \\[0.3em]
&- \Pr\big(\big\{\{T_1(0) > 1\} \cap \{T_2(0) > 1\}\big\} \cap
\big\{\{T_1(1) > 1\} \cap \{T_2(1) \le 1\}\big\}\big) \\[0.3em]
& = \Pr\big(\big\{T_1(0) \le 1 \big\} \cup \big\{ T_2(0) > 1\big\}\big) \\[0.3em]
&- \Pr\big(\big\{T_2(0) > 1\big\} \cap \big\{\{T_1(1) > 1\} \cap \{T_2(1) \le 1\}\big\}\big) \\[0.3em]
& \le \Pr\big(\big\{T_1(0) \le 1 \big\} \cup \big\{ T_2(0) > 1\big\}\big)  \\[0.3em]
& - \max\Big\{0,\Pr\big(\big\{T_2(0) > 1\big\}\big) + \Pr\big(\big\{T_1(1) > 1\big\} \cap \big\{T_2(1) \le 1\big\}\big) - 1\Big\}\\[0.3em]
& = \Pr\big(\big\{T_1(0) \le 1 \big\} \cup \big\{ T_2(0) > 1\big\}\big)  \\[0.3em]
& - \max\Big\{0,\Pr\big(\big\{T_2(0) > 1\big\}\big) - \Pr\big(\big\{T_1(1) \le 1\big\} \cup \big\{T_2(1) > 1\big\}\big) \Big\} \\[0.3em]
& = \min\Big\{\Pr\big(\big\{T_1(0) \le 1 \big\} \cup \big\{ T_2(0) > 1\big\}\big), \\[0.3em]
&\Pr\big(\big\{T_1(1) \le 1 \big\} \cup \big\{ T_2(1) > 1\big\}\big) + \Pr\big(\big\{T_1(0) \le 1\big\} \cap \big\{T_2(0) \le 1\big\}\big)\Big\} \\[0.3em]
& = \min\Big\{E_{\bX}\big[\Psi_{A=0,\bX}\big],
E_{\bX}\big[\Psi_{A=1,\bX}\big] + 
E_{\bX}\big[\Pr\big(\big\{T_1(0) \le 1\big\} \cap \big\{T_2(0) \le 1\big\}|A=0,\bX\big)\big]
\Big\},
\end{split}
\end{align}
where the second move is by by weak-ORP (with contrapositive);
The fourth is because 
$\big\{T_1(1) \le 1\big\} \cup \big\{T_2(1) > 1\big\}$ is the complementary of
$\big\{T_1(1) > 1\big\} \cap \big\{T_2(1) \le 1\big\}$;
The fifth is because by $\Pr(A \cup B) = \Pr(A) + \Pr(B) - \Pr(A \cap B)$ (for two events $A$ and $B$), along with the law of total probability, it follows that

\begin{align*}
& \Pr\big(\big\{T_1(0) \le 1 \big\} \cup \big\{ T_2(0) > 1\big\}\big) - \Pr\big(\big\{ T_2(0) > 1\big\}\big) \\ 
=& \Pr\big(\big\{T_1(0) \le 1\big\}\big) - \Pr\big(\big\{T_1(0) \le 1\big\} \cap \big\{T_2(0) > 1\big\}\big)
=\Pr\big(\big\{T_1(0) \le 1\big\} \cap \big\{T_2(0) \le 1\big\}\big) . 
\end{align*}

Now, a lower bound under weak-ORP can be constructed by
\begin{align}
\label{pi_ios_lower_bound_weak_orp}
\begin{split}
\pi_{ios}  
% & = \Pr\big(\{T_1(0) \le 1 \cup T_2(0) > 1\}\big) \\
% &- \Pr(I(0)=0,S(0)=1,I(1)=0,S(1)=0) \\
& = \Pr\big(\big\{T_1(0) \le 1 \big\} \cup \big\{ T_2(0) > 1\big\}\big) \\[0.3em]
&- \Pr\big(\big\{\{T_1(0) > 1\} \cap \{T_2(0) > 1\}\big\} \cap
\big\{\{T_1(1) > 1\} \cap \{T_2(1) \le 1\}\big\}\big) \\[0.3em]
& \ge \Pr\big(\big\{T_1(0) \le 1\big\} \cup \big\{ T_2(0) > 1\big\}\big) \\[0.3em] 
&- \min\Big\{\Pr\big(\big\{T_1(0) > 1\big\} \cap \big\{T_2(0) > 1\big\}\big), \: \Pr\big(\big\{T_1(1) > 1\big\} \cap \big\{T_2(1) \le 1\big\}\big)\Big\}\\[0.3em]
& = \max\Big\{\Pr\big(\big\{T_1(0) \le 1\big\}\big), \\[0.3em]
& \Pr\big(\big\{T_1(0) \le 1 \big\} \cup \big\{ T_2(0) > 1\big\}\big) - \Pr\big(\big\{T_1(1) > 1\big\} \cap \big\{T_2(1) \le 1\big\}\big)\Big\} \\[0.3em] 
& = \max\Big\{E_{\bX}\big[F_{1|A=0,\bX}(1)\big], 
E_{\bX}\big[\Psi_{A=0,\bX}\big]  + E_{\bX}\big[\Psi_{A=1,\bX}\big] - 1\Big\},
% - & E_{\bX}\big[\Pr\big(\{T_1 > 1 , T_2 \le 1\}|A=1,\bX\big)\big]\Big),
\end{split}
\end{align}

where the third move is because by $\Pr(A \cap B) = \Pr(A) + \Pr(B) - \Pr(A \cup B)$ (for two events $A$ and $B$), along with the law of total probability, it follows that
\begin{align*}
&\Pr\big(\big\{T_1(0) \le 1\big\} \cup \big\{T_2(0) > 1\big\}\big) \\[0.3em]
-&\Pr\big(\big\{T_1(0) > 1\big\} \cap \big\{T_2(0) > 1\big\}\big) = \\[0.3em]
&\Pr\big(\big\{T_1(0) \le 1\big\}\big) + \Pr\big(\big\{T_2(0) > 1\big\}\big) -\Pr\big(\big\{T_1(0) \le 1\big\} \cap \big\{T_2(0) > 1\big\}\big) \\[0.3em]
-&\Pr\big(\big\{T_1(0) > 1\big\} \cap \big\{T_2(0) > 1\big\}\big) \\[0.3em]  =& \Pr\big(\big\{T_1(0) \le 1\big\}\big), \nonumber
\end{align*}
and the last move is due to the law of total expectation, CE, consistency and the definition of $\Psi$ (and its complementary event).
\end{proof}

Of note is that the upper bound under weak-ORP is lower than (or equal to) the expression identifying the $ios$ stratum proportion under ios-ORP.

We are now ready for the proof of Proposition \ref{Prop:FICE_bounds_weak_ORP}. 
\paragraph{Proof of Proposition \ref{Prop:FICE_bounds_weak_ORP}}
\label{Proof of Proposition FICE_bounds_weak_ORP}

\begin{proof}
\label{proof_prop_FICE_weak_ORP}

We first provide upper and lower bounds, denoted by $\tilde{u}(t)$ and $\tilde{l}(t)$ respectively, for the difference between the numerators of Lemma \ref{lem:NumerDenomFICE}, i.e. for
\begin{equation}
\label{numer_diff_lemma1}
\Pr\big(\big\{T_1(1) \le t\big\} \cap \big\{\{T_1(0) \le 1\} \cup \big\{T_2(0) > 1\big\}\big\}\big) - \Pr\big(\big\{T_1(0) \le t\big\} \cap \big\{\{T_1(1) \le 1\} \cup \{T_2(1) > 1\}\big\}\big).
\end{equation}
% the numerator of Lemma \ref{lem:NumerDenomFICE} under weak-ORP. 
Statrting with the first term ($a=1$), we employ Lemma  \ref{lem:BoundNumer} to derive lower and upper bounds, i.e.,
$\mathcal{L}(1,t) \le \Pr\big(\big\{T_1(1) \le t\big\} \cap \big\{\{T_1(0) \le 1\} \cup \{T_2(0) > 1\}\big\}\big) \le \mathcal{U}(1,t)$.

For the second term ($a=0$), we note that for every $t \le 1$, $T_{i1}(0) \le t$ implies that $T_{i1}(0) \le 1$. Therefore, by weak-ORP,  either  $T_{i1}(1) \le 1$ or $T_{i2}(1) > 1$. Hence, 
\begin{equation}
\label{numerA0_lemma1}
\Pr\big(\big\{T_1(0) \le t\big\} \cap \big\{\{T_1(1) \le 1\} \cup \{T_2(1) > 1\}\big\}\big) =  \Pr\big(\big\{T_1(0) \le t\big\}\big) = E_{\bX}\Big[F_{1|A=0,\bX}(t)\Big],
\end{equation}
where the second move is by the law of total expectation, CE, and consistency.
Of note is that the proofs for each of the numerators of Lemma \ref{lem:NumerDenomFICE} under weak-ORP are identical to the proofs of Proposition \ref{Prop:FICE_bounds_ios_ORP} (i.e., under ios-ORP).

The upper and lower bounds for the difference in 
\ref{numer_diff_lemma1}, $\tilde{{u}}(t)$ and ${\tilde{l}}(t)$, are obtained by substracting \eqref{numerA0_lemma1} from $\mathcal{U}(1,t)$ and from $\mathcal{L}(1,t)$, respectively,
\begin{align}
\begin{split}
\label{bounds_diff_numer_weak_ORP}
& \tilde{{u}}(t) = \min\Big\{E_{\bX}\big[F_{1|A=1,\bX}(t)\big], E_{\bX}\big[\Psi_{A=0,\bX}(1)\big]\Big\} - E_{\bX}\big[F_{1|A=0,\bX}(t)\big],  \\
& \tilde{{l}}(t) = \max\Big\{0, E_{\bX}\big[\Psi_{A=0,\bX}(1)\big] - E_{\bX}\big[S_{1|A=1,\bX}(t)\big]\Big\} - E_{\bX}\big[F_{1|A=0,\bX}(t)\big], 
\end{split}
\end{align}
with $1\{\cdot\}$ being the indicator function. 

To obtain FICE bound, we tie \eqref{bounds_diff_numer_weak_ORP} together with Lemma \ref{lem:NumerDenomFICE} and the bounds for $\pi_{ios}$ under weak-ORP, given in Lemma \ref{lem:PIios_weakORP_bounds}.
For the upper FICE bound, if $\tilde{{u}}(t)$ is positive, we divide it by $\mathcal{\tilde{{L}}}_{\pi}$. Otherwise, we divide it by $\mathcal{{\tilde{U}}}_{\pi}$.
For the lower FICE bound, if $\tilde{{l}}(t)$ is positive, we divide it by $\mathcal{{\tilde{U}}}_{\pi}$. Otherwise, we divide it by $\mathcal{{\tilde{L}}}_{\pi}$.
\end{proof}

\subsubsection{Results without ORP-type assumptions (Proposition \ref{Prop:FICE_bounds_wout_ORP})}
\label{AppSubSec:wout_ORP}

Before providing FICE bounds, 
we first derive bounds for the ios stratum proportion without ORP-type assumptions.

\begin{lemma}
\label{lem:PIios_wout_ORP_bounds}
Under CE and consistency, the $ios$ stratum proportion is bounded by
\begin{equation*}
\mathcal{\dot{L}_{\pi}} \le \pi_{ios} \le \mathcal{\dot{U}_{\pi}},
\end{equation*}
where
\begin{align*}
\mathcal{\dot{L}_{\pi}} &= \max\Big\{0, E_{\bX}\big[\Psi_{A=0,\bX}\big] + E_{\bX}\big[\Psi_{A=1,\bX}\big] - 1\Big\},\\
\mathcal{\dot{U}_{\pi}} &= \min_a\Big\{E_{\bX}\big[\Psi_{A=a,\bX}\big] \Big\}.
\end{align*}
\end{lemma}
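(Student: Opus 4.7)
The plan is to observe that $\pi_{ios}$ is exactly $\Pr(A_0 \cap A_1)$, where for $a=0,1$ I set $A_a = \{T_1(a)\le 1\} \cup \{T_2(a) > 1\}$. By definition $\Pr(A_a) = \Psi_{A=a}$ marginally over $\bX$ (after conditioning), and so under consistency and CE the marginal probability of each $A_a$ is identified by $E_{\bX}[\Psi_{A=a,\bX}]$, exactly as in the second and third lines of the proof of Lemma \ref{lem:PIiosIdentORP}. The whole task then reduces to applying the classical Fr\'echet bounds to the joint probability of two events whose marginals are identified.

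Concretely, I would first write
\begin{equation*}
\pi_{ios} \;=\; \Pr(A_0 \cap A_1)
\end{equation*}
and verify, by the law of total expectation, CE, and consistency, the identification step
\begin{equation*}
\Pr(A_a) \;=\; E_{\bX}\big[\Pr(A_a \mid \bX)\big] \;=\; E_{\bX}\big[\Pr(A_a \mid A=a, \bX)\big] \;=\; E_{\bX}\big[\Psi_{A=a,\bX}\big].
\end{equation*}
Next I would invoke the Fr\'echet--Hoeffding bounds for two events, namely
\begin{equation*}
\max\{0,\,\Pr(A_0)+\Pr(A_1)-1\} \;\le\; \Pr(A_0 \cap A_1) \;\le\; \min\{\Pr(A_0), \Pr(A_1)\},
\end{equation*}
and substitute the identified marginals to obtain the stated $\mathcal{\dot{L}}_{\pi}$ and $\mathcal{\dot{U}}_{\pi}$.

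There is really no main obstacle here; the content of the argument is already contained in Lemma \ref{lem:PIiosIdentORP} (for the marginal identification of each $\Psi_{A=a}$) together with the elementary Fr\'echet inequalities, which require no cross-world assumption because they only use $\Pr(A_0 \cap A_1) \le \Pr(A_a)$ and $\Pr(A_0 \cap A_1) \ge \Pr(A_0)+\Pr(A_1)-\Pr(A_0 \cup A_1) \ge \Pr(A_0)+\Pr(A_1)-1$. The only thing worth checking is that these bounds are tight in the sense that they cannot be sharpened using only CE and consistency; this follows by the standard construction of counterfactual joint distributions achieving the Fr\'echet extrema while keeping both treatment-specific marginal laws of $(T_1(a),T_2(a))$ fixed. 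It is also worth noting, for consistency with Section \ref{AppSubSec:weak_ios_ORP}, that $\mathcal{\dot{L}}_{\pi}$ here coincides with the second argument of the max in $\mathcal{\tilde{L}}_{\pi}$ from Lemma \ref{lem:PIios_weakORP_bounds}, while $\mathcal{\dot{U}}_{\pi}$ differs from $\mathcal{\tilde{U}}_{\pi}$ precisely because weak-ORP adds the additional constraint that drives the second argument of that min.
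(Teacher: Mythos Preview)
Your proposal is correct and follows essentially the same approach as the paper: both identify $\Pr(A_a)=E_{\bX}[\Psi_{A=a,\bX}]$ via the law of total expectation, CE, and consistency, and then apply the elementary inequalities $\Pr(A_0\cap A_1)\le \min_a \Pr(A_a)$ and $\Pr(A_0\cap A_1)\ge \max\{0,\Pr(A_0)+\Pr(A_1)-1\}$, which you simply name as the Fr\'echet--Hoeffding bounds. Your remark on sharpness goes beyond what the lemma asserts and what the paper proves, but it is a correct and useful addition.
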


\begin{proof}
First,
\begin{align}
% \label{pi_ios_upper_bound_wout_ORP}
\begin{split}
\pi_{ios} 
\le & \min_a\Big\{\Pr\big(\big\{\{T_1(a) \le 1\} \cup \{T_2(a) > 1\}\big\} \big)\Big\}  \\
= & \min_a\Big\{E_{\bX}\big[\Psi_{A=a,\bX}\big] \Big\}, 
\nonumber
\end{split}
\end{align}
where the second move is by the law of total expectation, CE, and consistency.

Second,
\begin{align}
% \label{pi_ios_lower_bound_wout_ORP}
\begin{split}
\pi_{ios} 
\ge & \max\Big\{0, \Pr\big(\big\{\{T_1(0) \le 1\} \cup \{T_2(0) > 1\}\big\} \big) + \Pr\big(\big\{\{T_1(1) \le 1\} \cup \{T_2(1) > 1\}\big\} \big) - 1\Big\}  \\
= & \max\Big\{0, E_{\bX}\big[\Psi_{A=0,\bX}\big] + E_{\bX}\big[\Psi_{A=1,\bX}\big] - 1\Big\},
\end{split}
\nonumber
\end{align}
where the last move is due to the law of total expectation, CE, consistency and definition of $\Psi$.
\end{proof}

Similarly to the upper bound under weak-ORP (Lemma \ref{lem:PIios_weakORP_bounds}), the upper bound without ORP assumptions is lower than (or equal to) the expression identifying the $ios$ stratum proportion under ios-ORP.

Furthermore, the upper bound without ORP assumptions is lower than (or equal to) its counterpart under weak-ORP.
When $E_{\bX}\big[\Psi_{A=0,\bX}\big] \le E_{\bX}\big[\Psi_{A=1,\bX}\big]$, 
the upper bounds coincide and eqaul to the proportion under ios-ORP, that is, $E_{\bX}\big[\Psi_{A=0,\bX}\big]$.
This is the case in our motivating dataset.
% In case $E_{\bX}\big[\Psi_{A=0,\bX}\big] \le E_{\bX}\big[\Psi_{A=1,\bX}\big]$, as in our motivating data, both upper bounds and the proportion under ios-ORP analytically coincide and equal to $E_{\bX}\big[\Psi_{A=0,\bX}\big]$.
The lower bounds without ORP assumptions and under weak-ORP were also identical in the motivating dataset.

We now turn to provide FICE bounds without ORP-type assumptions.

\begin{proposition}
\label{Prop:FICE_bounds_wout_ORP}
Under consistency and CE, the FICE$(t)$ is bounded by
\begin{equation*}
\mathcal{\dot{L}}(t) \le \: \:   \text{FICE$(t)$} \le \:  \mathcal{\dot{U}}(t),  
\end{equation*}
where    $\mathcal{\dot{U}}(t) = \dfrac{\dot{u}(t)}{1\{\dot{u}(t) \ge 0\} \cdot \mathcal{\dot{L}}_{\pi} + 1\{\dot{u}(t) < 0\} \cdot \mathcal{\dot{U}}_{\pi}} \:, $
$\mathcal{\dot{L}}(t) = \dfrac{\dot{l}(t)}{1\{\dot{l}(t) \ge 0\} \cdot \mathcal{\dot{U}}_{\pi} + 1\{\dot{l}(t) < 0\} \cdot \mathcal{\dot{L}}_{\pi}}$,
and
\begin{align*}
& \dot{u}(t) = \min\Big\{E_{\bX}\big[F_{1|A=1,\bX}(t)\big], E_{\bX}\big[\Psi_{A=0,\bX}\big]\Big\} - \max\Big\{0, E_{\bX}\big[\Psi_{A=1,\bX}\big] - E_{\bX}\big[S_{1|A=0,\bX}(t)\big]\Big\},  \\
& \dot{l}(t) = \max\Big\{0, E_{\bX}\big[\Psi_{A=0,\bX}\big] -  E_{\bX}\big[S_{1|A=1,\bX}(t)\big]\Big\} - \min\Big\{E_{\bX}\big[F_{1|A=0,\bX}(t)\big], E_{\bX}\big[\Psi_{A=1,\bX}\big]\Big\}. 
\end{align*}
\end{proposition}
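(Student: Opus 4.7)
The plan is to mirror the proof of Proposition \ref{Prop:FICE_bounds_weak_ORP}, but without exploiting any ORP-type restriction. Under weak-ORP, the $a=0$ numerator term reduced to $E_{\bX}[F_{1|A=0,\bX}(t)]$ because $\{T_1(0)\le t\}$ was contained in the $ios(1)$ event. Here that simplification is unavailable, so both numerator terms of Lemma \ref{lem:NumerDenomFICE} must be bounded from above and below. The result is that each of the min/max constructions from Lemma \ref{lem:BoundNumer} will appear on both sides of the subtraction, rather than on only one side.

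First, I would apply Lemma \ref{lem:NumerDenomFICE} with $r=1$ to write
\begin{equation*}
\text{FICE}(t) \;=\; \frac{N(t)}{\pi_{ios}},
\end{equation*}
where
\begin{equation*}
N(t) \;=\; \Pr\big(\{T_1(1)\le t\}\cap\{\{T_1(0)\le 1\}\cup\{T_2(0)> 1\}\}\big) - \Pr\big(\{T_1(0)\le t\}\cap\{\{T_1(1)\le 1\}\cup\{T_2(1)> 1\}\}\big).
\end{equation*}
Next, I would invoke Lemma \ref{lem:BoundNumer} twice, once with $a=1$ and once with $a=0$, giving $\mathcal{L}(1,t)\le \Pr(\{T_1(1)\le t\}\cap\cdots)\le \mathcal{U}(1,t)$ and $\mathcal{L}(0,t)\le \Pr(\{T_1(0)\le t\}\cap\cdots)\le \mathcal{U}(0,t)$. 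Subtracting the bounds in the appropriate directions yields the pointwise numerator bounds $\dot{l}(t) = \mathcal{L}(1,t)-\mathcal{U}(0,t)\le N(t) \le \mathcal{U}(1,t)-\mathcal{L}(0,t) = \dot{u}(t)$, and expanding these is exactly the definition of $\dot{u}(t)$ and $\dot{l}(t)$ given in the proposition statement.

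For the denominator, I would apply Lemma \ref{lem:PIios_wout_ORP_bounds}, which delivers $\mathcal{\dot{L}}_{\pi}\le \pi_{ios}\le \mathcal{\dot{U}}_{\pi}$ under consistency and CE alone. The final step is the sharp combination of the numerator and denominator bounds into bounds on the ratio $N(t)/\pi_{ios}$. The monotonicity of $x\mapsto x/d$ in $d>0$ depends on the sign of $x$: if $\dot{u}(t)\ge 0$, the largest value of the upper envelope $\dot{u}(t)/\pi_{ios}$ is attained at the smallest allowed $\pi_{ios}$, namely $\mathcal{\dot{L}}_{\pi}$; if $\dot{u}(t)<0$, it is attained at the largest allowed $\pi_{ios}$, namely $\mathcal{\dot{U}}_{\pi}$. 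Symmetric reasoning yields the lower bound $\mathcal{\dot{L}}(t)$, reproducing the indicator-function formulas in the statement.

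The main obstacle, conceptually, is that both of these sign-based case splits must be written out cleanly and shown to be sharp given only consistency and CE; no computational difficulty arises beyond careful bookkeeping, since Lemmas \ref{lem:NumerDenomFICE}, \ref{lem:BoundNumer}, and \ref{lem:PIios_wout_ORP_bounds} already package all the nontrivial probabilistic work. One sanity check I would run afterwards is that these bounds reduce to (or are looser than) the weak-ORP bounds of Proposition \ref{Prop:FICE_bounds_weak_ORP}: when weak-ORP holds, $\mathcal{L}(0,t)=\mathcal{U}(0,t)=E_{\bX}[F_{1|A=0,\bX}(t)]$ collapses to a point and $\mathcal{\dot{L}}_{\pi},\mathcal{\dot{U}}_{\pi}$ tighten accordingly, so $\dot{u}(t),\dot{l}(t)$ match $\tilde{u}(t),\tilde{l}(t)$.
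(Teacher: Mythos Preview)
Your proposal is correct and follows essentially the same route as the paper: both proofs apply Lemma \ref{lem:NumerDenomFICE} to write $\text{FICE}(t)=N(t)/\pi_{ios}$, invoke Lemma \ref{lem:BoundNumer} with $a=0$ and $a=1$ to obtain $\dot{l}(t)\le N(t)\le \dot{u}(t)$, pull in the $\pi_{ios}$ bounds from Lemma \ref{lem:PIios_wout_ORP_bounds}, and then combine numerator and denominator via the sign-dependent case split. The only additions in your write-up are the sanity check against the weak-ORP bounds and the remark about sharpness; the paper itself does not claim or verify sharpness of these bounds, so that aside is extra rather than required.
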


\noindent Note that both $\dot{u}(t)$ and $\dot{l}(t)$ are higher than (or equal to) their counterparts under weak-ORP, $\tilde{{u}}(t)$ and $\tilde{{l}}(t)$.

\begin{proof}
\label{proof_prop_FICE_wout_ORP}
We first derive upper and lower bounds for the difference between the numerators of Lemma \ref{lem:NumerDenomFICE}, i.e. for
\begin{equation}
\label{numer_diff_lemma1_second}
\Pr\big(\big\{T_1(1) \le t\big\} \cap \big\{\big\{T_1(0) \le 1\big\} \cup \big\{T_2(0) > 1\big\}\big\}\big) - \Pr\big(\big\{T_1(0) \le t\big\} \cap \big\{\big\{T_1(1) \le 1\big\} \cup \big\{T_2(1) > 1\big\}\big\}\big) .
\end{equation}

Employing Lemma \ref{lem:BoundNumer} with both $a=0$ and $a=1$, the upper bound for
\eqref{numer_diff_lemma1_second} is obtained by subtracting the lower bound of the second term ($a=0$) from the upper bound of the first term ($a=1$),
\begin{align}
\label{upper_bound_diff_numer_wout_ORP}
& \dot{u}(t) = \min\Big\{E_{\bX}\big[F_{1|A=1,\bX}(t)\big], E_{\bX}\big[\Psi_{A=0,\bX}\big]\Big\} - \max\Big\{0, E_{\bX}\big[\Psi_{A=1,\bX}\big] - E_{\bX}\big[S_{1|A=0,\bX}(t)\big]\Big\}.
\end{align}
Similarly, the lower bound for
\eqref{numer_diff_lemma1_second} is obtained by subtracting the upper bound of the second term ($a=0)$ from the lower bound of the first term ($a=1$),
\begin{align}
\label{lower_bound_diff_numer_wout_ORP}
& \dot{l}(t) = \max\Big\{0, E_{\bX}\big[\Psi_{A=0,\bX}\big] -  E_{\bX}\big[S_{1|A=1,\bX}(t)\big]\Big\} - \min\Big\{E_{\bX}\big[F_{1|A=0,\bX}(t)\big], E_{\bX}\big[\Psi_{A=1,\bX}\big]\Big\}.
\end{align}

To obtain FICE bound, we tie \eqref{upper_bound_diff_numer_wout_ORP} and \eqref{lower_bound_diff_numer_wout_ORP} together with Lemma \ref{lem:NumerDenomFICE} and the bounds for $\pi_{ios}$ of Lemma \ref{lem:PIios_wout_ORP_bounds}.
For the upper FICE bound, if $\dot{u}(t)$ is positive, we divide it by $\mathcal{\dot{L}_{\pi}}$. Otherwise, we divide it by $\mathcal{\dot{U}_{\pi}}$.
For the lower FICE bound, if $\dot{l}(t)$ is positive, we divide it by $\mathcal{\dot{U}}_{\pi}$. Otherwise, we divide it by $\mathcal{\dot{L}_{\pi}}$. 
\end{proof}

\subsubsection{Bounds for the risk-ratio scale FICE$(t)$ (Propositions \ref{Prop:FICE_bounds_ratio_weak_ORP} and \ref{Prop:FICE_bounds_ratio_wout_ORP})}
\label{Appsubsec:bounds for the risk-ratio scale estimand}

\begin{proposition}
\label{Prop:FICE_bounds_ratio_weak_ORP}
Under consistency, CE and weak-ORP (or ios-ORP), the risk-ratio scale FICE$(t)$ is bounded by
\begin{align}
\nonumber
%\small
\Bigg[\dfrac{\max\Big\{0, E_{\bX}\big[\Psi_{A=0,\bX}\big] - E_{\bX}\big[S_{1|A=1,\bX}(t)\big]\Big\}}{E_{\bX}\big[F_{1|A=0,\bX}(t)\big]} ; \:
 %\le \dfrac{Pr\big(T_1(1) \le t \lvert ios(r)\big)}
 % {Pr\big(T_1(0) \le t \lvert ios(r)\big)} \le
 \dfrac{\min\Big\{E_{\bX}\big[\Psi_{A=0,\bX}\big], E_{\bX}\big[F_{1|A=1,\bX}(t)\big]\Big\}}{E_{\bX}\big[F_{1|A=0,\bX}(t)\big]}\Bigg].
\end{align}
\end{proposition}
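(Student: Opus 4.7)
The plan is to exploit a crucial simplification that occurs on the risk-ratio scale: unlike the difference scale, the unknown quantity $\pi_{ios}$ appears in both the numerator and denominator of the ratio and therefore cancels. Starting from Lemma \ref{lem:NumerDenomFICE} with $r=1$, the risk-ratio FICE can be rewritten as
\begin{equation*}
\text{FICE}_{\text{RR}}(t) \;=\; \frac{\Pr\big(\{T_1(1)\le t\}\cap\{\{T_1(0)\le 1\}\cup\{T_2(0)>1\}\}\big)}{\Pr\big(\{T_1(0)\le t\}\cap\{\{T_1(1)\le 1\}\cup\{T_2(1)>1\}\}\big)},
\end{equation*}
so no bounds on $\pi_{ios}$ are needed. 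This is the reason the bounds in Proposition \ref{Prop:FICE_bounds_ratio_weak_ORP} collapse to a single expression valid under \emph{either} weak-ORP \emph{or} ios-ORP, despite these assumptions producing different $\pi_{ios}$ bounds.

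For the denominator, I would reuse the argument already deployed in the proofs of Propositions \ref{Prop:FICE_bounds_ios_ORP} and \ref{Prop:FICE_bounds_weak_ORP}: under weak-ORP (and a fortiori under ios-ORP, by Lemma \ref{lem:orp_assumptions}), the event $\{T_1(0)\le t\}$ implies $\{T_1(0)\le 1\}$ which in turn implies membership in $ios(1)$, i.e., $\{T_1(1)\le 1\}\cup\{T_2(1)>1\}$. Consequently the intersection in the denominator collapses and, by the law of total expectation together with CE and consistency, the denominator equals $E_{\bX}[F_{1|A=0,\bX}(t)]$ exactly.

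For the numerator, I would invoke Lemma \ref{lem:BoundNumer} with $a=1$, which gives
\begin{equation*}
\max\Big\{0,\,E_{\bX}[\Psi_{A=0,\bX}]-E_{\bX}[S_{1|A=1,\bX}(t)]\Big\} \;\le\; \Pr\big(\{T_1(1)\le t\}\cap\{\{T_1(0)\le 1\}\cup\{T_2(0)>1\}\}\big) \;\le\; \min\Big\{E_{\bX}[F_{1|A=1,\bX}(t)],\,E_{\bX}[\Psi_{A=0,\bX}]\Big\}.
\end{equation*}
Because the denominator $E_{\bX}[F_{1|A=0,\bX}(t)]$ is a fixed nonnegative quantity (and strictly positive whenever the risk ratio is defined), dividing both sides through yields the stated bounds directly, without the case-split on the sign of the numerator that was required on the difference scale in Proposition \ref{Prop:FICE_bounds_weak_ORP}.

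There is no serious obstacle: the main conceptual point is recognizing the $\pi_{ios}$ cancellation and that the denominator is point-identified under weak-ORP, so only the numerator needs partial identification via Lemma \ref{lem:BoundNumer}. The only minor care required is verifying that the same Fréchet-type bounds for the numerator are valid under both weak-ORP and ios-ORP, which is immediate since Lemma \ref{lem:BoundNumer} uses only CE and consistency and places no restrictions tied to a particular ORP variant.
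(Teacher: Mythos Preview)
Your proposal is correct and follows essentially the same approach as the paper: you apply Lemma \ref{lem:NumerDenomFICE} to cancel $\pi_{ios}$, point-identify the denominator via weak-ORP (which suffices for ios-ORP as well), bound the numerator via Lemma \ref{lem:BoundNumer} with $a=1$, and divide. Your additional remark that the coincidence of the bounds under weak-ORP and ios-ORP stems precisely from the $\pi_{ios}$ cancellation matches the paper's closing observation.
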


\begin{proof}
Applying Lemma \ref{lem:NumerDenomFICE} for both $\Pr\big(T_1(1) \le t\  \vert \  ios\big)$ and $\Pr\big(T_1(0) \le t\  \vert \  ios\big)$, it follows that the ratio $\dfrac{\Pr\big(\big\{T_1(1) \le t\big\}\  \vert \  ios\big)}{\Pr\big(\big\{T_1(0) \le t\big\}\  \vert \  ios\big)}$ can be written as  
\begin{equation}
\label{Lemma1_ratio}
\dfrac{\Pr\big(\big\{T_1(1) \le t\big\}\  \vert \  ios\big)}{\Pr\big(\big\{T_1(0) \le t\big\}\  \vert \  ios\big)} 
 = \dfrac{\Pr\big(\big\{T_1(1) \le t\big\} \cap \big\{\big\{T_1(0) \le 1\big\} \cup \big\{T_2(0) > 1\big\}\big\}\big)}{\Pr\big(\big\{T_1(0) \le t\big\} \cap \big\{\big\{T_1(1) \le 1\big\} \cup \big\{T_2(1) > 1\big\}\big\}\big)}.
\end{equation}
To derive upper and lower bounds for the numerator of \eqref{Lemma1_ratio}, we  employ Lemma \ref{lem:BoundNumer} (with $a=1$) to obtain

\begin{equation}
\label{Eq:T11iosORP_risk_ratio}
\mathcal{L}(1,t) \le \Pr\big(\big\{T_1(1) \le t\big\} \cap \big\{\{T_1(0) \le 1\} \cup \{T_2(0) > 1\}\big\}\big) \le \mathcal{U}(1,t),
\end{equation}
where
\begin{align*}
\mathcal{L}(1,t) &= \max\Big\{0, E_{\bX}\big[\Psi_{A=0,\bX}\big] - E_{\bX}\big[S_{1|A=1,\bX}(t)\big]\Big\},\\
\mathcal{U}(1,t) &= \min\Big\{E_{\bX}\big[\Psi_{A=0,\bX}\big], E_{\bX}\big[F_{1|A=1,\bX}(t)\big]  \Big\}.
\end{align*}

The denominator of \eqref{Lemma1_ratio}, i.e. for $a=0$, is identified by
% \ref{numerA0_lemma1}.
\begin{equation}
\label{numerA0_lemma1_risk_ratio}
\Pr\big(\big\{T_1(0) \le t\big\} \cap \big\{\{T_1(1) \le 1\} \cup \{T_2(1) > 1\}\big\}\big) =  \Pr\big(\big\{T_1(0) \le t\big\}\big) = E_{\bX}\big[F_{1|A=0,\bX}(t)\big],
\end{equation}
where the first move is by weak-ORP (and because for every $t \le 1$, $T_{i1}(0) \le t$ implies that $T_{i1}(0) \le 1$), and the second is by the law of total expectation, CE, and consistency.
The proof is done by dividing \eqref{Eq:T11iosORP_risk_ratio} by \eqref{numerA0_lemma1_risk_ratio}.

We note that on the risk-ratio scale, identification of $\pi_{ios}$ is not required, as it cancels out. Hence, the bounds under ios-ORP and weak-ORP coincide, because only the (full/partial) identification of $\pi_{ios}$ distinguishes between them. 
\end{proof}

\begin{proposition}
\label{Prop:FICE_bounds_ratio_wout_ORP}
Under consistency and CE, the risk-ratio scale FICE$(t)$ is bounded by
\begin{align}
%\small
 \Bigg[\dfrac{\max\Big\{0, E_{\bX}\big[\Psi_{A=0,\bX}\big] - E_{\bX}\big[S_{1|A=1,\bX}(t)\big]\Big\}}{\min\Big\{E_{\bX}\big[\Psi_{A=1,\bX}\big], E_{\bX}\big[F_{1|A=0,\bX}(t)\big]\Big\}} ; \:
 %\le \dfrac{Pr\big(T_1(1) \le t \lvert ios(r)\big)}
 % {Pr\big(T_1(0) \le t \lvert ios(r)\big)} \le
 \dfrac{\min\Big\{E_{\bX}\big[F_{1|A=1,\bX}(t)\big], E_{\bX}\big[\Psi_{A=0,\bX}\big]\Big\}}{\max\Big\{0, E_{\bX}\big[\Psi_{A=1,\bX}\big] - E_{\bX}\big[S_{1|A=0,\bX}(t)\big]\Big\}}\Bigg].
\end{align}
\end{proposition}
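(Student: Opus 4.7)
The plan is to mimic the proof of Proposition \ref{Prop:FICE_bounds_ratio_weak_ORP}, with the key difference that without any ORP-type assumption the denominator of the ratio is no longer point-identified and must itself be bounded. First I would apply Lemma \ref{lem:NumerDenomFICE} (with $r=1$) to both $\Pr(T_1(1)\le t\mid ios)$ and $\Pr(T_1(0)\le t\mid ios)$, so that the common denominator $\pi_{ios}$ cancels and the risk-ratio FICE$(t)$ can be rewritten as
\begin{equation*}
\frac{\Pr(T_1(1)\le t\mid ios)}{\Pr(T_1(0)\le t\mid ios)}
=\frac{\Pr\bigl(\{T_1(1)\le t\}\cap\{\{T_1(0)\le 1\}\cup\{T_2(0)>1\}\}\bigr)}
{\Pr\bigl(\{T_1(0)\le t\}\cap\{\{T_1(1)\le 1\}\cup\{T_2(1)>1\}\}\bigr)}.
\end{equation*}

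Next I would apply Lemma \ref{lem:BoundNumer} twice. With $a=1$, it yields
$\mathcal{L}(1,t)\le\text{numerator}\le\mathcal{U}(1,t)$, where
$\mathcal{U}(1,t)=\min\{E_{\bX}[F_{1|A=1,\bX}(t)],E_{\bX}[\Psi_{A=0,\bX}]\}$ and
$\mathcal{L}(1,t)=\max\{0,E_{\bX}[\Psi_{A=0,\bX}]-E_{\bX}[S_{1|A=1,\bX}(t)]\}$. With $a=0$, it yields
$\mathcal{L}(0,t)\le\text{denominator}\le\mathcal{U}(0,t)$, where
$\mathcal{U}(0,t)=\min\{E_{\bX}[F_{1|A=0,\bX}(t)],E_{\bX}[\Psi_{A=1,\bX}]\}$ and
$\mathcal{L}(0,t)=\max\{0,E_{\bX}[\Psi_{A=1,\bX}]-E_{\bX}[S_{1|A=0,\bX}(t)]\}$. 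This is precisely the step where the absence of ORP-type assumptions matters: the equality \eqref{numerA0_lemma1_risk_ratio} used in the previous proposition is no longer available, so Lemma \ref{lem:BoundNumer} must be invoked symmetrically for the denominator too.

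Finally, since the ratio is monotone increasing in its numerator and monotone decreasing in its (positive) denominator, the upper bound of the ratio is obtained by pairing $\mathcal{U}(1,t)$ with $\mathcal{L}(0,t)$, and the lower bound by pairing $\mathcal{L}(1,t)$ with $\mathcal{U}(0,t)$. Substituting the explicit expressions gives exactly the two bounds displayed in the statement. The main obstacle is conceptual rather than technical: one must verify that these pointwise bounds are simultaneously attainable in principle (so that the large-sample interval is sharp given only the marginal constraints imposed by consistency and CE), which follows because Lemma \ref{lem:BoundNumer} is derived from elementary Fr\'echet–Hoeffding-type inequalities applied separately to the two worlds, with no cross-world constraint imposed. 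Care must also be taken with the degenerate case in which $\mathcal{L}(0,t)=0$, in which event the upper bound of the ratio is interpreted as $+\infty$, i.e.\ uninformative.
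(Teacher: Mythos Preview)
Your proof is correct and follows essentially the same approach as the paper: apply Lemma \ref{lem:NumerDenomFICE} to cancel $\pi_{ios}$, then invoke Lemma \ref{lem:BoundNumer} with $a=1$ for the numerator and with $a=0$ for the denominator, and combine via $\mathcal{U}(1,t)/\mathcal{L}(0,t)$ and $\mathcal{L}(1,t)/\mathcal{U}(0,t)$. Your additional remarks on sharpness and the degenerate case $\mathcal{L}(0,t)=0$ go slightly beyond what the paper states explicitly, but the core argument is identical.
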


\begin{proof}
Applying Lemma \ref{lem:NumerDenomFICE} for both $\Pr\big(T_1(1) \le t\  \vert \  ios\big)$ and $\Pr\big(T_1(0) \le t\  \vert \  ios\big)$, it follows that the ratio $\dfrac{\Pr\big(\big\{T_1(1) \le t\big\}\  \vert \  ios\big)}{\Pr\big(\big\{T_1(0) \le t\big\}\  \vert \  ios\big)}$ can be written as  
\begin{equation}
\label{Lemma1_ratiO_wout_ORP}
\dfrac{\Pr\big(\big\{T_1(1) \le t\big\}\  \vert \  ios\big)}{\Pr\big(\big\{T_1(0) \le t\big\}\  \vert \  ios\big)} 
 = \dfrac{\Pr\big(\big\{T_1(1) \le t\big\} \cap \big\{\big\{T_1(0) \le 1\big\} \cup \big\{T_2(0) > 1\big\}\big\}\big)}{\Pr\big(\big\{T_1(0) \le t\big\} \cap \big\{\big\{T_1(1) \le 1\big\} \cup \big\{T_2(1) > 1\big\}\big\}\big)}.
\end{equation}

Similarly to the proof of Proposition \ref{Prop:FICE_bounds_ratio_weak_ORP}, to derive upper and lower bounds for the numerator of \eqref{Lemma1_ratiO_wout_ORP}, we employ Lemma \ref{lem:BoundNumer} with $a=1$,
\begin{equation}
\label{Eq:T11iosORP_risk_ratio_wout}
\mathcal{L}(1,t) \le \Pr\big(\big\{T_1(1) \le t\big\} \cap \big\{\{T_1(0) \le 1\} \cup \{T_2(0) > 1\}\big\}\big) \le \mathcal{U}(1,t),
\end{equation}
where
\begin{align*}
\mathcal{L}(1,t) &= \max\Big\{0, E_{\bX}\big[\Psi_{A=0,\bX}\big] - E_{\bX}\big[S_{1|A=1,\bX}(t)\big]\Big\}\\
\mathcal{U}(1,t) &= \min\Big\{E_{\bX}\big[\Psi_{A=0,\bX}\big], E_{\bX}\big[F_{1|A=1,\bX}(t)\big]  \Big\}.
\end{align*}
Now, for upper and lower bounds for the denominator of \eqref{Lemma1_ratiO_wout_ORP}, we employ Lemma \ref{lem:BoundNumer} with $a=0$, 
\begin{equation}
\label{Eq:T10iosORP_risk_ratio_wout}
\mathcal{L}(0,t) \le \Pr\Big(\big\{T_1(0) \le t\big\} \cap \big\{\{T_1(1) \le 1\} \cup \{T_2(1) > 1\}\big\}\Big) \le \mathcal{U}(0,t),
\end{equation}
where
\begin{align*}
\mathcal{L}(0,t) &= \max\Big\{0, E_{\bX}\big[\Psi_{A=1,\bX}\big] - E_{\bX}\big[S_{1|A=0,\bX}(t)\big]\Big\}\\
\mathcal{U}(0,t) &= \min\Big\{E_{\bX}\big[\Psi_{A=1,\bX}\big], E_{\bX}\big[F_{1|A=0,\bX}(t)\big]  \Big\}.
\end{align*}

To finish the proof, we tie \eqref{Lemma1_ratiO_wout_ORP} together with \eqref{Eq:T11iosORP_risk_ratio_wout} and \eqref{Eq:T10iosORP_risk_ratio_wout}.
The upper FICE bound is obtained by $\dfrac{\mathcal{U}(1,t)}{\mathcal{L}(0,t)}$.
The lower FICE bound is obtained by  $\dfrac{\mathcal{L}(1,t)}{\mathcal{U}(0,t)}$.
\end{proof}

\subsection{Identification under the frailty assumptions}
\label{Subsec:IdentificationFrailty}
% \counterwithin{equation}{subsection}
% \counterwithin{figure}{subsection}
% \counterwithin{table}{subsection}
% \counterwithin{assumption}{subsection}

\subsubsection{Results  for the TV-FICE$(t,r)$ under the frailly assumptions (Proposition \ref{Prop:FICE_identification_frailty})}
\label{AppSubsec:Proof_FICE_frailty}

We now present the proof for the identification formula of the TV-FICE$(t,r)$ under the frailty assumptions. 
Here, we consider the time-varying population, determined by the $r$ value.
As noted in Section \ref{AppSubSec:Bounds}, the FICE$(t)$
%, which we focus on in this paper, 
can be viewed as a special case of the TV-FICE$(t,r)$, taking $r=1$.
In contrast to the case under the ORP-type assumptions, frailty assumptions require no modification, i.e., the frailty assumptions are sufficient for   identification of the TV-FICE$(t,r)$ as a function of $\rho$ in all the subpopulations.

\begin{proof}
\label{proof:FICE_identification_frailty}

By the law of total expectation and by the definition of $\pi_{ios_r}$, Equation \eqref{Eq:lemma1} of Lemma \ref{lem:NumerDenomFICE} can be written as
%\begin{sizeddisplay}{\footnotesize}
\footnotesize
\begin{align} 
\label{Eq:LIE_over_gamma}
\dfrac{E_{\bX,\bgamma}\Big[\Pr\Big(\big\{T_1(a) \le t\big\} \cap \big\{\big\{T_1(1-a) \le r\big\} \cup \big\{T_2(1-a) > r\big\}\big\} \Bigl\lvert \bX,\bgamma\Big)\Big]}{E_{\bX,\bgamma}\Big[\Pr\Big(\big\{\big\{T_1(0) \le r\big\} \cup \big\{T_2(0) > r\big\}\big\} \cap \big\{\big\{T_1(1) \le r\big\} \cup \big\{T_2(1) > r\big\}\big\} \Bigl\lvert \bX,\bgamma\Big)\Big]}.
\end{align}
\normalsize
%\end{sizeddisplay}
Now, Equation \eqref{Eq:LIE_over_gamma} can be replaced with
%\begin{sizeddisplay}{\footnotesize} %footnotesize small
\footnotesize
\begin{align} 
%\label{Eq:using_frailty}
& \dfrac{E_{\bX,\bgamma}\Big[\Pr\Big(\big\{T_1 \le t \big\}\Bigl\lvert A=a, \bX,\gamma_a\Big) \Pr\Big(\big\{\big\{T_1 \le r\big\} \cup \big\{T_2 > r\big\} \big\}\Bigl\lvert A=1-a, \bX,\gamma_{1-a}\Big)\Big]}{E_{\bX,\bgamma}\Big[\Pr\Big(\big\{\big\{T_1 \le r\big\} \cup \big\{T_2 > r\big\} \big\} \Bigl\lvert A=0, \bX,\gamma_0\Big) \Pr\Big(\big\{\big\{T_1 \le r\big\} \cup \big\{T_2 > r\big\} \big\} \Bigl\lvert A=1, \bX, \gamma_1 \Big)\Big]} \nonumber \\[0.4em]
= & \dfrac{\int_{\bx} \int_{\gamma_0} \int_{\gamma_1} \Pr\Big(\big\{T_1 \le t \big\}\Bigl\lvert A=a, \bX=\bx,\gamma_a\Big) \Pr\Big(\big\{\big\{T_1 \le r\big\} \cup \big\{T_2 > r\big\}\big\} \Bigl\lvert A=1-a, \bX=\bx,\gamma_{1-a}\Big) f(\bgamma, \bx) d\bgamma_1 d\bgamma_0 d\bx}{\int_{\bx} \bigg\{\int_{\gamma_0} \int_{\gamma_1} \Pr\Big(\big\{\big\{T_1 \le r\big\} \cup \big\{T_2 > r\big\}\big\} \Bigl\lvert A=0, \bX=\bx,\gamma_0\Big) \Pr\Big(\big\{\big\{T_1 \le r\big\} \cup \big\{T_2 > r\big\}\big\} \Bigl\lvert A=1, \bX=\bx, \gamma_1 \Big)}
\nonumber \\
& \cdot f(\bgamma, \bx) d\bgamma_1 d\bgamma_0 d\bx\bigg\}
\nonumber \\[0.4em]
= & \dfrac{\int_{\bx} f(\bx) \int_{\gamma_0} \int_{\gamma_1} \Pr\Big(\big\{T_1 \le t\big\} \Bigl\lvert A=a, \bX=\bx,\gamma_a\Big) \Pr\Big(\big\{\big\{T_1 \le r\big\} \cup \big\{T_2 > r\big\}\big\} \Bigl\lvert A=1-a, \bX=\bx,\gamma_{1-a}\Big) f_{\btheta}(\bgamma) d\bgamma_1 d\bgamma_0 d\bx}{\int_{\bx}\bigg\{ f(\bx)\int_{\gamma_0} \int_{\gamma_1} \Pr\Big(\big\{\big\{T_1 \le r\big\} \cup \big\{T_2 > r\big\}\big\} \Bigl\lvert A=0, \bX=\bx,\gamma_0\Big) \Pr\Big(\big\{\big\{T_1 \le r\big\} \cup \big\{T_2 > r\big\}\big\} \Bigl\lvert A=1, \bX=\bx, \gamma_1 \Big)} \nonumber \\ 
& \cdot f_{\btheta}(\bgamma) d\bgamma_1 d\bgamma_0 d\bx\bigg\} \nonumber \\
= & \dfrac{E_{\bX}\Big[E_{\bgamma}\Big[\Pr\Big(\big\{T_1 \le t \big\}\Bigl\lvert A=a, \bX,\gamma_a\big) \Pr\Big(\big\{\big\{T_1 \le r\big\} \cup \big\{T_2 > r\big\}\big\} \Bigl\lvert A=1-a, \bX,\gamma_{1-a}\Big)\Big]\Big]}{E_{\bX}\Big[E_{\bgamma}\Big[\Pr\Big(\big\{\big\{T_1 \le r\big\} \cup \big\{T_2 > r\big\}\big\} \Bigl\lvert A=0, \bX,\gamma_0\Big) \Pr\Big(\big\{\big\{T_1 \le r\big\} \cup \big\{T_2 > r\big\} \big\} \Bigl\lvert A=1, \bX, \gamma_1 \Big)\Big]\Big]}, \label{identification frailty probability under a value}
\end{align}
%\end{sizeddisplay}
\normalsize
where the first term is due to parts (i) and (ii) of the frailty assumptions, together with consistency, the third is by
part (iii) of the frailty assumptions, and where $f(\bgamma, \bx)$ is the joint density function of $\bgamma$ and $\bX$, $f(\bx)$ is the density function of $\bX$ and $f_{\btheta}(\bgamma)$ is the density function of $\bgamma$.
\end{proof}

\subsubsection{Results for the risk-ratio scale TV-FICE$(t,r)$ under the frailty assumptions (Proposition \ref{Prop:FICE_bounds_ratio_frailty})}

\begin{proposition}
\label{Prop:FICE_bounds_ratio_frailty}
Under the frailty assumptions, the risk-ratio scale TV-FICE$(t,r)$ is identified by
\begin{align} \dfrac{E_{\bX}\Big[E_{\bgamma}\Big[\Pr\big(\big\{T_1 \le t \big\}\Bigl\lvert A=1, \bX,\gamma_1\big) \Pr\big(\big\{\big\{T_1 \le r\big\} \cup \big\{T_2 > r\big\}\big\} \Bigl\lvert A=0, \bX,\gamma_{0}\big)\Big]\Big]}{E_{\bX}\Big[E_{\bgamma}\Big[\Pr\big(\big\{T_1 \le t \big\}\Bigl\lvert A=0, \bX,\gamma_0\big) \Pr\big(\big\{\big\{T_1 \le r\big\} \cup \big\{T_2 > r\big\}\big\} \Bigl\lvert A=1, \bX,\gamma_{1}\big)\Big]\Big]}. \nonumber
\end{align}
\end{proposition}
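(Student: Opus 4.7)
The plan is to recognize that Proposition \ref{Prop:FICE_bounds_ratio_frailty} follows almost immediately from Proposition \ref{Prop:FICE_identification_frailty}, because the $\pi_{ios_r}$ term that appears in the denominator of the additive identification formula cancels when we form a ratio. First, I would write
$$\text{TV-FICE}_{\text{RR}}(t,r) = \frac{\Pr(T_1(1) \le t \mid ios_r)}{\Pr(T_1(0) \le t \mid ios_r)}$$
and apply Lemma \ref{lem:NumerDenomFICE} with $a=1$ in the numerator and $a=0$ in the denominator. Both expressions share the same denominator $\pi_{ios_r}$, which cancels, yielding
$$\text{TV-FICE}_{\text{RR}}(t,r) = \frac{\Pr\!\big(\{T_1(1) \le t\} \cap \{\{T_1(0) \le r\} \cup \{T_2(0) > r\}\}\big)}{\Pr\!\big(\{T_1(0) \le t\} \cap \{\{T_1(1) \le r\} \cup \{T_2(1) > r\}\}\big)}.$$

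Next, I would apply to each of the two joint counterfactual probabilities exactly the chain of manipulations carried out in the proof of Proposition \ref{Prop:FICE_identification_frailty}. Specifically, for the numerator I would condition on $(\bX,\bgamma)$ using the law of total expectation, then invoke part (ii) of the frailty assumptions to factor the joint conditional distribution of the cross-world event times into a product of two terms involving only $\gamma_0$ and $\gamma_1$ respectively, then use part (i) together with consistency to replace each potential-outcome probability with the corresponding observed-data conditional probability given treatment, covariates, and the relevant frailty. Part (iii) ($\bgamma \indep \bX$) then allows the density $f(\bgamma,\bx)$ to be factored as $f(\bx) f_{\btheta}(\bgamma)$, so that the integrals over $\bgamma$ and $\bX$ become the nested expectations $E_{\bX}[E_{\bgamma}[\,\cdot\,]]$. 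The same argument, applied with the roles of $a=0$ and $a=1$ swapped, handles the denominator.

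The main step to be careful about is that Lemma \ref{lem:NumerDenomFICE} was stated for time-fixed $ios$ but with $r$ replacing $1$, which is exactly the generalization flagged in Appendix \ref{AppSec:timevaryingsubpop}; since the lemma's proof uses only the elementary identity $\{T_1(a)\le t\} \cap (\{T_1(a)\le r\} \cup \{T_2(a)>r\}) = \{T_1(a)\le t\}$ for $t \le r$, it goes through verbatim for the TV version. I expect no genuine obstacle here: because the $\pi_{ios_r}$ factor cancels in the ratio, one does not need to separately identify the stratum proportion, which was the most delicate part of the additive-scale argument. The entire proof is therefore essentially one application of Lemma \ref{lem:NumerDenomFICE} followed by reusing the intermediate display in the proof of Proposition \ref{Prop:FICE_identification_frailty} twice.
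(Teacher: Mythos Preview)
Your proposal is correct and matches the paper's own proof essentially exactly: the paper simply takes the ratio of the identified expression for $\Pr(T_1(a)\le t\mid ios_r)$ derived in the proof of Proposition~\ref{Prop:FICE_identification_frailty} with $a=1$ over that with $a=0$, observes that the common denominator $\pi_{ios_r}(\btheta,\rho)$ cancels, and is done. Your write-up spells out the same steps slightly more explicitly (first cancel $\pi_{ios_r}$ via Lemma~\ref{lem:NumerDenomFICE}, then reuse the frailty manipulations on each remaining joint probability), but the argument is identical in substance.
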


\begin{proof}
% Similarly to identification of the risk-ratio scale bounds under the ORP-type assumptions, 
The risk-ratio scale estimand is obtained by taking the ratio between the term in (\ref{identification frailty probability under a value}) with $a=1$ and the same term with $a=0$. The denominator cancels out, so we divide the numerators of both terms to obtain the risk-ratio scale TV-FICE$(t,r)$ bounds.
\end{proof}

\clearpage
\section{Estimation}
\label{Appsubsec:Estimation}
\subsection{Derivation of the likelihood function}
\label{Appsubsubsec:likelihood}

For completeness, we derive here the likelihood function for the illness-death models. In practice it is maximized by an EM algorithm.
First, recall the notations adopted for the parameters in Section \ref{Sec:estimation};
$\widetilde{\btheta} = (\theta_0, \theta_1),
\widetilde{\bbeta} = (\bbeta^0, \bbeta^1),   
\widetilde{\blambda}_0(t) = (\blambda^0(t), \blambda^1(t))$, and
$\widetilde{\bLambda}_0(t) =(\bLambda^0(t),\bLambda^1(t))$, where, for $a=0,1$,
$\bbeta^a = (\bbeta^a_{01}, \bbeta^a_{02}, \bbeta^a_{12}), \\ 
\blambda^a(t) = (\lambda^0_{01}(t|a), \lambda^0_{02}(t|a), \lambda^0_{12}(t|a))$, and
$\bLambda^a(t) = (\Lambda^0_{01}(t|a), \Lambda^0_{02}(t|a), \Lambda^0_{12}(t|a))$. 

Under the hazard models 
described in Equation \eqref{conditional hazard models DGM}, the likelihood function, integrated over the
frailty distribution, equals 
$L(\widetilde{\btheta}, \widetilde{\bLambda}_0, \widetilde{\blambda}_0, \widetilde{\bbeta}) = \Pi_{i=1}^{n}L_i$, where
\small
\begin{align}
L_i &= \int_{0}^{\infty} f(\widetilde{T}_{i1}, \widetilde{T}_{i2}, \delta_{i1}, \delta_{i2}, \gamma_{A_i} | \bX_i) d\bgamma_{A_i}  \nonumber \\[1em]
& = \int_{0}^{\infty} f(\widetilde{T}_{i1}, \widetilde{T}_{i2}, \delta_{i1}, \delta_{i2} | \bX_i, \gamma_{A_i} ) f(\gamma_{A_i} | \bX_i) d\gamma_{A_i} = \int_{0}^{\infty}f(\widetilde{T}_{i1}, \widetilde{T}_{i2}, \delta_{i1}, \delta_{i2}, | \bX_i, \gamma_{A_i} ) f(\gamma_{A_i}) d\gamma_{A_i} \nonumber \\[1em]
&= \int_{0}^{\infty}  \Big[ \lambda_{01}(\widetilde{T}_{i1}|A_i, \bX_i, \gamma_{A_i}) \Big]^{\delta_{i1}} 
\Big[ \lambda_{02}(\widetilde{T}_{i1}|A_i, \bX_i, \gamma_{A_i}) \Big]^{(1 - \delta_{i1}) \delta_{i2}} 
\Big[ \lambda_{12}(\widetilde{T}_{i2}|A_i, \bX_i, \gamma_{A_i})  \Big]^{\delta_{i1} \delta_{i2}} \nonumber \\[1em]
& \exp\bigg\{-\big[H_{01}(\widetilde{T}_{i1}|A_i, \bX_i, \gamma_{A_i}) + H_{02}(\widetilde{T}_{i1}|a_i, \bX_i, \gamma_{A_i}) + \delta_1\big(H_{12}(\widetilde{T}_{i2}|a_i, \bX_i, \gamma_{A_i}) - H_{12}(\widetilde{T}_{i1}|A_i, \bX_i, \gamma_{A_i})\big)\big] \bigg\}
\nonumber \\[1em]
& f(\gamma_{A_i}) d\gamma_{A_i} \nonumber \\[1em]
&= \int_{0}^{\infty} \Big[ \gamma_{A_i} \lambda^0_{01}(\widetilde{T}_{i1}|A_i) \exp(\bX_i^t\bbeta^{A_i}_{01}) \Big]^{\delta_{i1}} 
\Big[ \gamma_{A_i} \lambda^0_{02}(\widetilde{T}_{i2}|A_i) \exp(\bX_i^t\bbeta^{A_i}_{02}) \Big]^{(1 - \delta_{i1}) \delta_{i2}} 
\Big[ \gamma_{A_i} \lambda^0_{12}(\widetilde{T}_{i2}|A_i) \exp(\bX_i^t\bbeta^{A_i}_{12})  \Big]^{\delta_{i1} \delta_{i2}} \nonumber \\[1em]
& \exp\bigg\{-\gamma_{a_i}\bigg[\Big(H^0_{01}(\widetilde{T}_{i1}|A_i)\exp(\bX_i^t\bbeta^{A_i}_{01}) + H^0_{02}(\widetilde{T}_{i1}|A_i)\exp(\bX_i^t\bbeta^{A_i}_{02})\Big) + \nonumber \\[1em] 
& \Big(\big(H^0_{12}(\widetilde{T}_{i2}|A_i) - H^0_{12}(\widetilde{T}_{i1}|A_i)\big)  \exp(\bX_i^t\bbeta^{A_i}_{12}) \delta_1\Big) \bigg] \bigg\}  f(\gamma_{A_i}) d\gamma_{A_i} \nonumber \\[1em]
&= \lambda'_i \int_{0}^{\infty} \gamma_{A_i}^{\delta_i'} 
\exp(-\gamma_{A_i}k_i)
\frac{ \gamma_{A_i}^{1 / \theta_{a_i} - 1} \exp(-(1 / \theta_{A_i})\gamma_{A_i}) }{ \Gamma(1 / \theta_{A_i}) \theta_{A_i}^{1 / \theta_{A_i}} }  d\gamma_{A_i},
\label{plugging lambda' in the likelihhod}
\end{align}
\normalsize

where
\footnotesize
\begin{align}
\lambda'_i &= \exp\Big( \delta_{i1}X_i^t\bbeta^{A_i}_{01} +
(1 - \delta_{i1})\delta_{i2}X_i^t \bbeta^{A_i}_{02} + 
\delta_{i1} \delta_{i2}X_i^t \bbeta^{A_i}_{12}\Big)
\Big[ \lambda^0_{01}(\widetilde{T}_{i1}|A_i) \Big]^{\delta_{i1}} 
\Big[ \lambda^0_{02}(\widetilde{T}_{i2}|A_i) \Big]^{(1 - \delta_{i1}) \delta_{i2}} 
\Big[ \lambda^0_{12}(\widetilde{T}_{i2}|A_i)  \Big]^{\delta_{i1} \delta_{i2}}, \nonumber \\
\delta'_i &= \delta_{i1} + \delta_{i2}, \: \: \text{and}
\nonumber \\
k_i &= H^0_{01}(\widetilde{T}_{i1}|A_i)\exp(\bX_i^t\bbeta^{A_i}_{01}) + H^0_{02}(\widetilde{T}_{i1}|A_i)\exp(\bX_i^t\bbeta^{A_i}_{02}) + \big[H^0_{12}(\widetilde{T}_{i2}|A_i) - H^0_{12}(\widetilde{T}_{i1}|A_i)\big]  \exp(\bX_i^t\bbeta^{A_i}_{12})\delta_{i1}.
\nonumber
\end{align}
\normalsize

Define $\alpha_i' = \frac{1}{\theta} + \delta_i'$.
%where $\delta_i' = \delta_{1i} + (1-\delta_{1i})\delta_{2i} + \delta_{2i}$.
Now, the term in \eqref{plugging lambda' in the likelihhod} can be written as 
\small
\begin{align}
& \lambda'_i  \frac{\Gamma(\alpha_i') \theta_{A_i}^{\alpha_i'}}{\Gamma(1 / \theta_{A_i}) \theta_{A_i}^{1/\theta_{A_i}}} \int_{0}^{\infty} \frac{ \gamma_{A_i}^{\alpha_i' - 1} \exp(-(1 / \theta_{A_i})\gamma_{A_i}) }{ \Gamma(\alpha_i') \theta_{A_i}^{\alpha_i'} } 
\exp(-k_i\gamma_{A_i})
% \exp\Big(-\big[H'_{0}(\widetilde{T}_{i1}|a_i) + \delta_1 H'_{1}(\widetilde{T}_{i1},\widetilde{T}_{i2}|a_i)\big] \gamma_{A_i} \Big) 
d\gamma_{A_i} \nonumber \\[1em]
& = \lambda'_i \frac{\Gamma(\alpha_i')} {\Gamma(1 / \theta_{A_i})} \cdot \theta_{A_i}^{\delta_i'} E_{G_i}\big[
\exp(-k_i G_i)\big]\nonumber \\[1em]
% exp\Big(-\big[G \cdot H'_{0}(\widetilde{T}_{i1}|a_i) + \delta_1 H'_{1}(\widetilde{T}_{i1},\widetilde{T}_{i2}|a_i)\big] \Big)\bigg]\nonumber \\
& = \lambda'_i \frac{\Gamma(\alpha_i')} {\Gamma(1 / \theta_{A_i})} \cdot \theta_{A_i}^{\delta_i'} LP_{Gi}(k_i),
\label{adding Laplace transform notaion}
\end{align}
where for every $i$, $G_i$ is a Gamma random variable, with a shape parameter $\alpha_i'$ and a scale parameter $\theta_{A_i}$, and  $LP_{\omega}(v) = E_{\omega}\big[
\exp(-v \omega)\big]$ is the Laplace transform of a Gamma random variable $\omega$, at the point $v$. 
Now, plugging the Laplace transform of a Gamma random variable, \eqref{adding Laplace transform notaion} can be replaced by
\small
\begin{align}
& = \lambda'_i  \frac{\Gamma(\alpha_i')} {\Gamma(1 / \theta_{A_i})} \cdot \theta_{A_i}^{\delta_i'}\Big[\frac{1/\theta_{A_i}}{k_i + 1/\theta_{A_i}}\Big]^{\alpha_i'} \nonumber \\[1em]
& = \lambda'_i  \frac{\Gamma(\alpha_i')} {\Gamma(1 / \theta_{A_i})} \cdot 
\Big[\frac{\theta_{A_i}}{1 + \theta_{A_i}k_i}\Big]^{\delta_i'}
\Big[1 + \theta_{A_i}k_i \Big]^{-1/\theta_{A_i}}
\nonumber \\
& = \lambda'_i \: \frac{\Gamma(1 / \theta_{A_i} + \delta'_i)} {\Gamma(1 / \theta_{A_i})} \cdot \Big[\frac{\theta_{a_i}}{1 + \theta_{A_i}k_i}\Big]^{\delta_i'} 
\phi_{A_i}^{(0)}(k_i), \label{likelihood with LP before derivatives} 
\end{align}
where for $a=0,1, \: q=0,1,2$, $\phi_{a}^{(q)}(k)$ is the $q$-th derivative of 
$LP_{\gamma_{a}}(k)$,
%$E[\exp(-k \gamma_a)]$, 
with respect to $k$.
%i.e. the Laplace transform of $s$, relative to the distribution of the frailty variable $\gamma_a$.

\noindent Using the fact that $\Gamma(u + 1)= u\Gamma(u)$, and that $\delta'_i \in \{0,1,2\}$ is a non-negative integer,  we obtain
% \eqref{likelihood with LP before derivatives} can be written as
\small
\begin{align}
&  L_i = \lambda_i' \: (-1)^
{\delta_i'} \phi_{A_i}^{(\delta_i')}(k_i).
\nonumber
\end{align}
\normalsize
\noindent Therefore, 
% where all the quantities are written explicitly, 
the likelihood function, marginalized over the frailty distribution,
%as a function of the parameters,
equals to
% \propto
\begin{align*}
\begin{split}
%\label{Likelihood function app}
L(\widetilde{\btheta}, \widetilde{\bLambda}_0, \widetilde{\blambda}_0, \widetilde{\bbeta}) & = \Pi_{i=1}^{n} \Big\{\ \Big[ \lambda^0_{01}(\widetilde{T}_{i1}|A_i) \Big]^{\delta_{i1}} 
\Big[ \lambda^0_{02}(\widetilde{T}_{i1}|A_i) \Big]^{(1 - \delta_{i1}) \delta_{i2}} 
\Big[ \lambda^0_{12}(\widetilde{T}_{i1}|A_i)  \Big]^{\delta_{i1} \delta_{i2}} \\
& \exp\Big( \delta_{i1}X_i^t\bbeta^{A_i}_{01} +
(1 - \delta_{i1})\delta_{i2}X_i^t \bbeta^{A_i}_{02} + 
\delta_{i1} \delta_{i2}X_i^t \bbeta^{A_i}_{12}\Big) (-1)^
{\delta_{i1} + \delta_{i2}} \phi_{a_i}^{(\delta_{i1} + \delta_{i2})}(k_i) \Big\}.
\end{split}
\end{align*}

\subsection{Expectation maximization algorithm}
\label{Appsubsec:EM_algorithm}

We implemented the EM algorithm proposed by \cite{nevo2022causal} for estimation.
First, we set the initial values of $\widetilde{\bbeta}$ and $\widetilde{\bLambda}_0(t)$ according to the estimates obtained from standard Cox regression models, with $\gamma_{A_i}=1$ for each patient $i=1,...,n$. The three steps outlined below were then repeated until convergence.

\begin{enumerate}[label=(\alph*)]
    \item E-step:
    \begin{enumerate}[label=(\roman*)]
    \item 
    As shown by \cite{nevo2022causal}, for each patient $i=1,...,n$, the posterior expectation of $\gamma_{a_i}$ given the observed data can be written as
    \begin{align*}
    E[\gamma_{A_i}|D_i] = \dfrac{(-1)^{\delta_i^{'} + 1} \phi_{A_i}^{\delta_i^{'} + 1}(k_i)}{(-1)^{\delta_i^{'}} \phi_{A_i}^{\delta_i^{'}}(k_i)},
\end{align*}
where $D_i$ is the observed data for patient $i$.
For each patient $i$ we then estimated $\widehat{E}[\log\gamma_{A_i}|D_i]$ and $\widehat{E}[\gamma_{A_i}|D_i]$ using the current values of $\widetilde{\bbeta}$, $\widetilde{\bLambda}_0(t)$ (which are both required to obtain $k_i$) and $\widetilde{\btheta}$ (which is required for $\phi_{0}$ and $\phi_{1}$).
    \end{enumerate}
    \item  M-step:
    \begin{enumerate}[label=(\roman*)]
    \item We estimated $\widetilde{\bbeta}$ with six Cox regression models, with offset terms $\log\widehat{E}[\gamma_{A_i}|D_i]$.
    We estimated $\widetilde{\bLambda}_0(t)$ using Breslow estimators.
    \item
    To estimate $\widetilde{\btheta}$,
    we estimated $\theta_a$ (for $a=0,1$) by maximizing the conditional expectation of the estimated log-likelihood of $\gamma_a$ within treatment group $A=a$, given the observed data and the current parameter values.

    \end{enumerate}
\end{enumerate}

 The algorithm adopted to the specific case where the frailty bivariate random variable is Gamma distributed is given in \cite{nevo2022causal}.

\section{Application to the motivating example - additional information and figures}
\label{Appsec: Application}

\subsection{Bounds}
\label{Appsubsec:bounds}
Figure \ref{Fig:res_upper_bound_under_weak_ORP_ratio} below shows the upper bound for the FICE$(t)$ on the risk-ratio scale under weak-ORP. The risk-ratio scale estimand declined over time, from approximately $2.00$ two weeks after baseline to $1.27$ (CI95\%: $1.07$, $1.46$) after one year.
% Note that, as mentioned in Section \ref{Sec:Application}, the bounds under ios-ORP are analytically identical, and that the lower bounds under these assumptions were non-informative. The bounds without ORP assumptions were also non-informative.
 \begin{figure}[H]
\centering
\caption{\footnotesize{Estimated upper bounds for the FICE$(t)$ on the risk-ratio scale under weak-ORP and ios-ORP. The bounds under weak-ORP and under ios-ORP coincide analytically.
The lower bounds under weak-ORP and without ORP-type assumptions were zero, i.e, non-informative.
The upper bound without ORP-type assumptions was also non-informative.}}
\label{Fig:res_upper_bound_under_weak_ORP_ratio}
\centering
\includegraphics[scale=0.5] % 0.475
{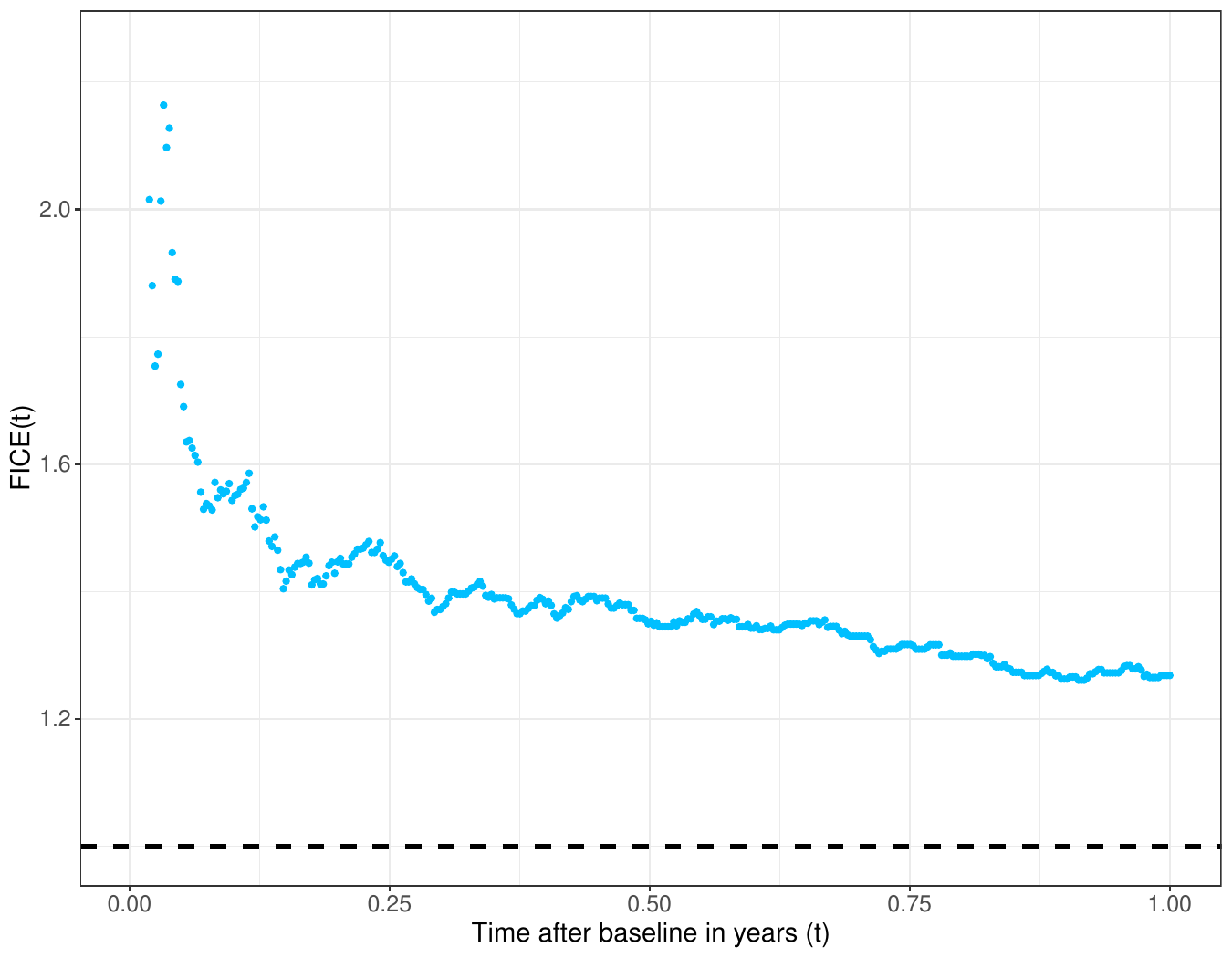}
\end{figure}

\subsection{Analysis under the frailty assumptions}
\label{Appsubsec:nalysis under the frailty assumptions}

\subsubsection{Regression coefficients}
\label{Appsubsec:coefficients}
Tables \ref{tab:cox_T2_A0_three_models} and \ref{tab:cox_T2_A1_three_models} below show the hazard ratio estimates obtained from the hazard models (Equation \eqref{conditional hazard models DGM}), with $A=0$ and $A=1$, respectively, employed for estimation of the causal effects under the frailty assumptions in Section \ref{Sec:Application} of the main text. 
%(i.e., with $\rho=0$).
Wald-type $95\%$ confidence intervals, with estimated SEs obtained by Bootstrap, are also reported. 
%For the Access group, 
When including the sample location variables in the hazard models, their estimated coefficients' SEs were extremely high in transition $12$, presumably due to complete (or quasi) separation.
Therefore, we did not include these variables in the hazard models for transition $12$.

\begin{table}[H]
\centering
\footnotesize	
\caption{Cox regression model estimates for transitions $jk = 01, 02, 12$, within the 
Access group ($A=0$). HR: hazard ratio; CI95\%: 95\% confidence intervals. Cefta: ceftazidime.}
\label{tab:cox_T2_A0_three_models}
\begin{tabular}{|l|cc|cc|cc|}
\hline
\multirow{2}{*}{} 
  & \multicolumn{2}{c|}{\textbf{jk = 01}} 
  & \multicolumn{2}{c|}{\textbf{jk = 02}} 
  & \multicolumn{2}{c|}{\textbf{jk = 12}} \\
\cline{2-7}
\textbf{Covariate} & \textbf{HR} & \textbf{CI} & \textbf{HR} & \textbf{CI} & \textbf{HR} & \textbf{CI} \\
\hline
\textbf{Demographics} & & & & & & \\
Age & 1.01 & [0.92, 1.11] & 1.12 & [0.87, 1.44] & 1.44 & [0.24, 8.52] \\
Age$^2$ & 1.001 & [ 0.999, 1.003 ] & 0.999 & [ 0.996, 1.003 ] & 0.99 & [ 0.97, 1.02 ]  \\
Age$^3$ & 1.00 & [ 1.00, 1.00 ] & 1.00 & [ 1.00, 1.00 ] & 1.00 & [ 1.00, 1.00 ] \\ 
Male$^\dagger$ & 0.86 & [0.64, 1.15] & 0.80 & [0.64, 1.00] & 0.93 & [0.44, 1.98] \\
\hline
\textbf{Sample location at baseline} & & & & & & \\
Other & 1.00 & & 1.00 & & 1.00 & \\
Not taken & 1.67 & [0.07, 41.56] & 0.63 & [0.38, 1.04] &  &  \\
Multiple sources & 2.66 & [0.10, 69.37] & 0.78 & [0.44, 1.38] &  &  \\
Urine & 1.83 & [0.07, 47.38] & 0.86 & [0.50, 1.47] &  &  \\
Wound & 1.24 & [0.02, 62.67] & 0.53 & [0.23, 1.24] &  &  \\
Blood & 1.95 & [0.01, 307.51] & 1.80 & [0.96, 3.36] &  &  \\
Sputum & 1.35 & [0.02, 75.45] & 0.79 & [0.38, 1.63] &  &  \\
\hline
\textbf{Arrived from} & & & & & & \\
Other & 1.00 & & 1.00 & & 1.00 & \\
Home & 0.65 & [0.41, 1.04] & 0.90 & [0.62, 1.31] & 0.54 & [0.15, 1.90] \\
Institution & 1.26 & [0.72, 2.19] & 1.59 & [1.06, 2.39] & 0.69 & [0.16, 3.09] \\
\hline
\textbf{Hospitalization unit} & & & & & & \\
Other & 1.00 & & 1.00 & & 1.00 & \\
Internal & 1.18 & [0.68, 2.06] & 1.87 & [1.26, 2.76] & 5.46 & [1.99, 14.98] \\
Surgical & 0.53 & [0.29, 0.94] & 0.33 & [0.22, 0.50] & 0.54 & [0.03, 9.93] \\
\hline
\textbf{Medical history} & & & & & & \\
Dementia & 1.18 & [0.77, 1.80] & 0.84 & [0.61, 1.15] & 1.11 & [0.39, 3.13] \\
CRF & 1.25 & [0.81, 1.91] & 0.74 & [0.55, 0.99] & 1.49 & [0.61, 3.64] \\
Immunosuppression & 1.08 & [0.69, 1.69] & 1.68 & [1.25, 2.26] & 1.38 & [0.45, 4.20] \\
Diabetes & 1.45 & [1.07, 1.96] & 1.10 & [0.88, 1.38] & 0.90 & [0.44, 1.85] \\
Catheter & 2.85 & [1.82, 4.46] & 2.61 & [1.94, 3.52] & 1.78 & [0.79, 3.98] \\
Previous antibiotic (any) & 1.23 & [0.89, 1.71] & 1.07 & [0.83, 1.38] & 2.07 & [0.85, 5.04] \\
Previous cefta culture (365 days)$^\dagger$ & 1.65 & [1.08, 2.53] & 2.01 & [1.42, 2.83] & 0.29 & [0.08, 1.10] \\
\hline
\textbf{Medical information} & & & & & & \\
Arrival to culture ($>$ 2 days)$^\dagger$ & 2.59 & [1.75, 3.82] & 0.69 & [0.52, 0.91] & 0.75 & [0.33, 1.70] \\
Arrival to treatment ($>$ 2 days)$^\dagger$ & 1.05 & [0.68, 1.62] & 1.15 & [0.82, 1.60] & 1.49 & [0.64, 3.46] \\
\hline
\end{tabular}
\begin{flushleft}
$^\dagger$ \footnotesize Male is an indicator variable denoting whether the patient is a male. Arrival to culture and Arrival to treatment are indicators for whether arrival preceded culture collection or antibiotic treatment initiation, respectively, by more than two days. Previous ceftazidime culture is an indicator variable denoting whether a ceftazidime culture was taken before baseline.
\end{flushleft}
\end{table}

\begin{table}[H]
\centering
\footnotesize	
\caption{Cox regression model estimates for transitions $jk = 01, 02, 12$, within the 
Watch group (A=1). HR: hazard ratio; CI95\%: 95\% confidence intervals. Cefta: ceftazidime.}
\label{tab:cox_T2_A1_three_models}
\begin{tabular}{|l|cc|cc|cc|}
\hline
\multirow{2}{*}{} 
  & \multicolumn{2}{c|}{\textbf{jk = 01}} 
  & \multicolumn{2}{c|}{\textbf{jk = 02}} 
  & \multicolumn{2}{c|}{\textbf{jk = 12}} \\
\cline{2-7}
\textbf{Covariate} & \textbf{HR} & \textbf{CI} & \textbf{HR} & \textbf{CI} & \textbf{HR} & \textbf{CI} \\
\hline
\textbf{Demographics} & & & & & & \\
Age & 1.05 & [0.98, 1.12] & 1.41 & [1.08, 1.84] & 1.10 & [0.66, 1.84] \\
Age$^2$ & 1.00 & [ 0.999, 1.001 ] & 0.996 & [ 0.992, 0.999 ] & 0.999 & [ 0.991 1.001 ] \\ 
Age$^3$ & 1.00 & [ 1.00, 1.00 ] & 1.00 & [ 1.00, 1.00 ] & 1.00 & [ 1.00, 1.00 ] \\ 
Male$^\dagger$ & 1.09 & [0.87, 1.37] & 0.82 & [0.68, 0.98] & 0.90 & [0.52, 1.56] \\
\hline
\textbf{Sample location at baseline} & & & & & & \\
Other & 1.00 & & 1.00 & & 1.00 & \\
Not taken & 1.18 & [0.55, 2.51] & 0.40 & [0.26, 0.62] &  &  \\
Multiple sources & 1.37 & [0.57, 3.28] & 0.48 & [0.30, 0.77] &  &  \\
Urine & 1.67 & [0.79, 3.53] & 0.34 & [0.21, 0.55] &  &  \\
Wound & 1.06 & [0.43, 2.63] & 0.54 & [0.29, 0.99] &  &  \\
Blood & 2.34 & [0.97, 5.62] & 0.67 & [0.36, 1.23] &  &  \\
Sputum & 1.04 & [0.30, 3.60] & 0.49 & [0.25, 0.95] &  &  \\
\hline
\textbf{Arrived from} & & & & & & \\
Other & 1.00 & & 1.00 & & 1.00 & \\
Home & 0.65 & [0.44, 0.96] & 0.88 & [0.62, 1.24] & 1.47 & [0.52, 4.17] \\
Institution & 1.44 & [0.88, 2.36] & 1.25 & [0.85, 1.85] & 3.59 & [1.11, 11.57] \\
\hline
\textbf{Hospitalization unit} & & & & & & \\
Other & 1.00 & & 1.00 & & 1.00 & \\
Internal & 1.00 & [0.68, 1.46] & 0.69 & [0.51, 0.94] & 1.12 & [0.46, 2.72] \\
Surgical & 1.37 & [0.91, 2.08] & 0.21 & [0.14, 0.31] & 0.38 & [0.14, 1.02] \\
\hline
\textbf{Medical history} & & & & & & \\
Dementia & 1.23 & [0.86, 1.74] & 0.80 & [0.59, 1.09] & 0.40 & [0.18, 0.89] \\
CRF & 1.08 & [0.77, 1.51] & 1.04 & [0.76, 1.41] & 2.02 & [1.03, 3.93] \\
Immunosuppression & 1.58 & [1.14, 2.19] & 1.49 & [1.13, 1.96] & 2.07 & [1.08, 3.99] \\
Diabetes & 1.19 & [0.92, 1.53] & 0.81 & [0.65, 1.01] & 1.47 & [0.93, 2.33] \\
Catheter & 1.95 & [1.42, 2.68] & 2.11 & [1.61, 2.77] & 1.03 & [0.47, 2.27] \\
Previous antibiotic (any) & 1.33 & [1.03, 1.72] & 1.27 & [1.02, 1.58] & 1.51 & [0.75, 3.04] \\
Previous cefta culture (365 days)$^\dagger$ & 1.04 & [0.73, 1.49] & 1.46 & [1.10, 1.94] & 1.30 & [0.60, 2.79] \\
\hline
\textbf{Medical information} & & & & & & \\
Arrival to culture ($>$ 2 days)$^\dagger$ & 3.23 & [2.44, 4.28] & 0.80 & [0.62, 1.03] & 1.91 & [1.04, 3.52] \\
Arrival to treatment ($>$ 2 days)$^\dagger$ & 0.83 & [0.62, 1.10] & 0.93 & [0.72, 1.19] & 0.93 & [0.46, 1.89] \\
\hline
\end{tabular}
\begin{flushleft}
$^\dagger$ \footnotesize Male is an indicator variable denoting whether the patient is a male. Arrival to culture and Arrival to treatment are indicators for whether arrival preceded culture collection or antibiotic treatment initiation, respectively, by more than two days. Previous ceftazidime culture is an indicator variable denoting whether a ceftazidime culture was taken before baseline.
\end{flushleft}
\end{table}

\subsubsection{Data analysis under different $\rho$ values}
\label{Appsubsec:Data analysis with different rho values}
Figure \ref{Fig:res_frailty_all_diff_and_ratio_rho_values_0.5_and_1} depicts the estimated causal effects under the frailty assumptions for $\rho$ values different than zero. The results did not differ substantially from those presented in Section \ref{Sec:Application} of the main text, although the FICE$(t)$  remained non-significantly different from zero for a more extended period.

\begin{figure}[H]
    \centering
\caption{\footnotesize{Comparison between estimates of the different estimands within one year after baseline under the frailty assumptions with $\rho=0.5$ (first figure) and $\rho=1$ (second figure). 
The upper panel of each figure shows the various estimates on the difference scale. The lower panel shows the various estimates on the risk-ratio scale.}}
\label{Fig:res_frailty_all_diff_and_ratio_rho_values_0.5_and_1}
\includegraphics[width=0.7\textwidth]{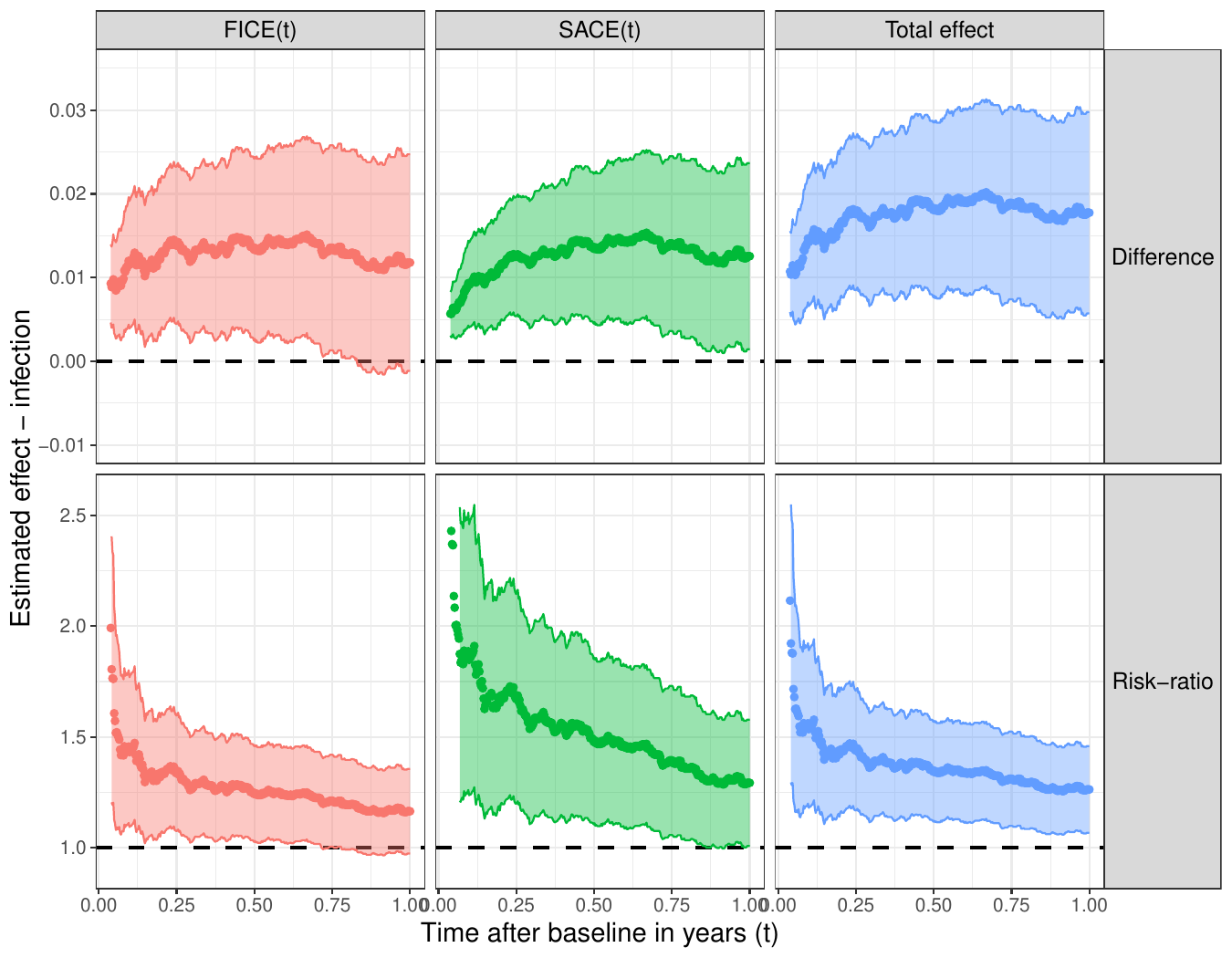}
    
\vspace{1em} % vertical space between images

\includegraphics[width=0.7\textwidth]{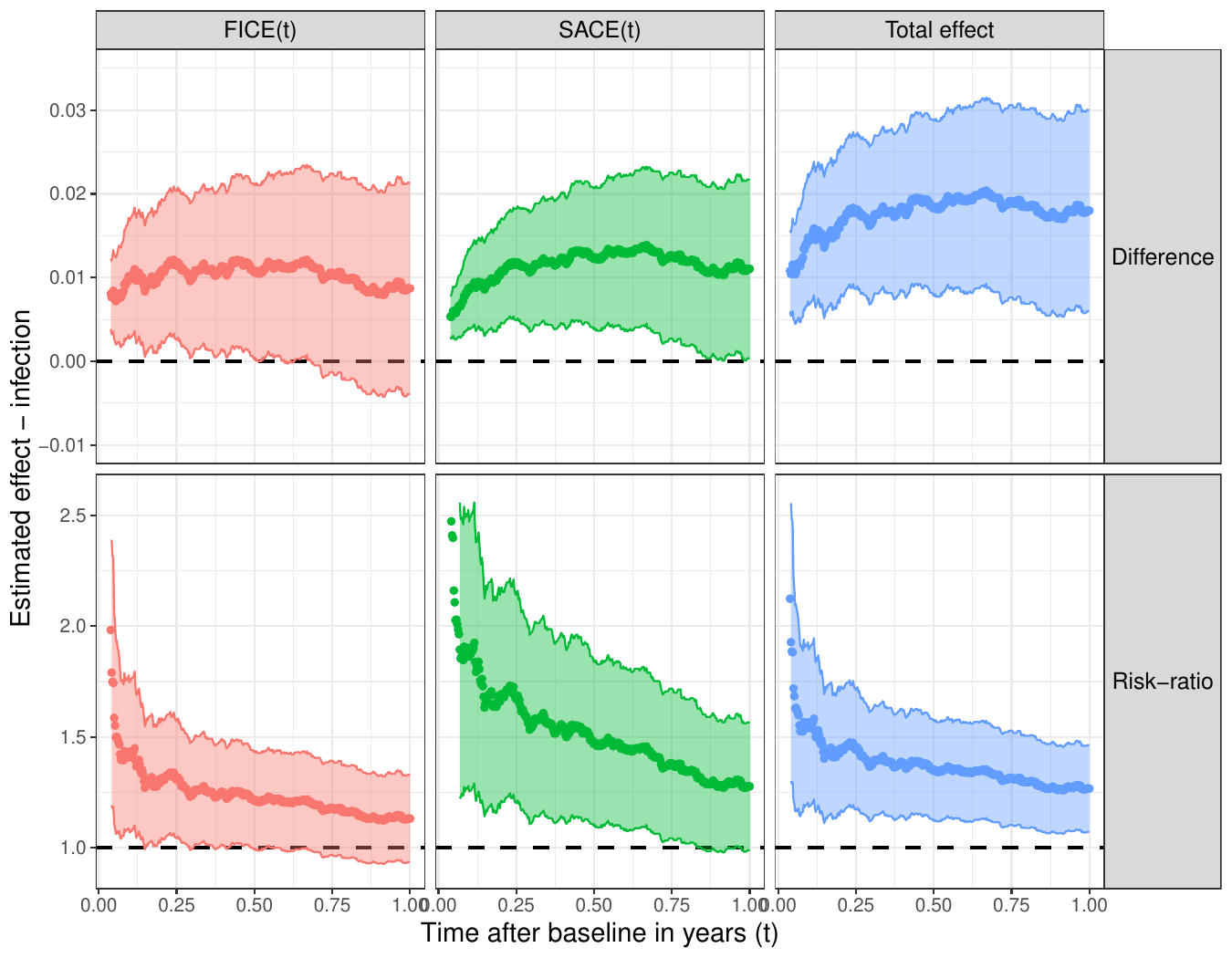}
    
    \label{fig:frailty_several_rho_values}
\end{figure}

\end{document}